\newlength{\widthA}
\newcommand{\beq}{\begin{equation}}
\newcommand{\eeq}{\end{equation}}
\newcommand{\beqa}{\begin{eqnarray}}
\newcommand{\eeqa}{\end{eqnarray}}
\newcommand{\beqan}{\begin{eqnarray*}}
\newcommand{\eeqan}{\end{eqnarray*}}
\def\ts{\textstyle}
\def\del{\partial}
\def\half{{\ts\frac{1}{2}}}
\def\Half{{\frac{1}{2}}}
\def\var{\mathop{\mathrm{var}}}
\def\R{{\mathbb{R}}}
\def\N{{\mathbb{N}}}
\newcommand{\Lonethird}{\ensuremath{\mathcal{L}^{1/3}}}
\def\ghat{\widehat{g}}
\def\Xhat{\widehat{X}}
\def\Yhat{\widehat{Y}}
\def\I{\mathcal{I}}
\def\length{\mathop{\mathrm{length}}}
\def\max   {\mathop{\mathrm{max}}}
\def\median{\mathop{\mathrm{median}}}
\def\gPL{g_{\rm PL}}
\def\lord{\lambda_{\rm ord}}
\def\lfr{\lambda_{\rm fr}}
\def\lvr{\lambda_{\rm vr}}
\def\Dord{D_{\rm ord}}
\def\Dfr{D_{\rm fr}}
\def\Dvr{D_{\rm vr}}
\def\Dsw{D_{\rm sw}}
\def\dhr{d^{\rm HR}}
\def\Kfr{K_{\rm fr}}
\def\Kvr{K_{\rm vr}}
\def\Kfrhr{K_{\rm fr}^{\rm HR}}
\def\Kvrhr{K_{\rm vr}^{\rm HR}}
\def\Khr{K^{\rm HR}}
\def\lB{{\boldsymbol{\lambda}}}
\def\wB{{\bf w}}
\def\KB{{\bf K}}
\def\RB{{\bf R}}
\def\Dfrhr{D_{\rm fr}^{\rm HR}}
\def\Dvrhr{D_{\rm vr}^{\rm HR}}
\def\Dswhr{D_{\rm sw}^{\rm HR}}
\newcommand{\Frac}[2]{{{#1}/{#2}}}  
\def\RS{\bar{R}}
\def\wfr{w_{\rm fr}}
\def\argmin{\mathop{\mathrm{arg \, min}}}
\renewcommand{\P}[1]{\textbf{P}\left({#1}\right)} 
\newcommand{\E}[1]{\textbf{E}\left[{#1}\right]}   
\newcommand{\iE}[1]{\textbf{E}[{#1}]}  
\newcommand{\eqlabel}[1]{ \stackrel{(#1)}{=} }
\newcommand{\geqlabel}[1]{ \stackrel{(#1)}{\geq} }
\newcommand{\leqlabel}[1]{ \stackrel{(#1)}{\leq} }
\newtheorem{theorem}{Theorem}
\newtheorem{lemma}[theorem]{Lemma}
\newtheorem{corollary}[theorem]{Corollary}
\newtheorem{definition}{Definition}
\newtheorem{example}{Example}
\newcommand{\qed}{\nobreak \ifvmode \relax \else
      \ifdim\lastskip<1.5em \hskip-\lastskip
      \hskip1.5em plus0em minus0.5em \fi \nobreak
      \vrule height0.75em width0.5em depth0.25em\fi}
\title{Distributed Scalar Quantization for Computing:
  High-Resolution Analysis and Extensions}
\author{Vinith~Misra,
        Vivek~K~Goyal,~\IEEEmembership{Senior~Member,~IEEE}, and
        Lav~R.~Varshney,~\IEEEmembership{Member,~IEEE}%
\thanks{This material is based upon work supported by the National Science Foundation
under Grant No. 0729069.}%
\thanks{The material in this paper was presented in part at
        the Information Theory and its Applications Workshop,
        La Jolla, California, January/February 2008; and
        the IEEE Data Compression Conference, Snowbird, Utah, March 2008.}%
\thanks{V. Misra (email: vinith@stanford.edu) was with the Massachusetts
        Institute of Technology when this work was completed and is now
        with the Department of Electrical Engineering, Stanford University,
        Stanford, CA 94305 USA\@.  V.~K. Goyal (email: vgoyal@mit.edu),
        and L.~R. Varshney (email: lrv@mit.edu) are with
        the Department of Electrical Engineering and Computer Science and
        the Research Laboratory of Electronics,
        Massachusetts Institute of Technology,
        Cambridge, MA 02139 USA\@.
        L.~R. Varshney is also with the Laboratory for Information and Decision
        Systems.}}
\begin{document}
\maketitle

\begin{abstract}

Communication of quantized information is frequently followed by a computation.
We consider situations of \emph{distributed functional scalar quantization}: 
distributed scalar quantization of (possibly correlated) sources followed by
centralized computation of a function.
Under smoothness conditions on the sources and function,
companding scalar quantizer designs are developed to
minimize mean-squared error (MSE) of the computed function as the quantizer resolution
is allowed to grow.  Striking improvements over
quantizers designed without consideration of the function are possible and
are larger in the entropy-constrained setting than in the fixed-rate setting.
As extensions to the basic analysis,
we characterize a large class of functions for which regular quantization
suffices, consider certain functions for which asymptotic optimality is
achieved without arbitrarily fine quantization, and allow limited
collaboration between source encoders.
In the entropy-constrained setting, a single bit per sample communicated between
encoders can have an arbitrarily-large effect on functional distortion.
In contrast, such communication has very little effect in the
fixed-rate setting.

\end{abstract}

\begin{IEEEkeywords}
Asymptotic quantization theory, distributed source coding,
optimal point density function,
rate-distortion theory
\end{IEEEkeywords}
\section{Introduction}
\IEEEPARstart{C}{onsider} a collection of $n$ spatially-separated sensors,
each measuring a scalar $X_j$, $j=1,\,2,\,\ldots,\,n$.
As shown in Fig.~\ref{fig:dfsc-block}, the measurements are encoded
and communicated over rate-limited links to a sink node without any
interaction between the sensors.
The sink node computes an estimate of
the function $g(X_1^n) = g(X_1,\,X_2,\,\ldots,\,X_n)$ from the received data.
This may be interpreted as a special case of the \emph{distributed source coding}
problem in which distortion is measured as the mean-squared error of the
 function estimate.  We refer to this special case as
 \emph{distributed functional source coding} to emphasize that it is
 the function $g(X_1^n)$ and not the source vector $X_1^n$ that is being reconstructed.
Similarly, we will refer to approximate representation of $X_1^n$ 
under mean-squared error distortion as \emph{ordinary} source coding.
Restricting to scalar quantization,
this \emph{distributed functional scalar quantization} (DFSQ) problem
is the central subject of this paper.
Compared to ordinary source coding, DFSQ can provide performance 
improvements in addition to any that
are rooted in statistical dependence of the $X_j$s; for clarity,
most examples presented here are for cases with independent $X_j$s.

%
\subsection{Summary of Main Contributions}
The primary aim of this paper is to develop a high resolution approach
to the analysis of DFSQ.
To this end, we consider for each source variable $X_j$ a sequence of companding quantizers
$\{Q^{j}_K\}$ of increasing resolution $K$.
Under fairly loose smoothness requirements on the function $g(x_1^n)$
and the source probability density function (pdf) $f(x_1^n)$, high-resolution analysis yields a choice for $\{(Q^{1}_K,\ldots,Q^{n}_K)\}_{K=1}^\infty$ that
outperforms any other choice of companding quantizer sequences at sufficiently high resolution.
This analysis also
gives an approximation for the resulting distortion-rate function that has relative error which vanishes as $K \rightarrow \infty$.

There are situations in which designing quantizers
to minimize the MSE of the function estimate is no different than designing them for low MSEs
$\iE{(X_j - \Xhat_j)^2}$, $j=1,\,2,\,\ldots,\,n$.
Our analysis will show, for example, that there is little advantage from
accounting for $g$ when $g$ is linear.
However, there are also cases in which the improvement is very large
for large values of $n$;
examples in Section~\ref{sec:Scaling} feature distortion improvement
over ordinary source coding by a factor that is
polynomial in $n$ in the fixed-rate case and exponential in $n$
in the variable-rate case.

In addition to developing a basic theory in which there are no interactions
between quantizers and certain limitations on $g$ simplify our analysis,
we consider several extensions.
First, we permit nonregular quantizers
and demonstrate that if the function $g(x_1^n)$ satisfies a loose
\emph{equivalence-free} condition
then optimal quantizers are regular at sufficiently high rate.
Next, we explore a situation in which the high-resolution analysis
breaks down because there is an interval where the marginal density $f_{X_j}$
is positive but
the optimal companding quantizer sequence for $X_j$ is not arbitrarily fine.
This prompts the concept
of a \emph{don't care interval}, a mixture of low- and high-resolution,
and connections with~\cite{DoshiSM2007}.
Finally, we allow rate-constrained information
communicated
from encoder 2 to encoder 1 to affect the encoding of $X_1$.
We call this \emph{chatting} and bound its effect on the distortion $D$.
In the fixed-rate setting, the reduction in distortion
can be no more than if $R_1$ were increased by the same rate; 
in the variable-rate setting, the reduction in distortion can be
arbitrarily large.

\begin{figure}
 \centering
  \psfrag{X1}[r][][1][0]{\small $X_1$}
  \psfrag{X2}[r][][1][0]{\small $X_2$}
  \psfrag{X3}[r][][1][0]{\small $X_n$}
  \psfrag{Bullets}{\vdots}
  \psfrag{Y1}[][][1][0]{\raisebox{7mm}
                  {\small \begin{tabular}{c} $\Xhat_1$ \\ rate $R_1$ \end{tabular}}}
  \psfrag{Y2}[][][1][0]{\raisebox{7mm}
                  {\small \begin{tabular}{c} $\Xhat_2$ \\ rate $R_2$ \end{tabular}}}
  \psfrag{Y3}[][][1][0]{\raisebox{7mm}
                  {\small \begin{tabular}{c} $\Xhat_n$ \\ rate $R_n$ \end{tabular}}}  
  \psfrag{G}[l][][1][0]{\small $\ghat(\Xhat_1^n)$}
  \includegraphics[width=2.25in]{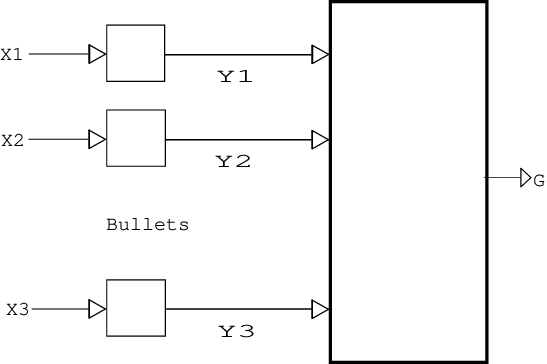}
  \caption{Distributed functional source coding.}
 \label{fig:dfsc-block}
\end{figure}

%
%

For ordinary quantization problems,
high-resolution analysis is not interesting for a discrete source because
the distortion reaches zero at some finite resolution.
Indeed, as in most works using high-resolution analysis,
we assume that the source random variables are jointly continuous,
i.e., that a joint probability density function for $X_1^n$ exists.
Similarly, high-resolution analysis of DFSQ may be uninteresting when
$g(X_1^n)$ is discrete because zero functional distortion may be achieved at
some finite resolution.
We do not explicitly require $g(X_1^n)$ to be a continuous random variable,
but the continuity of $g$ that we do require eliminates many situations
in which zero functional distortion may be achieved at some finite resolution.

\subsection{Related Work}
DFSQ has strong connections to several problems that have been studied in 
prior work on quantization and distributed source coding.  We provide a brief summary
of some of these connections here.  This paper is restricted to high-resolution analysis of companding 
scalar quantizers for real-valued sources.  Contrarily, some related works deal with 
lossless source coding or lossy vector quantization, often in the (Shannon-theoretic) limit of large 
block length, at any rate.

Consider the situation depicted in Fig.~\ref{fig:dfsc-block} with $n=2$.
In general, $X_1$ and $X_2$ are memoryless, stationary random processes
and $g$ is a function of the two.  Several topics
arise by considering special cases of this formulation.

When $g$ is the identity function, the goal is to reconstruct the 
	source variables themselves; often the correlation between $X_1$ and $X_2$ is of primary
      interest.  Slepian and Wolf solve this problem in the infinite blocklength regime for lossless 
	representation of sources drawn from a discrete alphabet~\cite{SlepianW1973}.  The lossy problem
      for sources from a discrete alphabet, restricted to scalar quantization followed by block entropy coding,
	is considered in~\cite{Servetto2005}.  

In the setting with lossy representation of continuous sources, one might consider applying Slepian--Wolf 
	coding to the output of local quantizers for each of the sources.  This approach, with vector quantization 
	performed on blocks of each of the sources, is optimal at all rates for jointly Gaussian sources and MSE 
	distortion~\cite{WagnerTV2008}.  This approach is also optimal in the asymptotic regime of both 
	large block length and high resolution \cite{ZamirB1999}.  The general lossy multiterminal source coding
	problem for large block length but finite rates, whether for discrete or continuous alphabet sources, is open.
    
While this paper restricts to scalar quantization of the sources, the
      use of Slepian--Wolf coding on the output of these quantizers is considered (Sec.~\ref{sec:slepian-wolf}). 
      Note that since the identity function has a vector output, our DFSQ formulation
      technically does not permit this choice of $g$, but that only minor modification of the proofs are required 
	to permit vector-valued functions.

If $g(X_1,X_2) = X_1$ and $R_2$ is unconstrained, then $X_2$ can
      be viewed as receiver side information available at the decoder.
      The trade-off between $R_1$ and distortion (of $X_1$ alone) in the large block length regime is given
      by the Wyner-Ziv rate-distortion function~\cite{WynerZ1976,Zamir1996}.
      Rebollo-Monedero \emph{et al.} examined this scenario at
      high resolution but any block length, and showed that providing receiver
      side information to the encoder yields no improvement in
      performance~\cite{Rebollo-MonederoRAG2006}, cf.~\cite{LiuCLX2006}.
	Under suitable constraints on the distortion metric,
      one may also view $X_2$ as receiver side information that
      determines the distortion measure on $X_1$,
      drawing a connection to~\cite{MartinianWZ2004} and to work on non-MSE distortion functions \cite{LinderZZ2000}.
   
For general $g$ and unconstrained $R_2$, the lossy problem has been
      studied by Yamamoto~\cite{Yamamoto1982} and later by Feng \emph{et al.}~\cite{FengES2004},
      who provide an assortment of rate-loss bounds on performance in the large block length setting.
The lossless setting has been explored
       by Orlitsky and Roche~\cite{OrlitskyR2001}.  

In the large block length regime for lossless coding, Han and Kobayashi~\cite{HanK1987} studied
      the classification of functions according to whether the rate region
      is the same as that for the identity function
      (i.e., the same as the Slepian--Wolf rate region).
      Their results are conclusive when $n=2$ and the source alphabets
      are finite.  This distributed version of the problem for general $g$,
      minimizing the sum-rate $R_1+R_2$,
      was later investigated by Doshi \emph{et al.}~\cite{DoshiSMJ2007}.

Let $Y = g(X_1,X_2)$. Then $Y$ may be interpreted as a
      \emph{remote source} that is observed only through
      $X_1$ and $X_2$, leading to a remote source multiterminal source 
      coding problem~\cite{YamamotoI1980}.  Alternatively, $\{Y=X_0,X_1,X_2\}$,
	can be thought of as a source triple and the problem in Fig.~\ref{fig:dfsc-block} 
	as a two-help-one problem with $R_0 = 0$ \cite{KornerM1979}.

Most of the above examples involve block coding of $X_1$ and $X_2$,
and results are obtained by allowing the block length to grow arbitrarily large.
While the variable-length DFSQ analysis does utilize block entropy coding
and Slepian--Wolf coding, $X_1$ and $X_2$ must first pass through
scalar quantizers.  Even though the samples of $X_1$ and $X_2$ are i.i.d.,
there would still be geometric benefits to using vector quantization over blocks of 
samples; this is left to future work.

Quantization with a functional motive bears strong resemblance to the
idea of ``task-oriented quantization.''  There has been considerable work in
this direction for detection, classification, and estimation, including high-rate
analysis \cite{Poor1988,BenitzB1988,GuptaH2003}.
The use of a function at the decoder can be seen as
inducing a non-MSE distortion measure on the source data.
In this sense, a thread may be drawn to perceptual
source coding~\cite{LiCG1999}, where a non-MSE distortion
reflects human sensitivity to audio or video.

Under appropriate constraints on the function $g$, one may
consider it as having introduced a \emph{locally quadratic} distortion
measure on the source $X_1^n$.  In~\cite{LinderZZ1999}, Linder \emph{et al.}\
consider quantization via companding functions for
locally quadratic distortion measures. 
We say more about connections to this work in
Section~\ref{sec:locally-quadratic}.

Interesting related problems have also arisen without
a requirement of distributed coding.
Rather than having a single function $g$, one may consider a set
of functions $\{g_a\}_{a \in \mathcal{A}}$ and define
\[
D_g = \E{d(g_\alpha(X_1^n),g_\alpha(\Xhat_1^n))}\mbox{,}
\]
where $\alpha$ is a random variable taking values in index set $\mathcal{A}$.
One may consider this a special case of the Wyner-Ziv problem with
$\alpha$ as decoder side information and a functional distortion measure.
In such a setting,
fixed- and variable-rate quantization to minimize MSE was
studied by Bucklew in the high-rate regime~\cite{Bucklew1984}.  Note that if the function
were known deterministically to the encoder, one could do no better
than to simply compute the function and encode the result.

\subsection{Structure of Paper}
We start in Section~\ref{sec:Background} by reviewing the high-resolution approximation
techniques used in our analysis.
In Section~\ref{sec:Single} we obtain optimal fixed- and variable-rate functional quantizers for the $n = 1$ case;
while not important in practice, this case illustrates the role of monotonicity
and smoothness of $g$.
Generalizations to arbitrary $n$, under similar restrictions on $g(\cdot)$,
are given in Section~\ref{sec:Multi}.
Some notable examples in Section~\ref{sec:Scaling}
are those that show dramatic scaling of distortion with respect to $n$.
Some arguments in Sections~\ref{sec:Background} and~\ref{sec:Single}
are meant only to build intuition; the technical results of those
sections are rigorously justified as special cases of statements
in Section~\ref{sec:Multi}.

The second half of the paper extends the basic theory of Section~\ref{sec:Multi}.
Section~\ref{sec:NonMono} addresses the use of non-regular companding quantizers and shows
that a weak \emph{equivalence-free} condition guarantees regularity
of the optimal companding quantizer sequence.
In the process we develop the notion of high-resolution non-regular quantization.
In Section~\ref{sec:DontCare}, we consider certain conditions that cause
the high-resolution approach to lead to an optimal quantizer for $X_j$
that does not have high resolution over the entire support of $f_{X_j}$.
A modified analysis and design procedure yields a ``rate amplification''
in the variable-rate case.
Limited communication between encoders, or chatting, is studied in Section~\ref{sec:Chatting},
and concluding comments appear in Section~\ref{sec:Conclusion}.

\section{Univariate Ordinary Quantization}
\label{sec:Background}
To introduce both notation and techniques,
the high-resolution analysis of scalar quantizers under
MSE distortion is reviewed in this section.

\subsection{Definitions}
\label{sec:Definitions}
A $K$-level quantizer on $[0,1]$ is a function $Q_K: [0,1] \rightarrow [0,1]$
with a range consisting of $K$ points.
The expected distortion of $Q_K$ applied to random variable $X$ taking values
in $[0,1]$ is given by $D(Q_K) = \E{d(X,Q(X))}$,
where $d:[0,1]\times[0,1]\rightarrow [0,\infty)$ is an appropriately chosen
distortion function.  Squared-error distortion $d(x,y) = (x-y)^2$ is both a frequent
and analytically-tractable choice.
In \emph{fixed-rate} (or codebook-constrained) quantization,
the rate is defined as the logarithm of the number of levels, $R = \log K$,
where all logarithms have base 2\@.
In \emph{variable-rate} (or entropy-constrained) quantization,
the rate is defined as the entropy of the quantizer output, $R = H(Q_K(X))$.
An optimal fixed-rate or variable-rate quantizer minimizes distortion subject
to a constraint on the applicable rate.

A value in the range of $Q_K$ is called a \emph{quantizer point}
or \emph{reconstruction point},
and the inverse image under $Q_K$ of a quantizer point is called a \emph{cell}
or \emph{partition region}.
If each cell is an interval and the associated reconstruction point lies within
the interval, the quantizer is called \emph{regular}.
For a distortion function that increases with the difference of its arguments
(e.g. squared-error distortion), the
optimal fixed-rate quantizer is regular.  If the distortion function is also
convex in the difference of its arguments and the source distribution is non-atomic, the optimal
variable-rate quantizer is regular as well~\cite[Sect.~6.2]{GershoG1992}~\cite{GyorgyL2002}.

A \emph{compander} function $w: [0,1]\rightarrow[0,1]$ is
continuous, increasing, differentiable almost everywhere, and invertible on $[0,1]$.  
Furthermore, $w(0) = 0$ and $w(1) = 1$.
The $K$-level uniform quantizer on $[0,1]$ is defined as
\[
Q_{K}^U(x) = \left\{ \begin{array}{rl} \frac{2i-1}{2K}, & \mbox{for $x \in \left( \frac{i-1}{K},\frac{i}{K} \right]$, $i=1,\,2,\,\ldots,\,K$}; \\[0.2ex]
\frac{1}{2K}, & \mbox{for $x = 0$}.
\end{array} \right.
\]
For squared-error distortion and more generally,
optimal quantizers satisfy a stronger condition than regularity:
\[
  x < y
\qquad
\mbox{implies}
\qquad
  Q_K(x) \leq Q_K(y).
\]
They can thus be realized in companding form:
\[
Q_{K}(x) = w^{-1}(Q_{K}^U(w(x))) \mbox{.}
\] 

A quantizer that has a companding form 
may equivalently be defined by its \emph{point density function}
$\lambda(x)$:
\[
\lambda(x) = w'(x) \mbox{,}
\]
which always satisfies $\int_0^1 \lambda(x) \, dx = w(1) - w(0) = 1 - 0 = 1$
by the fundamental theorem of calculus.
For small $\delta$ and large resolution $K$, one may observe that $\delta \, \lambda(x)$ approximates
the fraction of quantizer points in an interval of length $\delta$ around $x$.
Because of this intuitive relationship to quantizer structure, we will use
the point density description instead of the compander description
whenever possible, with $Q^{\lambda}_K(x)$ denoting a quantizer of
resolution $K$ and point density function $\lambda$.
A \emph{companding quantizer sequence} $\{Q_K^{\lambda}\}_{K=1}^{\infty}$ refers to a sequence of quantizers
generated with the same point density $\lambda$ and indexed by resolution $K$.
Our interest will be in optimizing these quantizer sequences.

The \emph{distortion-resolution function} $d(K;\lambda)$ for a companding quantizer sequence
$\{Q_K^{\lambda}\}$ indexes the distortion of the sequence by the resolution $K$:
\[
d(K;\lambda) = \E{\left| X - Q_{\lambda}^K(X) \right|^2 } \mbox{.}
\]
The fixed-rate \emph{resolution-rate} function $\Kfr(R;\lambda) = \lfloor 2^R \rfloor$ is
the largest resolution that satisfies a fixed-rate constraint.  Similarly, the variable-rate resolution-rate
function $\Kvr(R;\lambda)$ is the largest resolution that satisfies a variable-rate constraint.
Specifically, $\Kvr(R;\lambda)$ is the largest resolution such that the entropy of the quantized
output $H(Q_{\Kvr}^{\lambda}(X))$ is less than the rate constraint $R$:
\[
K(\lambda;R) =\max_{H(Q_K^{\lambda}(X)) \leq R} K \mbox{.}
\]

The quality of a quantizer sequence $\{Q_K^{\lambda}\}$ is measured by its
distortion-rate function.  The fixed-rate distortion-rate function measures
the distortion of the highest-resolution element of the sequence that satisfies the
fixed-rate constraint: $\Dfr(R;\lambda) = d(\Kfr(R;\lambda),\lambda)$.  Similarly,
the variable-rate distortion-rate function measures the distortion of the highest-resolution
element of the sequence that satisfies the variable-rate constraint: 
$\Dvr(R;\lambda)= d(\Kvr(R;\lambda),\lambda)$.

Under a fixed-rate constraint, we say that a companding quantizer sequence $\{Q_K^{\lambda^*}\}$ 
is \emph{asymptotically better} than another $\{Q_K^{\lambda}\}$ if 
\[ 
\limsup_{R\rightarrow \infty} \frac{\Dfr(R;\lambda^*)}{\Dfr(R;\lambda)} \leq 1 \mbox{.}
\]
Essentially, we compare the best rate-$R$ quantizers from each sequence.
If $\{Q_K^{\lambda^*}\}$ is asymptotically better than all other
quantizer sequences, we say $\{Q_K^{\lambda^*}\}$ and $\lambda^*$ are asymptotically \emph{fixed-rate optimal}.

Analogously, an asymptotically \emph{variable-rate optimal} quantizer sequence $\{Q_K^{\lambda^*}\}$
is asymptotically better than any other $\{Q_K^{\lambda}\}$:
\[
\limsup_{R\rightarrow \infty} \frac{\Dvr(R;\lambda^*)}{\Dvr(R;\lambda)} \leq 1 \mbox{.}
\]

Note that while we only consider optimality among the set of regular companding quantizer sequences, 
Linder \cite{Linder1991} provided conditions for a source 
probability distribution function under which a companding quantizer sequence
can be optimal in a more general sense.

\subsection{Problem Statement}
\label{sec:BackgroundProblemStatement}
A sequence of quantizers is to be applied
to a source $X$ with pdf $f_X$ supported on the interval
$[0,1]$.  The distortion of the quantizers is measured by squared error.
For any fixed- or variable-rate constraint, the optimal quantizer can be
realized in companding form, so
we seek an asymptotically optimal companding function.


For high-resolution techniques to be valid, both the companding function
and the source pdf must satisfy certain smoothness requirements.
We assume the source satisfies conditions UO1 and UO2, and we optimize only among
 companding functions that satisfy UO3 and UO4:

\begin{description}
\item[UO1.] The source pdf $f$ is bounded and supported on the interval $[0,1]$.
\item[UO2.] The first derivative of the source pdf $f'$ is defined and bounded on all
but a finite number of points in $[0,1]$.
\item[UO3.] We optimize among companding functions that are
differentiable.
\item[UO4.] The integral $\int_0^1 f(x)w'(x)^{-2} \, dx$ is finite.
\end{description}


\subsection{Solution via High-Resolution Analysis}
\label{sec:HighResAnalysis}
The quantities of fundamental interest in the analysis of companding quantizer
sequences are the fixed- and variable-rate distortion-rate functions $\Dfr(R;\lambda)$ and
$\Dvr(R;\lambda)$, which describe the distortion of fixed- and variable-rate companding 
quantizers with rate $R$ and point density $\lambda$.
High resolution analysis consists of several approximations
that allow one to derive asymptotically accurate versions of both
$\Dfrhr(R;\lambda)$ and $\Dvrhr(R;\lambda)$.  Specifically, under appropriate
restrictions on the source pdf we will show that
\beq
\lim_{R\rightarrow \infty} \frac{\Dfrhr(R;\lambda)}{\Dfr(R;\lambda)} =
\lim_{R\rightarrow \infty} \frac{\Dvrhr(R;\lambda)}{\Dvr(R;\lambda)} = 
1 
\mbox{.} \label{eq:asymptoticaccuracy}
\eeq

In Sec. \ref{sec:BackgroundHighResolutionDistortion}, the
approximate distortion-resolution function $\dhr(K; \lambda)$ is derived.
Then, in Sec. \ref{sec:BackgroundHighResolutionRate}, the
approximate resolution-rate function $\Khr(R; \lambda)$ is obtained
for both fixed- and variable-rate constraints.  Finally, in 
Sec. \ref{sec:BackgroundHighDistortionRate} these two quantities yield
the approximate distortion-rate functions $\Dfrhr(R;\lambda)$ and $\Dvrhr(R;\lambda)$.
The derivation we provide is left informal and
is not intended to prove that assumptions UO1--UO4 yield \eqref{eq:asymptoticaccuracy};
this follows either from Linder \cite{Linder1991} or as a special 
case of Theorem \ref{thm:single-distortion} in
Sec. \ref{sec:Single}.   For further technical details and references
to original sources, see~\cite{GrayN1998}.  Finally, in Sec. \ref{sec:review-optimal}, the approximate distortion-rate
functions are optimized through choice of point density (companding function).
The sequences of companding quantizers yielded by this optimization are shown to be
asymptotically fixed- or variable-rate optimal.

\subsubsection{The Distortion-Resolution Function}
\label{sec:BackgroundHighResolutionDistortion}
As previously defined, $d(K;\lambda)$ is the distortion of the companding quantizer with
resolution $K$.  We now define
an approximation $\dhr(K;\lambda)$, known as the approximate distortion-resolution function.
For rigorous proof that 
\beq
\lim_{K\rightarrow\infty} \dhr(K;\lambda)/d(K;\lambda) = 1 \mbox{,} \label{eq:distortionResolutionAsymptotic}
\eeq
we refer
to the main result of Linder \cite{Linder1991}, or to Theorem 
\ref{thm:multi-distortion} with $g(x) = x$.

Let $X$ be a random variable with pdf $f_X(x)$, 
let $Q^{\lambda}_K$ be a $K$-point companding quantizer, and suppose $\lambda$ and $f$
satisfy assumptions UO1--UO4\@.  Let $\{\beta_i\}_{i\in\I} = Q^{\lambda}_K([0,1])$ be the reconstruction
points, and let $S_i = \left(Q_K^{\lambda}\right)^{-1}(\beta_i)$, $i\in\I$, be the corresponding 
partition regions.

The distortion of the quantizer is
\begin{eqnarray}
  \label{eq:general-quant-dist}
  d(K;\lambda) & = & \E{(X-\Xhat)^2} \nonumber \\
      & = & \sum_{i \in \I}
          \E{(X - \beta_i)^2 \mid X \in S_i}
          \P{X \in S_i}
\end{eqnarray}
by the law of total expectation.
The initial aim of high-resolution theory is to express this distortion
as an integral involving $f_X$.
To that end, we make the following approximations about the source
and quantizer:
\begin{enumerate}
\item[HR1.] $f_X$ may be approximated as constant on each $S_i$.  
\item[HR2.] The size of the cell containing $x$ is approximated with
the help of the point density function:
\beq
  \label{eq:cell-length}
  x \in S_i \quad \Rightarrow \quad \length(S_i) \sim (K\lambda(x))^{-1} \mbox{,}
\eeq
where $\sim$ means that the ratio of the two quantities goes to 1 with increasing
resolution $K$.  This is the meaning of ``$\sim$'' for the remainder of the paper.
\end{enumerate}
The first approximation follows from the smoothness of $f_X$ (assumptions UO1 and UO2),
while the second follows from the smoothness of $w(x)$ (assumption UO3).

Now we can approximate each non-boundary term in \eqref{eq:general-quant-dist}.
By HR1, $\beta_i$ should be approximately at the center of $S_i$,
and the length of $S_i$ then makes the conditional expectation
approximately $\frac{1}{12}(K\lambda(\beta_i))^{-2}$.
Invoking Assumption HR1 again,
the $i$th term in the sum is $\int_{x \in S_i} \frac{1}{12}(K\lambda(\beta_i))^{-2} f_X(x) \, dx$.
Finally,
\begin{eqnarray}
\label{eq:unoptimizedHrDist}
d(K;\lambda) \sim \int_0^1 \frac{(K\lambda(x))^{-2}}{12}  f_X(x) \, dx
  & = & \frac{1}{12K^2} \E{\lambda^{-2}(X)} \\
  & = & \dhr(K;\lambda) \mbox{.} \nonumber
\end{eqnarray}

\subsubsection{The Resolution-Rate Function}
\label{sec:BackgroundHighResolutionRate}
For a fixed-rate quantizer, the resolution-rate relationship
is given simply by $\Kfr(R;\lambda) = \lfloor 2^R \rfloor$, and
it is approximated with vanishing relative error by $\Kfrhr(R;\lambda) = 2^R$.  The variable-rate resolution-rate
function is more difficult to approximate.

As long as the quantization is fine ($\lambda(x) > 0$) wherever the
density is positive,
we can approximate the output entropy of a quantizer using the
point density.  Defining $p(x)$  as $\P{X \in S_i}$ for $x \in S_i$,
and letting $h(X)$ denote the differential entropy of $X$,
\begin{eqnarray}
H(Q^{\lambda}_K(X)) & = & - \sum_{i\in\I} \P{X \in S_i} \log \P{X \in S_i} \nonumber \\
   & \eqlabel{a} & - \int_0^1 f_X(x) \log p(x) \, dx \nonumber \\
   & \stackrel{(b)}{\sim} & - \int_0^1 f_X(x) \log( f_X(x)/(K\lambda(x)) ) \, dx \nonumber \\
   & = & - \int_0^1 f_X(x) \log f_X(x) \, dx \nonumber \\
   &   & \quad + \: \int_0^1 f_X(x) \log(K\lambda(x)) \, dx \nonumber \\
   & = & h(X) + \log K + \E{\log \lambda(X)} \mbox{,}
 \label{eq:1drate}
\end{eqnarray}
where (a) follows from the definition of $p(x)$; and
(b) involves approximating the source pdf as constant
in each cell and \eqref{eq:cell-length}.

A generalized version of this approximation is proven rigorously
in \cite{LinderZZ1999}.  We state it here as a lemma.
\begin{lemma}
\label{lem:ResolutionRate}
Suppose the source $X$ has a density over $[0,1]$ and a finite differential
entropy $h(X)$. 
Then if $\E{\log \lambda(X)}$ is finite,
\[
\lim_{R\rightarrow\infty} \left[ H(Q^{\lambda}_{K(R;\lambda)}(X)) - \log K(R;\lambda) \right] = h(X) + \E{\log\lambda(X)} \mbox{.}
\]
\end{lemma}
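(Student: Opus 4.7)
The plan is to reduce the statement to the assertion
\[
\lim_{K\to\infty}\bigl[H(Q_K^\lambda(X)) - \log K\bigr] = h(X) + \E{\log\lambda(X)},
\]
which implies the stated limit since $K(R;\lambda)\to\infty$ as $R\to\infty$ (the hypothesis that some $H(Q_K^\lambda(X))$ is finite rules out rate saturation). For the $K$-level companding quantizer with cells $\{S_i\}$, I introduce the ``pseudo-density''
\[
\tilde f_K(x) = \frac{\P{X\in S_i}}{\length(S_i)} \quad\text{for } x\in S_i,
\]
a probability density on $[0,1]$ built from the quantizer cells and the source distribution. Writing $\ell_K(X) = \length(S_{i(X)})$ for the length of the cell containing $X$, a direct calculation using that $\tilde f_K$ is constant on each cell gives
\[
h(\tilde f_K) = -\int_0^1 \tilde f_K(x)\log\tilde f_K(x)\,dx = H(Q_K^\lambda(X)) + \E{\log\ell_K(X)},
\]
which rearranges to
\[
H(Q_K^\lambda(X)) - \log K = h(\tilde f_K) - \E{\log(K\ell_K(X))}.
\]

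The first step is to show $\E{\log(K\ell_K(X))}\to -\E{\log\lambda(X)}$. Pointwise, since the compander $w$ is an antiderivative of $\lambda$, the standard cell-length estimate yields $K\ell_K(x)\to 1/\lambda(x)$ wherever $\lambda$ is continuous, so $\log(K\ell_K(X))\to -\log\lambda(X)$ almost surely. To exchange limit and expectation, I bound $\log(K\ell_K)$ using the sandwich $\length(S_i)\,\inf_{S_i}\lambda \leq 1/K \leq \length(S_i)\sup_{S_i}\lambda$, which makes $|\log(K\ell_K(X))|$ dominated by $|\log\lambda(X)|$ plus a term that vanishes uniformly; dominated convergence then applies by the finiteness hypothesis on $\E{\log\lambda(X)}$.

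The second step is to show $h(\tilde f_K)\to h(X)$. By the Lebesgue differentiation theorem applied to cells shrinking to zero length, $\tilde f_K(x)\to f_X(x)$ for almost every $x$. Since $\log\tilde f_K$ is constant on each cell and $\int_{S_i}\tilde f_K = \int_{S_i}f_X$, one has $\int f_X\log\tilde f_K = \int\tilde f_K\log\tilde f_K$, hence
\[
h(\tilde f_K) - h(X) = \int_0^1 f_X(x)\log\bigl(f_X(x)/\tilde f_K(x)\bigr)\,dx = D(f_X\|\tilde f_K) \geq 0.
\]
Establishing $D(f_X\|\tilde f_K)\to 0$ uses the pointwise convergence $\tilde f_K \to f_X$ combined with a uniform integrability argument leveraging finiteness of $h(X)$ and finiteness of $h(\tilde f_{K_0})$ for the base resolution $K_0$ at which the discrete entropy is assumed finite (the latter is finite because $h(\tilde f_{K_0}) = H(Q_{K_0}^\lambda(X)) + \E{\log\ell_{K_0}(X)}$, with both summands finite).

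The main obstacle is precisely the uniform integrability needed to conclude $D(f_X\|\tilde f_K)\to 0$: one must control the negative part of $\log\tilde f_K$ on regions where $\tilde f_K$ is unusually small, which cannot be done purely from pointwise convergence. The natural route is a maximal-function bound on $\tilde f_K/f_X$ from above together with a truncation in thin-density regions, using the finite-entropy hypothesis to guarantee that the resulting tail contribution is negligible --- this is the technical content developed in Linder, Zamir, and Zeger~\cite{LinderZZ1999}, on which I would rely for a complete rigorous argument.
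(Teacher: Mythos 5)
The paper does not actually prove this lemma: its entire proof is the single line that the result ``follows as a special case of Proposition 2 in \cite{LinderZZ1999}.'' Your proposal goes further than the paper by reconstructing the standard argument behind that proposition --- the piecewise-constant pseudo-density $\tilde f_K$, the exact identity $H(Q_K^\lambda(X)) - \log K = h(\tilde f_K) - \E{\log(K\ell_K(X))}$, the mean-value cell-length estimate for the second term, and the identity $h(\tilde f_K) - h(X) = D(f_X\|\tilde f_K) \ge 0$ for the first --- but at the step you yourself identify as decisive (showing $D(f_X\|\tilde f_K)\to 0$) you fall back on the technical machinery of \cite{LinderZZ1999}, i.e.\ exactly the source the paper cites. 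So in substance the two justifications coincide; yours has the merit of exposing the reduction and showing where the cited result does real work. One caveat: your dominated-convergence argument for $\E{\log(K\ell_K(X))}\to -\E{\log\lambda(X)}$ is also not complete as stated. The sandwich $\inf_{S_i}\lambda \le 1/(K\,\length(S_i)) \le \sup_{S_i}\lambda$ controls $\log(K\ell_K(X))$ only by the extreme values of $\lambda$ over the cell containing $X$, and under the lemma's bare hypotheses (a density, finite $h(X)$, finite $\E{\log\lambda(X)}$, one finite-entropy quantizer) the oscillation of $\log\lambda$ within cells need not vanish uniformly nor be dominated by $|\log\lambda(X)|$ plus a uniformly small term --- $\lambda$ may approach zero, making cells large and the within-cell spread of $\log\lambda$ uncontrolled. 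This step, like the relative-entropy step, is precisely what the hypotheses of Proposition 2 in \cite{LinderZZ1999} are crafted to handle, so for a complete argument you would be leaning on that reference for both halves, which is no worse than what the paper itself does.
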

\begin{IEEEproof}
Follows as a special case of Proposition 2 in \cite{LinderZZ1999}.
\end{IEEEproof}

With the insight of this approximation, we define:
\begin{definition}
The variable-rate
approximate resolution-rate function $\Kvrhr(R;\lambda)$ is given by
\[
\log \Kvrhr(R;\lambda) = R-h(X)-\E{\log\lambda(X)} \mbox{.}
\]
\end{definition}
\begin{lemma}
\label{lem:ResRateError}
The error between the log of the variable-rate approximate resolution-rate function $\log \Kvrhr (R;\lambda)$
and the log of the actual resolution-rate function $\Kvr(R;\lambda)$ goes to zero, i.e.
\[ \lim_{R\rightarrow\infty} \log \Kvrhr(R;\lambda) - \log \Kvr(R;\lambda) = 0 \mbox{.}\]
\end{lemma}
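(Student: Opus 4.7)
The plan is a standard sandwich argument, with the essential content supplied by Lemma~\ref{lem:ResolutionRate} (via Proposition~2 of \cite{LinderZZ1999}). Set $c = h(X) + \E{\log\lambda(X)}$ so that $\log\Kvrhr(R;\lambda) = R - c$ by definition, and define the per-resolution entropy error
\[
\epsilon(K) \;:=\; H(Q_K^\lambda(X)) - \log K - c.
\]
The cited result in fact establishes $\epsilon(K)\to 0$ as $K\to\infty$ (not merely along the particular $R$-indexed sequence $K = \Kvr(R;\lambda)$), and this uniform form is what I will invoke.

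First I would verify that $\Kvr(R;\lambda)\to\infty$ as $R\to\infty$. This is immediate since for any fixed bound on $K$ the entropy $H(Q_K^\lambda(X))\le\log K$ is uniformly bounded, so once $R$ exceeds that bound the constraint $H(Q_K^\lambda(X))\le R$ is satisfied by strictly larger $K$. Next I would use the defining property of $\Kvr(R;\lambda)$ as the largest feasible resolution to sandwich $R$ between two consecutive entropies:
\[
H\bigl(Q_{\Kvr(R;\lambda)}^\lambda(X)\bigr) \;\leq\; R \;<\; H\bigl(Q_{\Kvr(R;\lambda)+1}^\lambda(X)\bigr).
\]
Substituting $H(Q_K^\lambda(X)) = \log K + c + \epsilon(K)$ on both sides and using $R = \log\Kvrhr(R;\lambda) + c$ yields
\[
\log\Kvr(R;\lambda) + \epsilon(\Kvr(R;\lambda)) \;\leq\; \log\Kvrhr(R;\lambda) \;<\; \log(\Kvr(R;\lambda)+1) + \epsilon(\Kvr(R;\lambda)+1).
\]

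Rearranging and writing $\log(\Kvr+1) = \log\Kvr + \log(1 + 1/\Kvr)$ gives the two-sided estimate
\[
\epsilon(\Kvr(R;\lambda)) \;\leq\; \log\Kvrhr(R;\lambda) - \log\Kvr(R;\lambda) \;<\; \log\bigl(1 + 1/\Kvr(R;\lambda)\bigr) + \epsilon(\Kvr(R;\lambda)+1),
\]
after which the conclusion follows by sending $R\to\infty$: both $\epsilon$ terms vanish because $\Kvr(R;\lambda)\to\infty$ (and so does $\Kvr(R;\lambda)+1$), and $\log(1 + 1/\Kvr(R;\lambda))\to 0$ for the same reason. The main obstacle is purely expository, namely the need to apply the entropy asymptotics at $\Kvr(R;\lambda)+1$ rather than only at $\Kvr(R;\lambda)$; this is the reason for insisting on the ``as $K\to\infty$'' form of the convergence of $\epsilon(K)$, which is available from \cite{LinderZZ1999}. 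No further technical difficulty arises.
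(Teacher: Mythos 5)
Your proof is correct and follows essentially the same route as the paper's: both exploit the maximality of $\Kvr(R;\lambda)$ to trap $R$ between $H(Q^{\lambda}_{\Kvr}(X))$ and $H(Q^{\lambda}_{\Kvr+1}(X))$, invoke the entropy asymptotics of Lemma~\ref{lem:ResolutionRate} at both resolutions, and kill the residual $\log\bigl(1+1/\Kvr(R;\lambda)\bigr)$ term using $\Kvr(R;\lambda)\to\infty$. Your write-up is in fact slightly more careful than the paper's in explicitly verifying that $\Kvr(R;\lambda)$ diverges and in insisting on the $K\to\infty$ form of the convergence so that it may be applied at $\Kvr(R;\lambda)+1$, but these are points the paper uses implicitly rather than a genuinely different argument.
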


\begin{IEEEproof}
The error of the approximation $\Kvrhr$ may be written as
\[
\log \Kvr(R;\lambda) - \log \Kvrhr(R;\lambda) = \epsilon_R + H(Q^{\lambda}_{\Kvr(R;\lambda)}(X)) - R \mbox{,}
\]
where $\epsilon_R$ goes to zero by Lemma~\ref{lem:ResolutionRate}.
Furthermore, by definition $\Kvr(R;\lambda)$ has been chosen to be the largest resolution
such that $H(Q^{\lambda}_{\Kvr(R;\lambda)}(X)) \leq R$. We then have that 
\[ R-H(Q^{\lambda}_{\Kvr(R;\lambda)}(X) ) < H(Q^{\lambda}_{\Kvr(R;\lambda)+1}(X)) - H(Q^{\lambda}_{\Kvr(R;\lambda)}(X)) \mbox{,}\]
i.e. the second term in the rate approximation error is bounded
by the increment in entropy from an increment in resolution.  
By Lemma~\ref{lem:ResolutionRate} once again, the increment in entropy
may be bounded as 
\beqan
\lefteqn{H(Q^{\lambda}_{\Kvr(R;\lambda)+1}(X)) - H(Q^{\lambda}_{\Kvr(R;\lambda)}(X))} \\ & = & 
h(X) +\log (\Kvr(R;\lambda)+1) + \E{\log \lambda(X)} - h(X) - \log \Kvr(R;\lambda) -\E{\log \lambda(X)} + \delta(R) \\
& = & \log (\Kvr(R;\lambda)+1) - \log \Kvr(R;\lambda) + \delta(R) \\
& = & \log \frac{\Kvr(R;\lambda)+1}{\Kvr(R;\lambda)} + \delta(R) \mbox{,}
\eeqan
where $\delta(R)$ goes to zero.  Since $\Kvr(R;\lambda)$ diverges to infinity with $R$, this error goes to zero.
\end{IEEEproof}

\subsubsection{The Distortion-Rate Functions}
\label{sec:BackgroundHighDistortionRate}
The high-resolution distortion-rate function can be obtained by combining
the distortion-resolution and resolution-rate functions.  For fixed-rate,
\begin{subequations}
\beq
\Dfrhr(R) = \frac{1}{12} \E{\lambda^{-2}(X)} 2^{-2R} \mbox{,} \label{eq:BackgroundFixedRateDistortion}
\eeq
whereas for variable-rate
\beq
\Dvrhr(R) = \frac{1}{12} \E{\lambda^{-2}(X)} 2^{-2(R-h(X)-\E{\log\lambda(X)})} \mbox{.}
\label{eq:BackgroundVariableRateDistortion}
\eeq
\end{subequations}
Asymptotic validity in the sense of \eqref{eq:asymptoticaccuracy}
follows in the fixed-rate case from \eqref{eq:distortionResolutionAsymptotic} and from
the fact that $\left( \Kfr(R;\lambda)/\Kfrhr(R;\lambda) \right)^2$ goes to 1\@.
In the variable-rate case, we may bound the error from use of $\Khr(R;\lambda)$
in place of $K(R;\lambda)$ as a multiplying factor of $2^{2|\Khr(R;\lambda) - K(R;\lambda)|}$,
which by Lemma \ref{lem:ResRateError} goes to 1\@.

\subsubsection{Asymptotically-Optimal Companding Quantizer Sequences}
\label{sec:review-optimal}
We seek asymptotically-optimal companding quantizer sequences for
both fixed-rate and variable-rate constraints.  By the following lemma,
this reduces to minimizing the high-resolution distortion-rate functions
of \eqref{eq:BackgroundFixedRateDistortion} and \eqref{eq:BackgroundVariableRateDistortion}.

\begin{lemma}
\label{lem:optimizationIsLegit}
Suppose $\lambda_{\rm fr}^*$ and $\lambda_{\rm vr}^*$ minimize $\Dfrhr(R;\lambda)$ 
and $\Dvrhr(R;\lambda)$ respectively.
Then the quantizer sequences $\{Q_{K}^{\lambda_{\rm fr}^*}\}$ and $\{Q_{K}^{\lambda_{\rm vr}^*}\}$
are asymptotically fixed- and variable-rate optimal.
\end{lemma}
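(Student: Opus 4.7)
The plan is to reduce asymptotic optimality to the (already established) asymptotic equivalence between the true and high-resolution distortion-rate functions, exploiting the fact that the $R$-dependence in $\Dfrhr(R;\lambda)$ and $\Dvrhr(R;\lambda)$ factors out so that the minimizers $\lambda^*_{\rm fr}$ and $\lambda^*_{\rm vr}$ do not depend on $R$.

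First I would fix an arbitrary competing point density $\lambda$ satisfying UO3--UO4 and decompose the ratio of true distortions as the telescoping product
\[
\frac{D_{\rm fr}(R;\lambda^*_{\rm fr})}{D_{\rm fr}(R;\lambda)}
= \frac{D_{\rm fr}(R;\lambda^*_{\rm fr})}{\Dfrhr(R;\lambda^*_{\rm fr})}
  \cdot \frac{\Dfrhr(R;\lambda^*_{\rm fr})}{\Dfrhr(R;\lambda)}
  \cdot \frac{\Dfrhr(R;\lambda)}{D_{\rm fr}(R;\lambda)},
\]
and analogously in the variable-rate case. The first and third factors each tend to $1$ as $R\to\infty$ by the asymptotic accuracy statement \eqref{eq:asymptoticaccuracy}, whose justification is deferred to Theorem \ref{thm:single-distortion} (or Linder \cite{Linder1991}).

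Next I would argue that the middle factor is bounded above by $1$ for every $R$. The point is that in \eqref{eq:BackgroundFixedRateDistortion} the $R$-dependence enters only through the multiplicative factor $2^{-2R}$, and in \eqref{eq:BackgroundVariableRateDistortion} only through $2^{-2R}$ together with $2^{2h(X)}$, which is independent of $\lambda$. Hence $\lambda^*_{\rm fr}$ minimizes the $R$-independent functional $\E{\lambda^{-2}(X)}$ and $\lambda^*_{\rm vr}$ minimizes $\E{\lambda^{-2}(X)} \, 2^{-2\E{\log\lambda(X)}}$, so these minimizers are the same for every $R$ and the middle ratio is $\leq 1$ uniformly in $R$. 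Combining the three observations gives $\limsup_{R\to\infty} D_{\rm fr}(R;\lambda^*_{\rm fr})/D_{\rm fr}(R;\lambda) \leq 1$, and similarly in the variable-rate case.

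The only nontrivial obstacle is really housekeeping: I should ensure that the minimizers $\lambda^*_{\rm fr}$ and $\lambda^*_{\rm vr}$ themselves satisfy the companding-regularity assumptions UO3--UO4 so that the cited asymptotic equivalence applies to them, i.e., so that the ``first factor'' ratio above legitimately tends to $1$. This is a standard verification (positivity of $\lambda^*$ on the support of $f_X$ and integrability of $f_X\lambda^{*-2}$), and once it is in hand, no further analysis is required.
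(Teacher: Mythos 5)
Your proposal is correct and follows essentially the same route as the paper's own proof: the same three-factor telescoping decomposition of the distortion ratio, with the outer factors tending to $1$ by the asymptotic accuracy \eqref{eq:asymptoticaccuracy} and the middle factor bounded by $1$ for every $R$ by optimality of $\lambda^*$. Your added remarks (that the minimizers are $R$-independent, and that one should check $\lambda^*$ satisfies UO3--UO4 so the accuracy result applies to it) are sensible housekeeping that the paper leaves implicit, but they do not change the argument.
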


\begin{IEEEproof}
As the proof is virtually identical for fixed- and variable-rate cases, we only provide it for 
the variable-rate case.

Let $\{Q_{K}^{\lambda}\}$ be any companding quantizer sequence.  We are interested in proving
that
\[
\limsup_{R\rightarrow \infty} \frac{\Dvr(R;\lambda_{\rm vr}^*)}{\Dvr(R;\lambda)} \leq 1 \mbox{.}
\]
The supremum limit on the left may be factored:
\beqan
\limsup_{R\rightarrow \infty} \frac{\Dvr(R;\lambda_{\rm vr}^*)}{\Dvr(R;\lambda)} & = &
\limsup_{R\rightarrow \infty} \frac{\Dvr(R;\lambda_{\rm vr}^*)}{\Dvrhr(R;\lambda_{\rm vr}^*)}
\,
							    \frac{\Dvrhr(R;\lambda_{\rm vr}^*)}{\Dvrhr(R;\lambda)}
\,
							    \frac{\Dvrhr(R;\lambda)}{\Dvr(R;\lambda)} \\
& \leqlabel{a} &
\limsup_{R\rightarrow \infty} \frac{\Dvr(R;\lambda_{\rm vr}^*)}{\Dvrhr(R;\lambda_{\rm vr}^*)}
\,
\limsup_{R\rightarrow \infty} \frac{\Dvrhr(R;\lambda_{\rm vr}^*)}{\Dvrhr(R;\lambda)}
\,
\limsup_{R\rightarrow \infty} \frac{\Dvrhr(R;\lambda)}{\Dvr(R;\lambda)} 
\eeqan
because the supremum limit of a product of positive sequences is upper-bounded by the product of their individual supremum limits.  We can now bound each of these factors.

We have, by optimality
of $\lambda_{\rm vr}^*$, that $\Dvrhr(R;\lambda) \geq \Dvrhr(R;\lambda_{\rm vr}^*)$
for any $R$ and therefore that
\[
\limsup_{R\rightarrow\infty} \frac{\Dvrhr(R;\lambda_{\rm vr}^*)}{\Dvrhr(R;\lambda)} \leq 1 \mbox{.}
\]
Furthermore, by \eqref{eq:asymptoticaccuracy}, we have that
\[
\lim_{R\rightarrow\infty} \frac{\Dvr(R;\lambda_{\rm vr}^*)}{\Dvrhr(R;\lambda_{\rm vr}^*)} = 
\lim_{R\rightarrow\infty} \frac{\Dvrhr(R;\lambda)}{\Dvr(R;\lambda} = 
1 \mbox{.}
\]
This proves the lemma.
\end{IEEEproof}

Now we optimize the distortion-rate expressions.
Because analogous optimizations appear in Sections~\ref{sec:Single}
and~\ref{sec:Multi}, we explicitly derive
both the optimizing point densities and the resulting distortion-rate functions.
Our approach follows~\cite{GrayG1977}.

In the fixed-rate case, the problem is to minimize \eqref{eq:BackgroundFixedRateDistortion} for
a given value of $R$.
This minimization may be performed with the help of H\"{o}lder's inequality:
\beqan
\Dfrhr(R;\lambda) & = & 	\frac{1}{12}2^{-2R} \int_0^1 f_X(x) \lambda^{-2}(x) dx \\
		& = & 	\frac{1}{12}2^{-2R} \int_0^1 f_X(x) \lambda^{-2}(x) dx  \left( \int_0^1 \lambda(x)dx\right)^2\\
		& \geq & 	\frac{1}{12}2^{-2R} \int_0^1 \left( f_X(x) \lambda^{-2}(x)\right)^{1/3}
									   \left(\lambda(x) \right)^{2/3}
									   dx\\
		& = & \frac{1}{12}2^{-2R} \left( \int_0^1 f_X^{1/3}(x) \right)^3	\mbox{,}		
\eeqan
with equality only if $\lambda(x)\propto f_X^{1/3}(x)$.
Thus,
$\Dfrhr$ is minimized by
\beq
  \label{eq:fixed-opt-lambda}
    \lambda(x) = f_X^{1/3}(x) / \left({\textstyle \int_0^1 f_X^{1/3}(t) \,dt}\right) \mbox{.}
\eeq
The resulting minimal distortion is  
\beq
  \label{eq:fixedHrDist}
\Dfrhr(R) = \frac{1}{12}2^{-2R} \left( \int_0^1 f_X^{1/3}(x) \, dx \right)^3
    = \frac{1}{12} \| f_X \|_{1/3} 2^{-2R} \mbox{,}
\eeq
where we have introduced a notation for the $\mathcal{L}^{1/3}$ quasinorm.

For the variable-rate optimization, we use Jensen's inequality rather than
 H\"{o}lder's inequality:
\beqan
\Dvrhr(R;\lambda)  & = & \frac{1}{12}2^{-2(R-h(X))} \E{\lambda^{-2}(X)}2^{2\E{\log \lambda(X)}} \\
& \stackrel{(a)}{\geq} & \frac{1}{12}2^{-2(R-h(X))} \E{\lambda^{-2}(X)}2^{2\log\E{\lambda(X)}} \\
& = & \Dvrhr(R) \mbox{,}
\eeqan
where (a) follows from the convexity of $-\log(\cdot)$.
This lower bound is achieved when $\lambda(X)$ is a constant.
Thus
$\lambda(x) = 1$ is asymptotically optimal, i.e., the quantizer should be uniform.%

Note that both variable- and fixed-rate quantization
have $\Theta(2^{-2R})$, or $-6$ dB/bit, dependence of distortion on rate.
This is a common feature of
ordinary quantizers with MSE distortion, but we demonstrate in Section~\ref{sec:DontCare}
that certain functional scenarios can cause distortion to fall even faster
with the rate.


\subsubsection{Optimal Bit Allocation}
As a final preparatory digression, we state the solution to a typical
resource allocation problem that arises several times in Section~\ref{sec:Multi}.
\begin{lemma}
  \label{lem:bit-alloc}
Suppose $D = \sum_{j=1}^n c_j 2^{-2R_j}$
for some positive constants $\{c_j\}_{j=1}^n$.
Then the minimum of $D$ over the choice of $\{R_j\}_{j=1}^n$
subject to the constraint $\sum_{j=1}^n R_j \leq nR$ is attained with
$$
  R_j = R + \Half \log \frac{c_j}{ \left( \prod_{j=1}^n c_j \right)^{1/n} },
\qquad j=1,\,2,\,\ldots,\,n,
$$
resulting in
$$
  D = n \left( {\textstyle \prod_{j=1}^n c_j} \right)^{1/n} 2^{-2R}.
$$
\end{lemma}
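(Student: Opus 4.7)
The plan is to prove this by the AM--GM inequality, which handles both the lower bound on $D$ and the characterization of the minimizing allocation in one shot. A Lagrange-multiplier approach would also work, but AM--GM gives the cleanest argument since the objective is already a sum of $n$ positive terms and the constraint couples the exponents additively.

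First I would note that at an optimum the rate constraint must be tight: if $\sum_j R_j < nR$, any one $R_j$ could be increased, strictly reducing its term $c_j 2^{-2R_j}$ and hence $D$. So I can assume $\sum_j R_j = nR$. Then applying the AM--GM inequality to the $n$ positive quantities $\{c_j 2^{-2R_j}\}_{j=1}^n$ gives
\begin{equation*}
  \frac{1}{n} \sum_{j=1}^n c_j 2^{-2R_j}
    \;\geq\; \left( \prod_{j=1}^n c_j 2^{-2R_j} \right)^{1/n}
    \;=\; \left( \prod_{j=1}^n c_j \right)^{1/n} 2^{-2 (\sum_j R_j)/n}
    \;=\; \left( \prod_{j=1}^n c_j \right)^{1/n} 2^{-2R}.
\end{equation*}
Multiplying through by $n$ yields the claimed lower bound $D \geq n \bigl( \prod_j c_j \bigr)^{1/n} 2^{-2R}$.

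Next I would check that this bound is actually attained by the allocation stated in the lemma. Equality in AM--GM requires all the terms $c_j 2^{-2R_j}$ to be equal, i.e.\ $c_j 2^{-2R_j} = C$ for some constant $C$ independent of $j$. Solving, $R_j = \frac{1}{2}\log(c_j/C)$, and plugging into the tight constraint $\sum_j R_j = nR$ determines $C = (\prod_j c_j)^{1/n} 2^{-2R}$. Substituting this back into $R_j = \frac{1}{2}\log(c_j/C)$ gives exactly the formula in the statement, and the resulting distortion is $nC = n(\prod_j c_j)^{1/n} 2^{-2R}$, matching the claimed minimum.

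There is no real obstacle here: the only small point to be careful about is arguing the constraint is tight before invoking AM--GM (otherwise the second equality above would be an inequality in the wrong direction). The constants $c_j$ are positive, so taking logarithms and the $n$th root causes no domain issues, and the allocation $R_j$ produced is real-valued (though not necessarily nonnegative, which is fine since the lemma does not impose $R_j \geq 0$).
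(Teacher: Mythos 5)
Your proof is correct and uses exactly the AM--GM argument the paper itself invokes (the paper gives no details, merely citing Huang--Schultheiss and Gersho--Gray for a full proof), so the approaches coincide. One small remark: the caution in your final paragraph is misplaced---if $\sum_j R_j \leq nR$ then $2^{-2(\sum_j R_j)/n} \geq 2^{-2R}$, so the chain of inequalities still yields the lower bound $D \geq n\bigl(\prod_j c_j\bigr)^{1/n}2^{-2R}$ even without tightness; the tightness observation is only needed (and is correctly used) to show the stated allocation satisfies the constraint with equality and attains the bound.
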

\begin{IEEEproof}
The result can be shown using the inequality for arithmetic and geometric means.
It appeared first in the context of bit allocation in~\cite{HuangS1963};
a full proof appears in \cite[Sect.~8.3]{GershoG1992}.
\end{IEEEproof}

The lemma does not restrict the $R_j$s to be nonnegative or to be integers.
Such restrictions are discussed in~\cite{FarberZ2006}.

\section{Univariate Functional Quantization}
\label{sec:Single}


Let $X$ be a random variable with pdf $f_X(x)$ defined over $[0,1]$,
and let $g: [0,1]\rightarrow \R$ be the function of interest.
A sequence of companding quantizers $\{Q_{K}^\lambda\}$ is applied to
the source $X$, and an estimate $\ghat(Q^{\lambda}_K(X))$
is formed at the decoder, where $\ghat$
is the estimator function.
Functional distortion is measured by squared error
$D = \iE{(g(X) - \ghat(Q^{\lambda}_K(X)))^2}$.
We seek an asymptotically-optimal estimator $\ghat$ and companding function $w$
that satisfy certain constraints.

Since we seek to answer this design question with high-resolution techniques,
the function $g$ and the source $X$ must be restricted in a manner
similar to conditions UO1--4 in Section~\ref{sec:BackgroundProblemStatement}. For the moment we err on the 
side of being too strict. 
Sections~\ref{sec:NonMono} and~\ref{sec:DontCare} will
significantly loosen these requirements.

\begin{enumerate}
\item[UF1.] $g$ is monotonic.
\item[UF2.] $g$ is Lipschitz continuous on $[0,1]$, and the first- and second- derivatives of $g$ are
defined except possibly on a set of zero Jordan measure.
\item[UF3.] The source pdf $f$ is continuous, bounded, and supported on the interval $[0,1]$.
\item[UF4.] We optimize among companding functions $w$ that are
piecewise differentiable (and therefore a point density description $\lambda$ is appropriate).
\item[UF5.] The integral $\int_0^1 f(x)g'(x)^2\lambda(x)^{-2} \, dx$ is defined and finite.
\end{enumerate}

Throughout this paper, we assume that $\ghat(t) = \E{g(X) \mid X \in S_i}$ for all $t \in S_i$.
This achieves the minimum possible functional distortion $\E{\var \left( g(X) \mid Q^{\lambda}_K(X) \right)}$.

\subsection{Sufficiency of Regular Quantizers} 
The following lemma relates monotonicity to regularity of optimal quantizers,
thus justifying the optimization among companding quantizers:
\begin{lemma}
\label{lem:MonoRegular}
If $g$ is monotonic,
there exists an optimal functional quantizer of $X$ that is regular.
\end{lemma}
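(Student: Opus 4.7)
The approach is to take any $K$-level quantizer and show it can be modified into a regular one without increasing functional distortion. Let $Q$ be any $K$-level quantizer of $X$ with cells $\{S_i\}_{i=1}^K$ and optimal reconstruction values $r_i=\E{g(X)\mid X\in S_i}$. The functional distortion then equals $\sum_{i}\int_{S_i}(g(x)-r_i)^2 f_X(x)\,dx$. Holding the $\{r_i\}$ fixed, I would apply the standard nearest-neighbor reassignment, but in $g$-space rather than $X$-space: reassign each $x$ to the index $j$ minimizing $(g(x)-r_j)^2$. Pointwise this can only decrease the integrand, so the total distortion can only decrease.

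Next I would identify the resulting cells as $S_i^{\star}=g^{-1}(V_i)$, where $V_i$ is the ordinary one-dimensional Voronoi region around $r_i$. After sorting the $r_i$ in increasing order, these $V_i$ are determined by the midpoints $(r_i+r_{i+1})/2$ and are therefore intervals. Monotonicity of $g$ is the key structural ingredient here: if $x_1<x<x_2$ with $x_1,x_2\in g^{-1}(V_i)$, then $g(x)$ lies between $g(x_1)$ and $g(x_2)$ and so also in $V_i$. Hence each $S_i^{\star}$ is an interval, and the new quantizer is regular. A final application of the centroid condition (updating each $r_i$ to the conditional mean over the new cell) can only decrease distortion further, so the construction yields a regular quantizer no worse than $Q$.

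The main obstacle will be handling two degenerate situations: ties on the Voronoi boundaries and intervals on which $g$ is constant with positive $f_X$-measure. Ties at a midpoint $(r_i+r_{i+1})/2$ can be broken by sending all tied points to one side, preserving the interval structure without changing distortion. On any flat interval of $g$, the squared-error contribution depends only on the value $r_i$ assigned and not on how the interval is split, so a contiguous partition is always available. Both caveats are benign, so the argument goes through and establishes the existence of an optimal regular quantizer, justifying the later restriction to companding quantizers.
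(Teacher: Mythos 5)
Your proof is correct and rests on the same two pillars as the paper's: the distortion depends on $x$ only through $g(x)$, and monotone preimages of intervals are intervals. The execution differs in a useful way. The paper argues top-down: it asserts that the optimal functional quantizer of $X$ is the one induced by the optimal ordinary quantizer of $Y=g(X)$, invokes the known regularity of optimal MSE quantizers to get interval cells in $Y$-space, and pulls them back through $g^{-1}$. You argue bottom-up: the nearest-neighbor reassignment in $g$-space followed by a centroid update is exactly the standard argument behind the fact the paper cites, so your version is more self-contained, actually supplies the justification for the paper's first (unproven) sentence, and gives the slightly stronger conclusion that \emph{every} quantizer is dominated by a regular one. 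The only caveat is scope: your exchange argument, as stated, covers the codebook-constrained notion of optimality (fixed number of levels), since reassigning cells can increase the output entropy; if the lemma is read to also cover the entropy-constrained case, the paper's route extends immediately (the pulled-back partition has the same cell probabilities, hence the same entropy, and regularity of optimal entropy-constrained quantizers for convex difference distortions is cited in Section~II), whereas yours would need the Lagrangian variant of the nearest-neighbor step, assigning $x$ to minimize $(g(x)-r_j)^2+\mu\ell_j$, which still produces interval cells in $g$-space and hence, by your monotonicity argument, interval cells in $X$-space. Your treatment of ties and of flat stretches of $g$ is benign and matches the level of rigor of the paper's own proof.
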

\begin{IEEEproof}
The optimal functional quantizer in one dimension is
induced by the optimal ordinary quantizer for the variable $Y = g(X)$. 
That is, one may compute the function $g(X)$ and quantize it directly. 
Since the optimal ordinary quantizer for a real-valued source is regular,
the optimal quantizer for $Y$, denoted by $Q_Y(y)$ and having points
$\{\widehat{y}_i\}_{i \in \I}$, is regular.

$Q_Y(y)$ may be implemented by a quantizer for $X$ with
cells
given by $g^{-1}(Q_Y^{-1}(\widehat{y}_i))$.
We know that $Q_Y^{-1}(\widehat{y}_i)$ is an interval since $Q_Y$ is regular. 
Also, since
$g$ is monotonic, the inverse map $g^{-1}$ applied to any interval
in the range of $g$ gives an interval.
Thus $g^{-1}(Q_Y^{-1}(\widehat{y}_i))$ is an interval,
which demonstrates that there exists a regular quantizer in $X$ that
is optimal.
\end{IEEEproof}

\subsection{The Distortion-Resolution Function} 
Assumption UF2 is introduced so that a piecewise linear approximation
of $g$ suffices in estimating the functional distortion of the quantizer.
More specifically, recalling the notation $\{\beta_i\}_{i\in\I}$ for the quantizer points
and $\{S_i\}_{i\in\I}$ for the partition,
$$
  \gPL(x) = g(\beta_i) + g'(\beta_i)(x - \beta_i), \quad \mbox{for $x \in S_i$}, \quad i \in \I
$$
may be interpreted as  an approximation of $g$ that leads to the high-resolution
approximate distortion-resolution function.

The use of $\gPL$ prompts us to give a name to the magnitude of the
derivative of $g$.  The distortion is then expressed using this function.
\begin{definition}
  The \emph{univariate functional sensitivity profile} of $g$ is defined
  as $\gamma(x) = |g'(x)|$.
\end{definition}

\begin{theorem}
 \label{thm:single-distortion}
Suppose a source $X \in [0,1]$
is quantized by a sequence of companding quantizers $\{Q^{\lambda}_K\}$ with point density $\lambda(x)$
and increasing resolution $K$.
Further suppose that the source, quantizer, and function
$g: [0,1] \rightarrow \R$ satisfy Assumptions UF1--5\@.
Then the high-resolution distortion-resolution function is an
\emph{asymptotically accurate} approximation of the true distortion-resolution function:
\beq
  \label{eq:1dUnoptimizedDist}
d(K; \lambda) = \E{\var \left( g(X) \mid Q_{\lambda}^K(X) \right)}
  \sim \frac{1}{12K^2} \E{\left(\Frac{\gamma(X)}{\lambda(X)}\right)^2} = \dhr(K;\lambda) \mbox{.}
\eeq
\end{theorem}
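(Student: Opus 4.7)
The plan is to combine the optimal-reconstruction formula with a Taylor expansion of $g$ around each reconstruction point, reducing the functional distortion on a fine cell to a sensitivity-weighted version of the ordinary MSE calculation in Section~\ref{sec:BackgroundHighResolutionDistortion}. Using $\ghat(t)=\E{g(X)\mid X\in S_i}$ for $t\in S_i$, the law of total variance gives
\[
d(K;\lambda)=\sum_{i\in\I}\P{X\in S_i}\,\var\!\left(g(X)\mid X\in S_i\right),
\]
so the whole problem is to estimate the per-cell conditional variance as $K\to\infty$.

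First I would use Assumption UF2 to write, on each cell $S_i$ where $g$ is twice differentiable,
\[
g(x)=g(\beta_i)+g'(\beta_i)(x-\beta_i)+\tfrac{1}{2}g''(\xi_{i,x})(x-\beta_i)^2=\gPL(x)+r_i(x),
\]
with $|r_i(x)|\le M\,(x-\beta_i)^2$ for a bound $M$ on $|g''|$ away from the exceptional set. Since $\var(\gPL(X)\mid X\in S_i)=g'(\beta_i)^2\var(X\mid X\in S_i)$ and $\var(A+B)^{1/2}\le\var(A)^{1/2}+\var(B)^{1/2}$, the cross terms are controlled by $\var(r_i(X)\mid X\in S_i)\le M^2\,\length(S_i)^4$. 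Next, the standard high-resolution approximations HR1 and HR2 give $\var(X\mid X\in S_i)\sim\length(S_i)^2/12\sim \bigl(12K^2\lambda(\beta_i)^2\bigr)^{-1}$, so the dominant per-cell contribution is $g'(\beta_i)^2/(12K^2\lambda(\beta_i)^2)$ while the remainder contribution is $O(K^{-4}\lambda(\beta_i)^{-4})$, an order smaller.

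Then I would assemble the pieces and identify the Riemann-sum limit:
\[
d(K;\lambda)=\sum_{i\in\I}\P{X\in S_i}\,\frac{g'(\beta_i)^2}{12K^2\lambda(\beta_i)^2}\bigl(1+o(1)\bigr)
\sim\frac{1}{12K^2}\int_0^1 f_X(x)\,\frac{\gamma(x)^2}{\lambda(x)^2}\,dx,
\]
using HR1 and HR2 together with UF5 to conclude that the Riemann sum converges to the integral appearing in the theorem. The cells that straddle the finitely many points of the exceptional set of UF2 (or the discontinuities allowed by UF3/UF4) number at most a constant independent of $K$; each contributes distortion $O(K^{-2})$ by the same Taylor argument applied with a cruder pointwise bound, so their aggregate contribution is $O(K^{-2})\cdot O(1)$ and, after dividing by the total $\Theta(K^{-2})$, vanishes in the ratio.

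The main obstacle will be making the error terms uniform enough to establish the $\sim$ relation rather than merely an upper bound. Concretely, I need the Taylor remainder bound to be uniform over $i$ (requiring that $|g''|$ be bounded on the complement of the exceptional set, which follows from UF2 combined with compactness of $[0,1]$) and I need the Riemann sum to converge, which requires showing that $f_X(x)\gamma(x)^2/\lambda(x)^2$ is integrable; this is precisely UF5. Handling the isolated points where derivatives fail to exist amounts to a boundary-cell argument, and handling degenerate regions where $\lambda(x)$ or $\gamma(x)$ vanishes requires a separate dominated-convergence argument that the contributions of such regions are negligible. Once these estimates are in hand, the stated asymptotic equivalence is immediate.
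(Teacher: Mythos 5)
The central gap is your uniform second-derivative bound. Your per-cell estimate rests on a Lagrange remainder $|r_i(x)|\le M(x-\beta_i)^2$ with $M=\sup|g''|$, and you justify $M<\infty$ by ``UF2 combined with compactness of $[0,1]$.'' That inference is false: UF2 only says $g''$ is \emph{defined} off a set of zero Jordan measure, not that it is bounded, and compactness does not bound a function that is only defined off an exceptional set. For example $g(x)=x^{3/2}$ (with uniform $f$ and $\lambda$) satisfies UF1--UF5 but has $g''(x)=\tfrac{3}{4}x^{-1/2}$ unbounded, so no uniform $M$ exists and the remainder bookkeeping collapses. The paper's proof (via Theorem~\ref{thm:multi-distortion} and Appendix~\ref{app:distortion}) deliberately avoids second derivatives in the quantitative step: Lemma~\ref{lem:1Dlemma} sandwiches the per-cell variance between $\inf_{S_i}|g'|^2\,\length(S_i)^2/12$ and $\sup_{S_i}|g'|^2\,\length(S_i)^2/12$, Lemma~\ref{lem:boundDistOneCell} replaces your HR1 step by the first mean value theorem for integration, and the upper and lower sums are then both shown to converge to $\frac{1}{12K^2}\int_0^1 f_X\gamma^2\lambda^{-2}$ by Riemann integrability. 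Even granting a Taylor expansion on good cells, your ``order smaller'' claim is per cell only: the aggregated remainder and cross terms are of order $K^{-4}\E{\lambda(X)^{-4}}$ and $K^{-3}\E{\gamma(X)\lambda(X)^{-3}}$, and UF5 (finiteness of $\E{(\gamma/\lambda)^2}$) does not control these when $\lambda$ and $\gamma$ become small together, so negligibility of the remainder in the sum is not established.

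Your treatment of the exceptional cells is also wrong as stated. You say each bad cell ``contributes distortion $O(K^{-2})$'' and that finitely many such contributions vanish after dividing by the total $\Theta(K^{-2})$; but an aggregate error of order $K^{-2}$ is exactly the order of the main term, so the ratio would not tend to $1$. The correct accounting is that a bad cell has conditional variance $O(\length(S_i)^2)=O(K^{-2})$ (using boundedness of $g'$, not $g''$) \emph{and} probability $O(K^{-1})$, hence contributes $O(K^{-3})$ to the distortion. Moreover UF2 permits a zero-Jordan-measure exceptional set, not necessarily finitely many points, which is why the paper runs the squeeze through Riemann integrability of the indicator of the good set rather than a finite boundary-cell count. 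Finally, your per-cell claim $\var(X\mid X\in S_i)\sim\length(S_i)^2/12$ with a uniform $o(1)$ fails on cells where $f_X$ is near zero (the relative variation of $f_X$ within a cell need not be small there); this is exactly the uniformity issue you flag at the end but do not resolve, whereas the paper's inf/sup sandwich plus Riemann-sum convergence disposes of it without any per-cell asymptotics.
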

\begin{IEEEproof}
Follows as a special case of Theorem \ref{thm:multi-distortion}.
\end{IEEEproof}

\subsection{The Resolution-Rate Functions}
The relationship between resolution and rate in the functional
context is unchanged from the ordinary context.  
For a fixed-rate constraint, the resolution-rate function
is given by $\Kfr(R;\lambda) = \lfloor 2^R \rfloor$ and
is approximated at high-resolution by $\Kfrhr(R;\lambda) = 2^R$.
For a variable-rate constraint, the resolution-rate function
is given by the highest resolution such that the entropy of the
quantized output is less than the rate constraint.  This is approximated
as before by $\log \Kvrhr(R;\lambda) = R-h(X) - \E{\log \lambda(X)}$.
Both of these approximations continue to be asymptotically accurate,
regardless of the distortion measure in use.

\subsection{The Distortion-Rate Functions}
By combining the distortion-rate function with the resolution-rate function,
the high-resolution distortion-rate function can be obtained.  For fixed-rate,
\begin{subequations}
\beq
\Dfrhr(R;\lambda) = \frac{1}{12} \E{(\gamma(X)/\lambda(X))^2}2^{-2R} \mbox{,}
\label{eq:frDistortionRate1D}
\eeq
whereas for variable-rate,
\beq
\Dvrhr(R;\lambda) = \frac{1}{12}\E{(\gamma(X)/\lambda(X))^2}2^{-2(R-h(X)-\E{\log \lambda(X)})} \mbox{.}
\label{eq:vrDistortionRate1D}
\eeq
\end{subequations}

The asymptotic validity of these two expressions, as in \eqref{eq:asymptoticaccuracy}, 
holds as it did in the ordinary case.  For the fixed-rate expression, this follows from
Theorem \ref{thm:single-distortion} and the fact that $\lfloor 2^R \rfloor 2^{-R}$ approaches
1.  For the variable-rate expression, the error from use of $\Khr(R;\lambda)$ in the distortion-rate
expression instead of $K(R;\lambda)$ can be bounded as a multiplying factor of $2^{2|\Khr(R;\lambda) - K(R;\lambda)|}$,
which by Lemma \ref{lem:ResRateError} goes to 1.

\subsection{Asymptotically-Optimal Companding Quantizer Sequences}
We seek asymptotically-optimal companding quantizer sequences for fixed- and
variable-rate constraints under a functional distortion measure.  The lemma below
demonstrates that
it suffices to optimize the high-rate distortion-rate functions $\Dfrhr$ and $\Dvrhr$.

\begin{lemma}
\label{lem:singleOptimizationIsLegit}
Suppose $\lambda^*_{\rm fr}$ and $\lambda^*_{\rm vr}$ minimize
$\Dfrhr(R;\lambda)$ and $\Dvrhr(R;\lambda)$ respectively.  Then the quantizer
sequences $\{Q_{K}^{\lambda^*_{\rm fr}}\}$ and $\{Q_{K}^{\lambda^*_{\rm vr}}\}$ are asymptotically
fixed- and variable-rate optimal.
\end{lemma}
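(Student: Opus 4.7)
The plan is to mimic the proof of Lemma \ref{lem:optimizationIsLegit} almost verbatim, substituting the functional versions of asymptotic accuracy that were established earlier in this section. I will argue the variable-rate case, as the fixed-rate case is entirely analogous (and slightly easier, since the resolution-rate gap is controlled by $\lfloor 2^R \rfloor / 2^R \to 1$ without needing Lemma \ref{lem:ResRateError}).

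The key step is to factor, for any competing companding quantizer sequence $\{Q_K^\lambda\}$, the distortion ratio as
\[
\frac{\Dvr(R;\lambda^*_{\rm vr})}{\Dvr(R;\lambda)} =
\frac{\Dvr(R;\lambda^*_{\rm vr})}{\Dvrhr(R;\lambda^*_{\rm vr})} \cdot
\frac{\Dvrhr(R;\lambda^*_{\rm vr})}{\Dvrhr(R;\lambda)} \cdot
\frac{\Dvrhr(R;\lambda)}{\Dvr(R;\lambda)} \mbox{.}
\]
Taking $\limsup$s and using the fact that the $\limsup$ of a product of positive sequences is bounded by the product of the individual $\limsup$s, the task reduces to bounding each of the three factors on the right.

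The middle factor is at most $1$ for every $R$ by the assumed optimality of $\lambda^*_{\rm vr}$, so its $\limsup$ is at most $1$. The two outer factors each converge to $1$ by the asymptotic accuracy of $\Dvrhr$ asserted immediately after \eqref{eq:vrDistortionRate1D}: specifically, Theorem \ref{thm:single-distortion} handles the distortion-resolution approximation, and Lemma \ref{lem:ResRateError} controls the resolution-rate approximation, since the substitution of $\Khr(R;\lambda)$ for $K(R;\lambda)$ in the distortion expression introduces only the multiplicative factor $2^{2|\log \Khr(R;\lambda) - \log K(R;\lambda)|} \to 1$. Combining gives the desired bound $\limsup_{R\to\infty} \Dvr(R;\lambda^*_{\rm vr})/\Dvr(R;\lambda) \leq 1$ for every competing $\lambda$, which is the definition of asymptotic variable-rate optimality.

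There is no real obstacle here beyond ensuring the asymptotic accuracy invoked is indeed available in the functional setting, and that is guaranteed by Theorem \ref{thm:single-distortion} together with the resolution-rate analysis of the preceding subsection. The substantive content of the result lives in those two ingredients; this lemma is essentially a bookkeeping observation that asymptotic optimization of the high-resolution proxy yields asymptotic optimality of the true distortion-rate function.
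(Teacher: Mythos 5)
Your proposal is correct and follows essentially the same route as the paper, which simply declares the proof virtually identical to that of Lemma~\ref{lem:optimizationIsLegit}: the same three-factor decomposition of the distortion ratio, the middle factor bounded by $1$ via optimality of the high-resolution proxy, and the outer factors converging to $1$ by the asymptotic accuracy supplied by Theorem~\ref{thm:single-distortion} and the resolution-rate analysis.
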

\begin{IEEEproof}
The proof is virtually identical to that of Lemma \ref{lem:optimizationIsLegit}.
\end{IEEEproof}

The distortion expression \eqref{eq:1dUnoptimizedDist} bears strong resemblance to \eqref{eq:unoptimizedHrDist}, but with the
probability density $f_X(x)$ replaced with a \emph{weighted density}
$\gamma^2(x)f_X(x)$.  Unlike the density $f_X(x)$, the weighted density $\gamma^2(x)f_X(x)$ need not
integrate to one.
Optimal point densities and the resulting distortions now follow easily.

For fixed-rate coding, we are attempting to minimize the distortion
\eqref{eq:1dUnoptimizedDist} for a given value of $K$.
Following the arguments in Section~\ref{sec:review-optimal},
the optimal point density is proportional to
the cube root of the weighted density:
\beq
 \label{eq:1dFixedLambda}
  \lambda^*_{\rm fr}(x) = \frac{\left(\gamma^2(x)f_X(x)\right)^{1/3}}
                    {\int_0^1 \left(\gamma^2(t)f_X(t)\right)^{1/3} \, dt} \mbox{.}
\eeq
The admissibility of this point density (assumption UF5) requires
positivity of $\lambda(x)$ everywhere $f_X$ is positive.
This excludes the possibility that $\gamma(x) = 0$ for
an interval $x \in (a,b)$ such that $\P{X \in (a,b)} > 0$
because in this case the quantization is not fine for $X \in (a,b)$.
We revisit this restriction in Section~\ref{sec:DontCare}.
By evaluating \eqref{eq:frDistortionRate1D} with point density
\eqref{eq:1dFixedLambda}, the resulting distortion is
\beq
 \label{eq:1dFixedDist}
   \Dfrhr(R) = \Dfrhr(R;\lambda^*_{\rm fr}) = \frac{1}{12} \left\| \gamma^2 f_X \right\|_{1/3} 2^{-2R} \mbox{.}
\eeq

For variable-rate coding, a derivation very similar to that of ordinary variable-rate quantization
may be performed.  This yields an optimal point density that is proportional to
the functional sensitivity profile:
\beq
 \label{eq:1dOptimalVariableLambda}
  \lambda^*_{\rm vr}(x) = \frac{\gamma(x)}{\int_0^1 \gamma(t) \, dt} \mbox{.}
\eeq
The restriction for $\lambda$ to be positive wherever $f_X$ is positive
takes the same form as above (assumption UF5). The resulting distortion is
\beq
 \label{eq:1dVariableDist}
   \Dvrhr(R) = \Dvrhr(R;\lambda^*_{\rm vr}) = \frac{1}{12}  2^{2h(X)+2\E{\log \gamma(X)}} \, 2^{-2R} \mbox{.}
\eeq

The example below shows that even for univariate functions,
there are benefits from functional quantization.
It also illustrates the difference between the fixed- and variable-rate cases.
While quantizing $X$ instead of $g(X)$ seems na{\"i}ve,
as we move to the distributed multivariate case it will not be possible
to compute the function before quantization.

\smallskip

\begin{example}
\label{ex:single}
Suppose $X$ is uniformly distributed over $[0,1]$ and $g(x) = x^2$.
For both fixed- and variable-rate, the optimal ordinary quantizer is uniform,
i.e., $\lord = 1$.
With $\gamma(x) = 2x$, evaluating \eqref{eq:frDistortionRate1D} gives
$\Dfrhr(R;\lord) = \Dvrhr(R;\lord) = \frac{1}{9} 2^{-2R} \approx 0.111 \cdot 2^{-2R}$.

The optimal point density for fixed-rate functional quantization is
$\lfr^*(x) = \frac{5}{3}x^{2/3}$ and yields distortion
\[
\Dfrhr(R) = \frac{1}{12}\| (2x)^2 \|_{1/3} \cdot 2^{-2R}
    = \frac{9}{125} 2^{-2R}
    \approx 0.072 \cdot 2^{-2R} \mbox{.}
\]
The optimal point density for variable-rate functional quantization is
$\lvr^*(x) = 2x$.
With $h(X) = 0$ and $\E{\log \gamma(X)} = 1 - 1/(\ln 2)$,
the resulting distortion is
\[ 
\Dvrhr(R) = \frac{1}{12} \cdot 4e^{-2} \cdot 2^{-2R}
   \approx 0.045 \cdot 2^{-2R} \mbox{.}
\]
Quantizers designed with the three derived optimal point densities
are illustrated in Fig.~\ref{fig:single-example} for rate $R = 4$.
The functionally-optimized quantizers put more points at higher
values of $x$, where the function varies more quickly.
In addition, the variable-rate quantizer is allowed more points
($K = 21$) while meeting the rate constraint.

\begin{figure}
  \centering
    \psfrag{a}[l][l]{\small $\lord$}
    \psfrag{b}[l][l]{\small $\lfr$}
    \psfrag{c}[l][l]{\small $\lvr$}
    \psfrag{0}[][t]{\small 0}
    \psfrag{1}[][t]{\small 1}
    \includegraphics[width=3in]{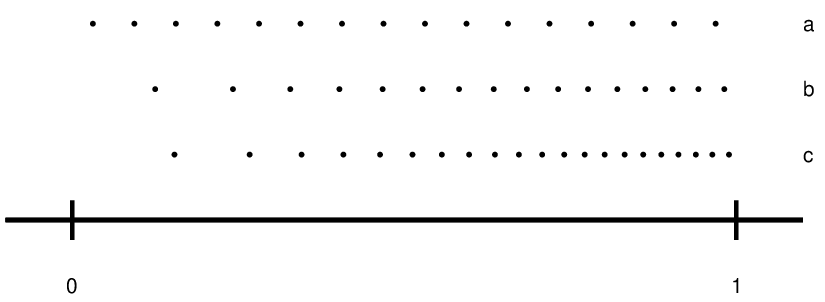}
  \caption{Quantizer points illustrating the point densities derived in
    Example~\ref{ex:single} at rate $R=4$.}
  \label{fig:single-example}
\end{figure}

The interested reader can verify that $\Dfrhr(R)$ and $\Dvrhr(R)$
exactly match the performance obtained by designing optimal
quantizers for $Y = X^2$. \hfill $\Box$
\end{example}

\smallskip

In the second example, we use a nonuniform source pdf with the
same nonlinear function $g$ to illustrate various quantities.

\smallskip

\begin{example}
\label{ex:single-2}
Suppose $X$ has the pdf $f_X(x) = 3x^2$ over $[0,1]$ and $g(x) = x^2$.
We illustrate a codebook-constrained quantizer with rate $R=2$
designed with the high-resolution analysis.

By evaluating \eqref{eq:1dFixedLambda},
the asymptotically-optimal point density for
fixed-rate functional quantization is
$\lfr^*(x) = \frac{7}{3}x^{4/3}$.
Integrating the point density gives the corresponding compander function
$\wfr^*(x) = x^{7/3}$.
As shown in the top panel of Fig.~\ref{fig:single-example-2},
the points are given by
$$
  \beta_i = {\wfr^*}^{-1}((2i-1)/8),
\qquad
i = 1,\ 2,\ 3,\ 4,
$$
and the cell boundaries are given by 
${\wfr^*}^{-1}(\{0,\, 1/4,\, 1/2,\, 3/4,\, 1\})$.
The middle panel shows $f_X$ and an approximation $\widehat{f}_X$ that is constant
on each cell of the quantizer.
The bottom panel shows $g$ and the approximation $\gPL$,
which is linear on each cell of the quantizer and tangent to $g$
at each point. \hfill $\Box$
\end{example}

\begin{figure}
  \centering
    \psfrag{x}[][]{\small $x$}
    \psfrag{w}[b][]{\small $\wfr^*$}
    \psfrag{f}[b][t]{\small $f_X$}
    \psfrag{g}[b][]{\small $g$}
    \includegraphics[width=3in]{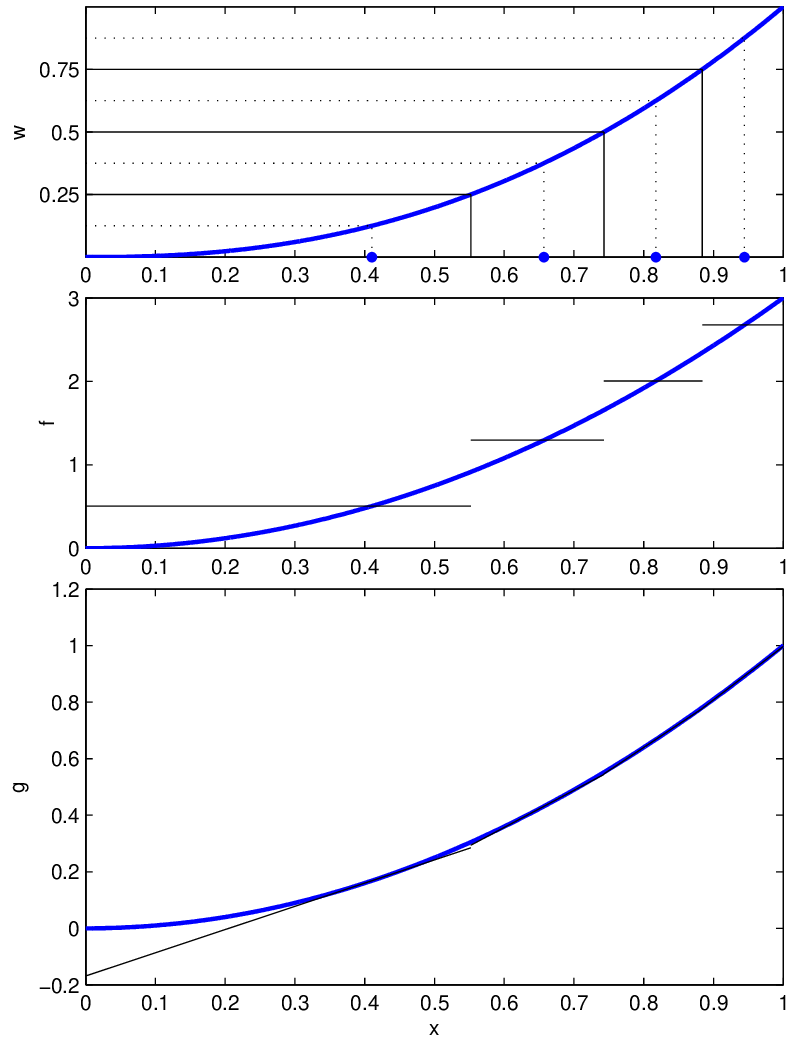}
  \caption{Illustrations for Example~\ref{ex:single-2}.
    Top panel:  points and cell boundaries of the quantizer are
          determined by the companding function $\wfr^*$.
    Middle panel:  source pdf $f_X$ and its piecewise constant approximation.
    Bottom panel:  function $g$ and its piecewise linear approximation.}
  \label{fig:single-example-2}
\end{figure}

\smallskip

Referring to Fig.~\ref{fig:single-example-2} for examples,
the high-resolution distortion-resolution function
$\dhr(K;\lambda)$ can be thought of as a computation
of the MSE of $\gPL$ when the source with piecewise constant pdf $\widehat{f}_X$
is quantized with companding quantizer employing compander $\wfr^*$.
In this case $\ghat$, the optimal function estimate,
is given by evaluating $\gPL$ at the center of the cell containing
the source variable.
Informally, as resolution $K$ increases,
$\widehat{f}_X \rightarrow f$,
$\gPL \rightarrow g$,
and the centers of the cells approach the corresponding quantizer points.
These intuitions extend to multivariate functions as well,
but our formal justifications in Section~\ref{sec:Multi}
use techniques that do not explicitly form approximations
$\widehat{f}_X$ or $\gPL$.


\subsection{Discontinuous Functions}
\label{sec:single-discontinuous}
Our main result on univariate functional quantization,
Theorem~\ref{thm:single-distortion},
assumes the continuity of $g$.
One can effectively sidestep this assumption,
but doing so requires the quantizer to be described more precisely
than by a point density function alone.

For simplicity, assume $f_X$ is strictly positive on $[0,1]$.
Suppose we were to allow $g$ to have a point of discontinuity $x_0 \in (0,1)$
with
$$
  c_0 = \lim_{\delta \rightarrow 0} |g(x_0+\delta) - g(x_0-\delta)| > 0.
$$
The difficulty that arises is that if $x_0$ is an interior point of
a partition cell $S_i$, this cell produces a component of the
functional distortion proportional to $c_0^2 \P{X \in S_i}$.
Since $c_0^2 \P{X \in S_i} = \Theta(K^{-1})$, 
it is not negligible in comparison to the (best case) $\Theta(K^{-2})$
functional distortion.
Thus having a point of discontinuity of $g$ in the interior of a
partition cell disrupts the asymptotic distortion calculation
\eqref{eq:1dUnoptimizedDist}.

The representation of quantizers by number of levels $K$ and
point density function $\lambda$ cannot prevent a point
of discontinuity from falling in the interior of a partition cell.
However, if we augment the description of the quantizer with
specified partition boundaries, we can still obtain the distortion estimate
\eqref{eq:1dUnoptimizedDist}.

\begin{corollary}
 \label{cor:single-discontinuous}
Suppose a companding quantizer sequence for a source $X \in [0,1]$
is described by point density function $\lambda(x)$.
Further suppose that the source, quantizer, and function
$g: [0,1] \rightarrow \R$ satisfy Assumptions UF1--5
with the exception of discontinuities at $M$ points $\{x_m\}_{m=1}^M$.
Then a quantizer sequence obtained by adding partition cell boundaries
at $\{x_m\}_{m=1}^M$ will have distortion
$$
d_g = \E{(g(X)-g(\Xhat))^2}
  \sim \frac{1}{12K^2} \E{\left(\Frac{\gamma(X)}{\lambda(X)}\right)^2} \mbox{.}
$$
\end{corollary}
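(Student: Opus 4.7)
The plan is to reduce the corollary to Theorem~\ref{thm:single-distortion} applied on each of the $M+1$ subintervals on which $g$ is continuous, and to show that the "boundary" cells introduced at the discontinuities contribute only lower-order terms to the distortion.

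First I would fix a companding quantizer sequence $\{Q_K^\lambda\}$ with the stipulated modification: for each $K$, insert additional partition boundaries at the discontinuity points $x_1,\ldots,x_M$. This replaces at most $M$ of the original cells with $2M$ smaller cells, leaving the remaining cells unchanged. Because each inserted boundary lies inside an original cell of length $\Theta(K^{-1})$, the two resulting sub-cells also have length at most $\Theta(K^{-1})$. Let $\I_K^{\rm bd}$ denote this set of "modified" cells (at most $2M$ of them) and $\I_K^{\rm cl}$ denote the "clean" cells, on each of which $g$ is continuous.

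Next I would split the distortion via the law of total expectation into a contribution from clean cells and a contribution from boundary cells:
\[
d_g = \sum_{i \in \I_K^{\rm cl}} \E{(g(X)-\ghat(\Xhat))^2 \mid X \in S_i} \P{X \in S_i}
\;+\; \sum_{i \in \I_K^{\rm bd}} \E{(g(X)-\ghat(\Xhat))^2 \mid X \in S_i} \P{X \in S_i}.
\]
For the boundary term, I would use that on any cell $S_i \in \I_K^{\rm bd}$ of length $\Theta(K^{-1})$ the quantity $|g(X)-\ghat(\Xhat)|$ is bounded (since $g$ is bounded on $[0,1]$ by continuity away from finitely many jumps of finite size), while $\P{X \in S_i} = O(K^{-1})$ by boundedness of $f_X$. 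Hence this sum is $O(M \cdot K^{-1}) \cdot O(1) = O(K^{-1})$ at worst --- but in fact, since $g$ is now continuous on each $S_i \in \I_K^{\rm bd}$ by construction, the usual piecewise-linear approximation on that cell yields a per-cell contribution of $O(K^{-3})$, so the total boundary contribution is $O(MK^{-3}) = o(K^{-2})$.

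For the clean term I would apply the argument of Theorem~\ref{thm:single-distortion} within each of the $M+1$ subintervals $I_m = [x_{m-1},x_m]$ (with $x_0=0$, $x_{M+1}=1$), on which $g$ satisfies the smoothness assumptions UF1--UF5. The number of clean cells inside $I_m$ is $\sim K \int_{I_m} \lambda(x)\,dx$, and the usual Riemann-sum argument shows
\[
\sum_{i \in \I_K^{\rm cl},\; S_i \subset I_m} \E{(g(X)-\ghat(\Xhat))^2 \mid X \in S_i} \P{X \in S_i}
\;\sim\; \frac{1}{12 K^2} \int_{I_m} \frac{\gamma(x)^2}{\lambda(x)^2} f_X(x)\, dx.
\]
Summing over $m$ reconstructs the integral over $[0,1]$ (the discontinuity points form a measure-zero set and contribute nothing to the integral), yielding the claimed asymptotic. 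Finally I would note that the inserted boundaries add at most $M$ cells, so the total resolution $K+O(1)$ is asymptotically equivalent to $K$, and the distortion-resolution relation is preserved.

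The main obstacle is ensuring that the boundary cells contribute $o(K^{-2})$. The subtle point is that an inserted boundary at $x_m$ can land arbitrarily close to an existing cell edge, producing a sliver-cell with length much smaller than $K^{-1}$; this is not an issue since smaller cells only reduce the contribution. The reverse worry --- that a boundary cell could be forced to be larger --- does not occur, because we only add boundaries, never remove them. Carefully verifying that $g$ restricted to each boundary cell is bounded and continuous (so that the $\Theta(K^{-2})$-per-cell bound survives in spite of the discontinuities being at the cell endpoints, not interior points) is the one place where the discontinuity assumption really bites, and this is exactly what the partition-boundary augmentation is designed to fix.
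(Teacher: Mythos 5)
Your proposal is correct and follows essentially the same route as the paper, whose proof is simply to apply Theorem~\ref{thm:single-distortion} separately on each of the $M+1$ subintervals of continuity created by the added boundaries. Your explicit splitting-off of the cells adjacent to the discontinuity points and the $O(MK^{-3})$ bound on their contribution is just a fleshed-out version of what that one-line reduction leaves implicit.
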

\begin{IEEEproof}
This follows from Theorem~\ref{thm:single-distortion} applied separately to each of 
the subintervals where $g$ is continuous.
\end{IEEEproof}

In the sequel, we will not consider discontinuous functions.
The multivariate extension of
Corollary~\ref{cor:single-discontinuous}
requires points of discontinuity to be in the Cartesian product
of finite sets of discontinuity for each variable.
Such separable sets of points of discontinuity are not general
and can be handled rather intuitively.

\section{Multivariate Functional Quantization}
\label{sec:Multi}
With Section~\ref{sec:Single} as a warm-up, we may now establish the
central results of distributed functional quantization.
\subsection{Definitions}
\label{sec:multi-definitions}
An $n$-dimensional \emph{distributed companding quantizer} $Q^{\wB}_\KB$
is specified by $n$ companding functions $\wB = (w_1,w_2,\ldots,w_n)$
and an $n$-vector of resolutions $\KB = (K_1,K_2,\ldots,K_n)$.  When applied to an $n$-tuple $x_1^n \in[0,1]^n$,
$Q^{\wB}_\KB$ quantizes each component $x_j$ of $x_1^n$ separately with compander
$w_j$ and resolution $K_j$:
\[
Q^{\wB}_\KB(x_1^n) = 
\left(
			Q^{w_1}_{K_1}(x_1),  
			Q^{w_2}_{K_2}(x_2),
			\ldots,
			Q^{w_n}_{K_n}(x_n)
\right)
\mbox{.}
\]
A distributed companding quantizer may equivalently be specified  by 
$n$ point density functions $\lB = (\lambda_1,\lambda_2,\ldots,\lambda_n)$, 
in which case it is denoted by $Q^{\lB}_\KB$.

An estimation function $\ghat:[0,1]^n \rightarrow \R$ estimates the value of 
$g(X_1^n)$ from the quantized representation $Q^{\lB}_\KB(X_1^n)$.
The distortion of a distributed quantizer paired with an estimator $\ghat$ is given by
the distortion-resolution function
\[d_{\ghat}(\KB;\lB) = \iE{| g(X_1^n) - \ghat(Q^{\lB}_\KB(X_1^n)) |^2} \mbox{.} \]
In this paper, use of the optimal estimator
\[ \ghat(x_1^n) = \iE{g(X_1^n) \mid Q^{\lB}_\KB(X_1^n) = Q^{\lB}_\KB(x_1^n)} \]
will be indicated by omitting the subscript: $d(\KB;\lB)$.


The \emph{rate} $R$ of a distributed quantizer takes on three different meanings.  
A \emph{fixed-rate} constraint limits the total resolution $K = \prod_{j=1}^n K_j \leq 2^R$,
and we assume that the $j$th quantizer communicates to the decoder with rate $R_j = \log K_j$.  
A \emph{variable-rate}
(marginal entropy) constraint limits the sum of the marginal entropies 
$\sum_{j=1}^n H(Q^{\lambda_j}_{K_j}(X_j)) \leq R$, and we assume that the $j$th quantizer
utilizes entropy-coding to the decoder to attain rate $R_j = H(Q^{\lambda_j}_{K_j}(X_j))$.  
A \emph{Slepian--Wolf}
(joint entropy) constraint limits the joint entropy $H(Q^{\lB}_K(X_1^n)) \leq R$, and
we assume that the $j$th quantizer utilizes Slepian-Wolf coding to the decoder to attain
rate $R_j = H(Q^{\lambda_j}_{K_j} (X_j) \mid Q^{\lambda^{j-1}}_{K^{j-1}}(X^{j-1}))$, where
$Q^{\lambda^{j-1}}_{K^{j-1}}(X^{j-1})$ is used to represent 
$(Q^{\lambda_1}_{K_1}(X_1), \ldots, Q^{\lambda_{j-1}}_{K_{j-1}}(X_{j-1}))$.
Note that the choice of this particular point on the Slepian-Wolf rate boundary is arbitrary.
The resulting performance is measured by the distortion-rate functions 
\[ \Dfr(R;\lB) = \min_{\KB: \prod_{j=1}^n K_j \leq 2^{R}} d(\KB;\lB) \mbox{,} \]
\[\Dvr(R;\lB) = \min_{\KB: \sum_{j=1}^n H(Q^{\lambda_j}_{K_j}(X_j)) \leq R} d(\KB;\lB) \mbox{,}\]
and 
\[\Dsw(R;\lB) = \min_{\KB: \sum_{j=1}^n H(Q^{\lambda_j}_{K_j}(X_j) \mid Q^{\lambda^{j-1}}_{K^{j-1}}(X^{j-1})) \leq R}  d(\KB;\lB) \mbox{.} \]

A quantizer point density $\lB^*$ is \emph{asymptotically better} than another 
$\lB$ under a fixed-rate, variable-rate, or Slepian--Wolf constraint if
the ratio of the distortion-rate functions is at most one:
\begin{subequations}
\label{eq:asymptoticAccuracyMulti}
\beqa
\lim_{R\rightarrow \infty} \frac{D_{\rm fr}(R;\lB^*)}{D_{\rm fr}(R; \lB)} 
& \leq & 1 \mbox{,} \\
\lim_{R\rightarrow \infty} \frac{D_{\rm vr}(R;\lB^*)}{D_{\rm vr}(R; \lB)} 
& \leq & 1 \mbox{, or} \\
\lim_{R\rightarrow \infty} \frac{D_{\rm sw}(R;\lB^*)}{D_{\rm sw}(R; \lB)} 
& \leq & 1 \mbox{.}
\eeqa
\end{subequations}
If $\lB$ is asymptotically better than any other distributed quantizer sequence,
it is \emph{asymptotically optimal}.

\subsection{Problem Statement}
\label{sec:multi-assumptions}
Let $X_1^n$ be a random vector with joint pdf $f_{X_1^n}(x_1^n)$ defined over $[0,1]^n$,
and let $g: [0,1]^n\rightarrow \R$ be the function of interest.
A distributed companding quantizer $\{Q^{\lB}_\KB\}$
is applied to $X_1^n$.  Equivalently, a companding quantizer
$Q^{\lambda_j}_{K_j}$ is applied to each component of the source $X_j$.
The decoder then forms an estimate $\ghat(Q^{\lB}_{\KB}(X_1^n))$,
where $\ghat(Q^{\lB}_{\KB}(X_1^n)) = \iE{g(X_1^n) \mid Q^{\lB}_K(X_1^n)}$ is the optimal estimation function.
Distortion is measured by squared error in the function 
$D = \iE{(g(X_1^n)-\ghat(Q^{\lB}_{K}(X_1^n)))^2}$, which
for the optimal estimator reduces to $D = \E{\var \left( g(X_1^n) \mid Q^{\lB}_K(X_1^n) \right)}$.
Fig.~\ref{fig:dfsc-block} depicts this scenario, with $R_j = \log K_j$
in the fixed-rate case, $R_j = H(Q^{\lambda_j}_{K_j})$ in the variable-rate case,
and $R_j = H(Q^{\lambda_j}_{K_j}(X_j) \mid Q^{\lambda^{j-1}}_{K^{j-1}}(X^{j-1}))$ in the
Slepian-Wolf case.
We wish to choose $\lB$ to be asymptotically optimal.

As in Section~\ref{sec:Single}, we will impose restrictions on the function $g$
and the joint probability distribution function of $X_1^n$
so that a local affine approximation is effective.  

\begin{enumerate}
\item[MF1.] $g$ is Lipschitz continuous,
and the first and second derivatives of $g$ are defined except possibly on a set
of zero Jordan measure.
\item[MF2.] The source pdf $f$ is continuous and supported on $[0,1]^n$, and is therefore bounded.
\item[MF3.] We optimize among companding functions $w_j$ that are
piecewise differentiable (and therefore a point density description $\lambda_j$ is appropriate).
\item[MF4.] Letting $g_j(x_1^n)$ denote $\Frac{\del g(x_1^n)}{\del x_j}$, the integrals 
\[
\int_0^1 f(x_j)\E{\left| g_j(X^n)\right|^2 \mid X_j = x_j}\lambda_j(x_j)^{-2} \, dx_j
\]
are defined and positive for all $j \in \{1,\,2, \,\ldots,n\}$.
\end{enumerate}
Constraints MF1--MF4 are more restrictive than they need to be,
but this helps in simplifying proofs.  For instance, condition MF4 
guarantees that every source variable must be finely quantized
for distortion to approach zero.  If this is violated for the $j$th source
variable, it merely implies that a finite-resolution quantization of $X_j$
suffices.

Note that there is no analogue to the monotonicity assumption UF1 in
the multivariate case.  It can be shown that if $g$ is monotonic in each
of its variables the optimal fixed-rate distributed quantizer is regular.
With the added restriction that the source variables be independent,
it can be shown that the optimal variable-rate distributed quantizer is also regular,
via techniques similar to those of \cite{GyorgyL2002}.  Rather than constraining
the function $g$ and the source pdf $f$ in this manner, however,
assumption MF3 explicitly restricts optimization to the space of regular companding quantizer
sequences, regardless of whether regularity is optimal.
In Sec. \ref{sec:NonMono} it is shown that nonregular companding quantizer sequences
are asymptotically suboptimal for a wide variety of functions $g$, giving this constraint some validity.

\subsection{High-Resolution Analysis}

\subsubsection{The Distortion-Resolution Function}
Our main technical task in finding the optimal quantizers is to justify an
approximation of the distortion in terms of point density functions.
Since the quantization is distributed, our concept of functional sensitivity
is now extended to each variable separately, with averaging performed over
the remaining variables.
\begin{definition}
The $j$th \emph{functional sensitivity profile} of $g$ is defined as
\beq
  \gamma_j(x) = \left(\E{\left|{g_j(X_1^n)}\right|^2 \mid X_j = x}\right)^{1/2}\mbox{.}
\eeq
\end{definition}
\begin{theorem}
 \label{thm:multi-distortion}
Suppose $n$ sources $X_1^n \in [0,1]^n$
are quantized by a distributed companding quantizer
$Q^{\lB}_\KB$, and suppose that the source, quantizers, and function
$g: [0,1]^n \rightarrow \R$ satisfy assumptions MF1--4\@.
Let $d(\KB;\lB) = \E{\var \left( g(X_1^n)\mid Q^{\lB}_\KB(X_1^n) \right) }$ denote
the true distortion-resolution function, and let
$d^{\rm HR}$ denote the high-resolution approximate distortion-resolution function:
\beq
d^{\rm HR}(\KB;\lB) = \sum_{j=1}^n \frac{1}{12K_j^2} \E{\left(\frac{\gamma_j(X_j)}{\lambda_j(X_j)}\right)^2} \mbox{.}
  \label{eq:ndUnoptimizedDist}
\eeq
Then $d(\KB;\lB) \sim d^{\rm HR}(\KB;\lB)$, where $\sim$ indicates that the
ratio of the two quantities approaches one as the smallest element of the vector
$\KB$ grows without bound.
\end{theorem}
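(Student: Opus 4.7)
The plan is to localize on each product quantization cell, linearize $g$ by a first-order Taylor expansion around a cell reference point, and reduce the multivariate distortion to a sum of marginal variance contributions that each admit the univariate high-resolution estimate already established. Writing
\[
  d_g(K;\lB,\aB) = \sum_{S} \P{X_1^n \in S}\,\var(g(X_1^n)\mid X_1^n\in S),
\]
where $S$ ranges over product cells $S_{i_1}\times\cdots\times S_{i_n}$ of $Q^{\lB,\aB}_K$, I fix such an $S$ with centroid $c$ and apply MF1 to obtain
\[
  g(x_1^n) = g(c) + \sum_{j=1}^n g_j(c)(x_j-c_j) + R(x_1^n;c),
\]
with $|R(x_1^n;c)| \leq C\|x_1^n-c\|^2$ off a Jordan-null bad set. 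The constant drops from the conditional variance, and the linearization contributes
\[
  \sum_{j=1}^n g_j(c)^2 \var(X_j\mid S) + 2\sum_{j<k} g_j(c)g_k(c)\,\mathrm{cov}(X_j,X_k\mid S).
\]

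For the diagonal terms I would invoke the univariate high-resolution analysis of Sec.~\ref{sec:BackgroundHighResolutionDistortion}, applied to the conditional marginal of $X_j$ on $S_{i_j}$, giving $\var(X_j\mid S)\sim \tfrac{1}{12}(K^{\alpha_j}\lambda_j(c_j))^{-2}$. For the cross terms I would argue that the continuity of $f$ (MF2) forces its oscillation across $S$ to vanish with the cell diameter, so conditionally on $S$ the joint density approaches a product of uniform marginals; the conditional correlation of $X_j$ and $X_k$ therefore tends to zero, yielding $\mathrm{cov}(X_j,X_k\mid S)=o(\sigma_j\sigma_k)$ and placing the cross contributions strictly below the diagonal order.

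Aggregating the cell-level estimates, weighted by $\P{X\in S}$, converts the resulting sum into a Riemann approximation and gives
\[
  d_g(K;\lB,\aB) \sim \sum_{j=1}^n \frac{1}{12 K^{2\alpha_j}} \int_{[0,1]^n}\frac{g_j(x_1^n)^2}{\lambda_j(x_j)^2}\,f(x_1^n)\,dx_1^n.
\]
By the tower rule the inner integral equals
\[
 \int_0^1 \frac{f_{X_j}(x_j)}{\lambda_j(x_j)^2}\,\E{g_j(X_1^n)^2\mid X_j=x_j}\,dx_j = \E{(\gamma_j(X_j)/\lambda_j(X_j))^2}
\]
by the definition of $\gamma_j$, so the right-hand side is \eqref{eq:ndUnoptimizedDist}.

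Three routine cleanups remain: the Taylor remainder contributes at most $O(K^{-4\min_j\alpha_j})$, dominated by the leading $\Theta(K^{-2\min_j\alpha_j})$ diagonal term; the finitely many cells meeting the Jordan-null bad set of $g$ contribute asymptotically negligibly by MF4; and the floor functions $\lfloor K^{\alpha_j}\rfloor$ are absorbed via $\lfloor K^{\alpha_j}\rfloor/K^{\alpha_j}\to 1$. The main obstacle is the cross-covariance estimate, since the crude bound $|\mathrm{cov}|\leq \sigma_j\sigma_k$ only yields $O(K^{-\alpha_j-\alpha_k})$, which ties the diagonal order when $\alpha_j=\alpha_k$; one has to exploit continuity of $f$ (not merely boundedness) to show that the conditional correlation itself tends to zero uniformly across cells.
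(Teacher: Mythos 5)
Your route (cell-wise Taylor linearization around the centroid plus control of the conditional cross-covariances) is genuinely different from the paper's, but two of its load-bearing steps are not supplied by MF1--MF4, and one of them, as stated, fails. First, the remainder bound $|R(x_1^n;c)|\le C\|x_1^n-c\|^2$ requires second derivatives that are uniformly bounded along the segment from $c$ to $x_1^n$; MF1 only says second derivatives exist off a set of zero Jordan measure and says nothing about their size, and the segment may meet that set, so the quadratic remainder estimate is not available from the stated hypotheses. Second, and more seriously, both your diagonal estimate $\var(X_j\mid S)\sim\frac{1}{12}(K^{\alpha_j}\lambda_j(c_j))^{-2}$ and your cross-term claim $\mathrm{cov}(X_j,X_k\mid S)=o(\sigma_j\sigma_k)$ rest on the conditional law on a shrinking cell being close to a product of uniforms, \emph{uniformly over cells}. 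Continuity of $f$ (MF2) controls only the \emph{absolute} oscillation of $f$ on a cell; near points where $f$ is small the \emph{relative} oscillation need not vanish, so the conditional law need not approach uniformity there and the uniform correlation claim is unjustified---precisely at the step you yourself identify as the crux, since Cauchy--Schwarz leaves the cross terms at the same order as the diagonal when $\alpha_j=\alpha_k$. This can likely be patched (cells where $f$ is below $\epsilon$ contribute a vanishing share of the limiting integral, and elsewhere the relative oscillation does vanish), but that argument is absent. Two smaller slips: the cells meeting the bad set are not ``finitely many''---Jordan measure zero gives only that their total volume fraction vanishes---and their negligibility comes from that fact together with the bounded gradient, not from MF4.

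For contrast, the paper's proof never linearizes and never generates cross terms. On each cell the conditional variance under a \emph{uniform} law is decomposed exactly, by repeated application of the law of total variance, into $n$ univariate variances of reduced-dimension functions $\E{g(\widetilde{X}_1^n)\mid \widetilde{X}_1^j}$, and each of these is sandwiched between $a_j^2\Delta_j^2/12$ and $b_j^2\Delta_j^2/12$ using only the cell-wise infimum and supremum of $|g_j|$ (the coupling argument of Lemma~\ref{lem:1Dlemma}, which needs only first derivatives defined a.e.\ for the upper bound and everywhere for the lower bound). The actual nonuniform conditional density is then handled exactly, not approximately, via the first mean value theorem for integrals, producing factors $f(\chi_{i_1^n})$ and $f(\xi_{i_1^n})$, and the two resulting Riemann sums squeeze to the same limit because $\mathbf{1}_A f\, g_j^2/\lambda_j^2$ is Riemann integrable. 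In this way the issues you would still have to repair---relative oscillation of $f$ near its small values, second-derivative bounds for the Taylor remainder, and the cross-covariances---are structurally avoided rather than estimated.
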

\begin{IEEEproof}
See Appendix~\ref{app:distortion}.
\end{IEEEproof}

\subsubsection{Connecting Resolution to Rate}
To convert the distortion-resolution function to a distortion-rate function, we first introduce
a slight generalization of the high-resolution resolution-rate relationship.
\begin{lemma}
\label{lem:ResRateErrorSW}
If the source $X_1^n$ has a density over $[0,1]^n$ with finite differential entropy
$h(X_1^n)$ and
if $\E{\log \lambda_j(X_j)}$ is finite for all $j \in \{1,\ldots,n\}$, 
then as each component of the resolution vector $\KB$ diverges,
\[
 H\left(Q_{\KB}^{\lB}(X_1^n)\right) - \sum_{j=1}^n \log K_i \rightarrow h(X_1^n) + \sum_{j=1}^n\E{\log\lambda_j(X_j)} \mbox{.}
\]
\end{lemma}
\begin{IEEEproof}
Suppose $W = (w_1(x_1),w_2(x_2),\ldots,w_n(x_n))$
is an $n$-dimensional companding function that is applied to the source $X_1^n$ prior to
quantization by a rectangular lattice quantizer $Q^U$ with side length $K_j^{-1}$ on the $j$th side,
 and furthermore suppose $W^{-1}$ is then applied to estimate
the source.  The output of this quantization process $W^{-1}(Q^U(W(X_1^n)))$ is 
identical to the scenario we consider, and since $W^{-1}$ is one-to-one,
the joint discrete entropy of the outputs are identical as well:
$H(Q^U(W(X_1^n))) = H(Q_{K}^{\lB}(X_1^n))$.

Since the volume of each cell of the rectangular lattice $Q^U$ is
equal to $K^{-1}$, and since the diameter of each cell falls to zero, a special case of a result
by Csisz\'{a}r \cite{Csiszar1973,LinderZ1994} tells us that
\[
\lim_{K\rightarrow \infty} H(Q^U(W(X_1^n))) - \log K = h(W(X_1^n)) \mbox{.}
\]
Since the differential entropy of a continuously differentiable function of $X$
is given by $h(f(X)) = h(X) +\E{\log \det J_f(X)}$, where $J_f(X)$
denotes the Jacobian matrix for the function $f$, we may reduce the expression to
\[
\lim_{K\rightarrow \infty} H(Q^U(W(X_1^n))) - \log K = 
h(X_1^n) + \sum_{j=1}^n \E{\log \lambda_j(X_j)} \mbox{.}
\]
Recalling that $H(Q^U(W(X_1^n))) = H(Q_{K}^{\lB}(X_1^n))$,
the proof is complete.
\end{IEEEproof}

Armed with this, the distortion-resolution function may be modified to include
considerations of rate.
\begin{lemma}
Define the fixed-rate, variable-rate, and Slepian-Wolf distortion-resolution functions as
\beqan
d^{\rm HR}_{\rm fr}(\KB;\lB) & = & \sum_{j=1}^n \frac{1}{12K_j^2} \E{\left(\frac{\gamma_j(X_j)}{\lambda_j(X_j)}\right)^2} \mbox{,}   \\
d^{\rm HR}_{\rm vr}(\KB;\lB) & = & \sum_{j=1}^n \frac{1}{12} 2^{-2H(Q^{\lambda_j}_{K_j}(X_j)) + 2h(X_j) +2\E{\log \lambda_j(X_j)}} \E{\left(\frac{\gamma_j(X_j)}{\lambda_j(X_j)}\right)^2}   \mbox{,} \\
d^{\rm HR}_{\rm sw}(\KB;\lB) & = & \sum_{j=1}^n \frac{1}{12} 2^{-2H(Q^{\lambda_j}_{K_j}(X_j)\mid Q^{\lambda^{j-1}}_{K^{j-1}}(X^{j-1})) + 2h(X_j \mid X^{j-1}) +2\E{\log \lambda_j(X_j)}} \E{\left(\frac{\gamma_j(X_j)}{\lambda_j(X_j)}\right)^2} \mbox{.}
\eeqan
Then $d(\KB; \lB) \sim d^{\rm HR}_{\rm fr,vr,sw}(\KB;\lB)$.
\end{lemma}
\begin{IEEEproof}
By Theorem \ref{thm:multi-distortion}, $d(\KB;\lB) \sim d_{\rm fr}^{\rm HR}(\KB;\lB)$.   This establishes the first of the asymptotic equalities.  

For the second (variable-rate) asymptotic equality, we observe that by Lemma \ref{lem:ResolutionRate}, 
\[ K_j \sim 2^{-2H(Q^{\lambda_j}_{K_j}(X_j)) + 2h(X_j) +2\E{\log \lambda_j(X_j)}} \]
and therefore that $d_{\rm vr}^{\rm HR}(\KB; \lB) \sim d^{\rm HR}(\KB;\lB)$.  Again, by Theorem \ref{thm:multi-distortion}, $d(\KB;\lB) \sim d^{\rm HR}(\KB; \lB)$.

For the third (Slepian-Wolf) asymptotic equality, we start by noting that by Lemma \ref{lem:ResRateErrorSW},
\[ \prod_{i=1}^j K_i \sim 2^{-2H(Q^{\lambda^{j}}_{K^j}(X^j)) + 2h(X^j) +2\sum_{i=1}^j \E{\log \lambda^j(X^j)}} \mbox{,} \]
and similarly
\[ \prod_{i=1}^{j-1} K_i \sim 2^{-2H(Q^{\lambda^{j-1}}_{K^{j-1}}(X^{j-1})) + 2h(X^{j-1}) +2\sum_{i=1}^{j-1} \E{\log \lambda^{j-1}(X^{j-1})}} \mbox{.} \]
Dividing the first by the second yields that
\[ K_j \sim 2^{-2H(Q^{\lambda_j}_{K_j}(X_j) \mid Q^{\lambda^{j-1}}_{K^{j-1}}(X^{j-1})) + 2h(X_j \mid X^{j-1}) + 2\E{\log \lambda_j(X_j)}} \]
and therefore that $d_{\rm sw}^{\rm HR}(\KB; \lB) \sim d^{\rm HR}(\KB;\lB) \sim d(\KB;\lB)$.
\end{IEEEproof}

\subsubsection{The Distortion-Rate Functions}
We may now establish high-resolution approximations to the distortion-rate
function under each of the three rate constraints.

\begin{lemma}
\label{lem:DistortionRateUnoptimized}
Define the fixed-rate, variable-rate, and Slepian-Wolf  high-resolution distortion-rate functions as
\begin{subequations}
\label{eq:allDistortionRate}
\beq
\Dfrhr(R;\lB) = \frac{n}{12} 2^{-2R/n}\left( \prod_{j=1}^n\E{\left(\frac{\gamma_j(X_j)}{\lambda_j(X_j)}\right)^2} \right)^{1/n}\mbox{,}
\label{eq:frDistortionRate}
\eeq
\beq
\Dvrhr(R;\lB) = \frac{n}{12} 2^{-2R/n} \left(\prod_{j=1}^n 2^{2h(X_j) + 2\E{\log \lambda_j(X_j)}}\E{\left(\frac{\gamma_j(X_j)}{\lambda_j(X_j)}\right)^2} \right)^{1/n}\mbox{,}
\label{eq:vrDistortionRate}
\eeq
\beq
\Dswhr(R;\lB) =  \frac{n}{12}2^{-2R/n} \left( 2^{2h(X_1^n)}\prod_{j=1}^n 2^{2\E{\log \lambda_j(X_j)}}\E{\left(\frac{\gamma_j(X_j)}{\lambda_j(X_j)}\right)^2} \right)^{1/n}\mbox{.}
\label{eq:jeDistortionRate}
\eeq
\end{subequations}
Then $D_{\rm fr,vr,sw}(R;\lB) \sim D^{\rm HR}_{\rm fr,vr,sw}(R;\lB)$.
\end{lemma}
\begin{IEEEproof}
See Appendix \ref{app:DistortionRateUnoptimized}.
\end{IEEEproof}

%
%
%
%
%

\subsubsection{Asymptotically Optimal Distributed Quantizers}
\label{sec:multi-densities}
The expressions \eqref{eq:allDistortionRate}
decouple the problem of designing $n$ point densities $\lB$ into $n$ separate problems of designing
a single point density $\lambda_j$.
Furthermore, each design problem
(the minimization of an expression in \eqref{eq:allDistortionRate})
is of a familiar form.
Thus we obtain the following theorem.
\begin{theorem}
 \label{thm:multi-summary}
The asymptotic fixed-rate (codebook-constrained) distortion-rate expression \eqref{eq:frDistortionRate}
is minimized by the choice
\beq
 \label{eq:multi-fixed-lambda}
  \lambda_j^*(x) = \frac{\left(\gamma_j^2(x)f_{X_j}(x)\right)^{1/3}}
                    {\int_0^1 \left(\gamma_j^2(t)f_{X_j}(t)\right)^{1/3} \, dt},
                 \quad j=1,\,2\,\,\ldots,\,n \mbox{,}
\eeq
yielding distortion
\beq
 \label{eq:multi-fixed-alloc-dist}
 \Dfrhr(R) = \frac{n}{12}\left( \prod_{j=1}^n
                                           \| \gamma_j^2 f_{X_j} \|_{1/3}
                        \right)^{1/n} 2^{-2R/n} \mbox{.}
\eeq

The asymptotic variable-rate (marginal entropy-constrained) distortion-rate expression \eqref{eq:vrDistortionRate}
is minimized by the choice
\beq
 \label{eq:multi-var-lambda}
  \lambda_j^*(x) = \frac{\gamma_j(x)}
                    {\int_0^1 \gamma_j(t) \, dt},
                 \quad j=1,\,2\,\,\ldots,\,n\mbox{,}
\eeq
yielding distortion
\beq
 \label{eq:multi-var-alloc-dist}
  \Dvrhr(R) = \frac{n}{12}\left( \prod_{j=1}^n
                     2^{2h(X_j)+2\E{\log \gamma_j(X_j)}}
                        \right)^{1/n} 2^{-2R/n} \mbox{.}
\eeq

The asymptotic Slepian--Wolf (joint entropy-constrained) distortion-rate expression
\eqref{eq:jeDistortionRate} is optimized by a choice of point densities
identical to the variable-rate case \eqref{eq:multi-var-lambda}.  The resulting distortion is
\beq
 \label{eq:multi-je-alloc-dist}
 \Dswhr(R) = \frac{n}{12}\left( 2^{2h(X_1^n)}\prod_{j=1}^n
                     2^{2\E{\log \gamma_j(X_j)}}
                        \right)^{1/n} 2^{-2R/n} \mbox{.}
\eeq

The distributed quantizer point densities yielded by the above optimizations
are asymptotically optimal.
\end{theorem}
\begin{IEEEproof}
To prove \eqref{eq:multi-fixed-lambda} gives the optimal point density
for fixed-rate coding and
\eqref{eq:multi-var-lambda} gives the optimal point density for both
variable-rate and Slepian--Wolf coding,
it suffices to note that minimizing the $n$ terms in
\eqref{eq:frDistortionRate}, \eqref{eq:vrDistortionRate}, and \eqref{eq:jeDistortionRate} separately
gives problems identical to those in Section~\ref{sec:Single}.


The proof that the choice of $\lB$ that minimizes the high-resolution expression
is asymptotically optimal
is virtually identical to that of Lemma \ref{lem:optimizationIsLegit}, so
it is omitted.
\end{IEEEproof}

\subsection{Variation: Joint Entropy Constraint}
\label{sec:slepian-wolf}
Distortion expressions \eqref{eq:multi-fixed-alloc-dist}
and \eqref{eq:multi-var-alloc-dist} are minimum distortions subject
to a sum-rate constraint.
The individual rates given by $R_j = \log K_j$ (fixed-rate) or by \eqref{eq:1drate}
(variable-rate)
implicitly specify no entropy coding or separate entropy coding
of the $\Xhat_j$s, respectively.

If the $\Xhat_j$s are not independent---which is anticipated whenever the $X_j$s
are not independent---one may employ Slepian--Wolf coding
of the $\Xhat_j$s without violating the distributed coding requirement
implicit in Fig.~\ref{fig:dfsc-block}.
This lowers the total rate from $\sum_{j=1}^n H(\Xhat_j)$ to
$H(\Xhat_1,\,\Xhat_2,\,\ldots,\Xhat_n)$ and changes the marginal entropy
constraint into a joint entropy constraint.  While the optimal compander
choice \eqref{eq:multi-var-lambda} is unchanged by this modification, the resulting
distortion-rate function reduces from \eqref{eq:multi-var-alloc-dist} 
to \eqref{eq:multi-je-alloc-dist}.

Some remarks:
\begin{enumerate}
\item
By comparing \eqref{eq:multi-je-alloc-dist} to \eqref{eq:multi-var-alloc-dist},
we see that the inclusion of Slepian--Wolf coding has reduced the
sum rate to achieve any given distortion by
$$
  \left(\sum_{j=1}^n h(X_j)\right) - h(X_1^n) \mbox{.}
$$
This is, of course, not unexpected as it represents the excess information
in the product of marginal probability distributions as compared to the joint probability distribution.
This has been termed the \emph{multiinformation}~\cite{StudenyV1998}
and equals the mutual information when $n=2$.
\item
While the resolution allocation $\KB$ amongst the $n$ sources has a unique
minimizing choice, there is some flexibility in rate allocations 
for the Slepian--Wolf encoder.  Any point on the Slepian--Wolf 
joint-entropy boundary may be achieved with arbitrarily low probability of error.


\item
The theorem seems to analytically separate correlations among sources from
functional considerations, exploiting correlation even though the
quantizers are regular.
In reality, the binning introduced by Slepian--Wolf coding transforms the
scalar quantizers of each source component into nonregular vector quantizers
so as to remove redundancy between sources.
\end{enumerate}

\subsection{Relationship to Locally-Quadratic Distortion Measures}
\label{sec:locally-quadratic}
Linder \emph{et al.} consider the class of ``locally-quadratic''
distortion measures for variable-rate high-resolution quantization
in~\cite{LinderZZ1999}.
They define locally-quadratic measures as those having the following two properties:
\begin{enumerate}
\item Let $x$ be in $\R^n$. For $y$ sufficiently close to $x$ in the Euclidean
metric,
the distortion between $x$ and $y$ is well approximated by
$\sum_{i=1}^n M_i(x)|x_i-y_i|^2$, where $M_i(x)$ is a positive scaling factor.
In other words, the distortion is a space-varying non-isotropically scaled MSE\@.
\item The distortion between two points is zero if and only if the points are identical.
\end{enumerate}

\noindent For these distortion measures, the authors consider high-resolution 
variable-rate regular quantization, generalize Bucklew's
results \cite{Bucklew1984} to non-functional distortion measures, and demonstrate
the use of multidimensional companding functions to implement
these quantizers.  Of particular interest is the comparison
they perform between joint vector quantization and separable scalar quantization.  When Slepian--Wolf
coding is employed for the latter, the scenario is similar to the
developments of this section.

The source of this similarity is the implicit distortion measure we work with:
$d_g(x,y) = |g(x)-g(y)|^2$.  When $x$ and $y$ are very close to each other,
Taylor approximation reduces this error to a quadratic form:
\[ 
|g(x)-g(y)|^2 \approx \sum_{i=1}^n \left|\frac{\del g(x_1^n)}{\del x_i}\right|^2 |x_i-y_i|^2.
\]
From this, one may obtain the same variable-rate Slepian--Wolf
performance as \eqref{eq:multi-je-alloc-dist} through the analysis in
\cite{LinderZZ1999}.

However, there are important differences
between
locally-quadratic distortion measures and the functional distortion measures
we consider. First and foremost: a continuous scalar function of $n$
variables, $n > 1$, is \emph{guaranteed} to have an uncountable number of pairs
$x \neq y$ for which $g(x) = g(y)$ and therefore that $d_g(x,y) = 0$.
This violates the second condition of a locally-quadratic distortion measure,
and the repercussions are felt most strikingly for non-monotonic
functions---those for which regular quantizers are not necessarily optimal
(see Section~\ref{sec:NonMono}).

The second condition is also violated by functions that are
not \emph{strictly} monotonic in each variable; one finds that without
strictness, variable-rate analysis of the centralized encoding problem
is invalidated.  Specifically, if the derivative vector 
\[
\left(\frac{\del g(x_1^n)}{\del x_1},\,
                  \frac{\del g(x_1^n)}{\del x_2},\,
                  \ldots, \,
                  \frac{\del g(x_1^n)}{\del x_n}\right)
\]
has nonzero probability of possessing a zero component, the expected
variable-rate distortion as derived by both Bucklew and Linder \emph{et al.}\
is $D = 0$, regardless of rate. 
This answer arrives from the null derivative having violated
the high-resolution approximation, and it implies that the distortion
falls faster than $2^{-2 R/n}$.
In future work, generalizations of our results in
Section~\ref{sec:DontCare} may be able to address such deficiencies.

\section{Examples}
\label{sec:Scaling}
Before moving on to extensions of the basic theory,
we present a few examples to show how optimal ordinary scalar quantization and
optimal DFSQ differ.
We especially want to highlight a few simple examples in which performance
scaling with respect to $n$ differ greatly between ordinary and
functionally-optimized quantization.
To draw attention to this scaling, we define the \emph{rate-per-source}
$\RS$ as the sum-rate divided by the number of sources $R/n$, 
and hold this quantity constant as the number of sources increases.

\smallskip

\begin{example}[Linear function]
Consider the function $g(x_1^n) = \sum_{j=1}^n a_j x_j$
where the $a_j$s are scalars.
Then for any $j$, $\gamma_j(x) = |a_j|$.
Since $\gamma_j(x)$ does not depend on $x$,
it has no influence on the optimal point density
for either the fixed- or variable-rate case;
see \eqref{eq:multi-fixed-lambda} and \eqref{eq:multi-var-lambda}.

Although $\gamma_j(x)$ gives no information on which values of $X_j$
are more important than others
(or rather shows that they are all equally important)
the set of $\gamma_j$s shows the relative importance of the components.
This is reflected in the allocation of rate.
\hfill $\Box$
\end{example}

\smallskip

\begin{example}[Maximum]
\label{ex:maximum}
Let the set of sources $X_1^n$ be uniformly distributed on $[0,1]^n$
and hence mutually independent.
Consider the function
$$g(x_1^n) = \max(x_1,\,x_2,\,\ldots,\,x_n).$$
Note that this function is differentiable outside the sets 
$A_{i,j} = \{x_1^n: x_i = x_j\}$, where $i,j \in \{1,\ldots, n\}$.
Each $A_{i,j}$ is an $(n-1)$-dimensional plane and therefore
has Jordan measure zero, and since a finite union of Jordan-measure-zero
sets has Jordan measure zero, condition MF1 is satisfied.
Though very simple, this function is more interesting than a linear function
because the derivative with respect to one variable depends sharply on all
the others.  
The function is symmetric in its arguments, so for
notational convenience consider only the design of the quantizer for $X_1$.

The partial derivative $g_1(x_1^n)$ is 1 where the maximum is $x_1$
and is 0 otherwise.
Thus,
\begin{eqnarray*}
  \gamma_1^2(x) & = & \E{ |g_1(X_1^n)|^2 \mid X_1 = x } \\
     & = & \P{ \max(X_1^n) = X_1 \mid X_1 = x } \\
     & = & x^{n-1} \mbox{,} 
\end{eqnarray*}
where the final step uses the probability of all $n-1$ variables
$X_2^n$ being less than $x$.

The optimal point density for fixed-rate quantization is found by
evaluating \eqref{eq:multi-fixed-lambda} to be
$$
  \lambda_1(x) = {\textstyle\frac{1}{3}} (n+2) x^{(n-1)/3} \mbox{.}
$$
The resulting distortion when each quantizer has rate $\RS$ (equal rate
allocations) is found
by evaluating \eqref{eq:multi-fixed-alloc-dist} to be
\begin{eqnarray*}
  \Dfrhr(n\RS) & = & \frac{n}{12} \, \| \gamma_1^2 \|_{1/3} \, 2^{-2\RS} 
    \ = \ \frac{n}{12} \, \left(\frac{3}{n+2}\right)^3 \, 2^{-2\RS} \\
    & = & \frac{9n}{4(n+2)^3} \, 2^{-2\RS} \mbox{.}
\end{eqnarray*}

The optimal point density for variable-rate quantization is found by
evaluating \eqref{eq:multi-var-lambda} to be
$$
  \lambda_1(x) = \half(n+1)x^{(n-1)/2} \mbox{.}
$$
Substituting
$h(X_1) = 0$ and
$2^{2\E{\log \gamma_1(X_1)}} = e^{-n+1}$
into \eqref{eq:multi-var-alloc-dist} gives
$$
  \Dvrhr(n\RS) = 
     \frac{n}{12} e^{-n+1} \, 2^{-2\RS}\mbox{.}
$$

The two computed distortions decrease sharply with $n$.
This is in stark contrast to the results of ordinary quantization.
When functional considerations are ignored, one optimally uses a
uniform quantizer, resulting in
$\iE{(X_j-\Xhat_j)^2} \approx {\textstyle\frac{1}{12}}2^{-2R_j}$
for any component.
Since the maximum is equal to one of the components,
the functional distortion is
$\Dord^{\rm HR}(n\RS) = {\textstyle\frac{1}{12}}2^{-2\RS}$, unchanging with $n$.

The optimal point densities
computed above are shown
in Fig.~\ref{fig:max-lambdas}.
The distortions are presented along with the results of the
following example in Fig.~\ref{fig:example-dists}.
\hfill $\Box$
\end{example}

\begin{figure*}
 \centering
  \begin{tabular}{cc}
   {
    \psfrag{x}[][]{$x$}
    \psfrag{lambda(x)}[][t]{$\lambda(x)$}
    \includegraphics[width=\widthA]{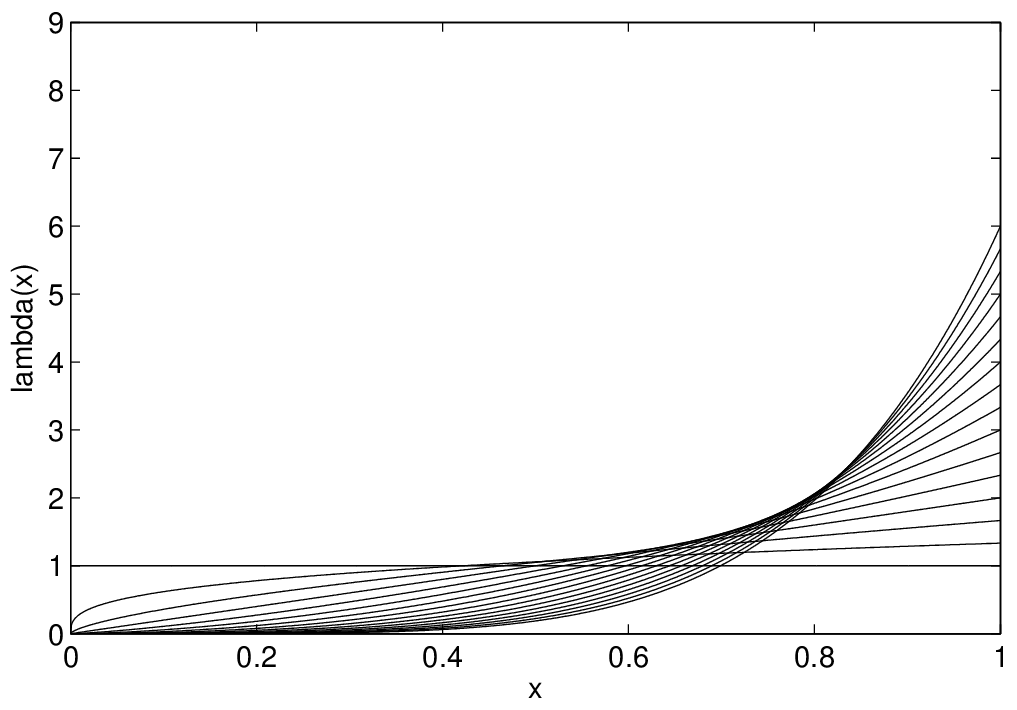}
   } &
   {
    \psfrag{x}[][]{$x$}
    \psfrag{lambda(x)}[][t]{$\lambda(x)$}
    \includegraphics[width=\widthA]{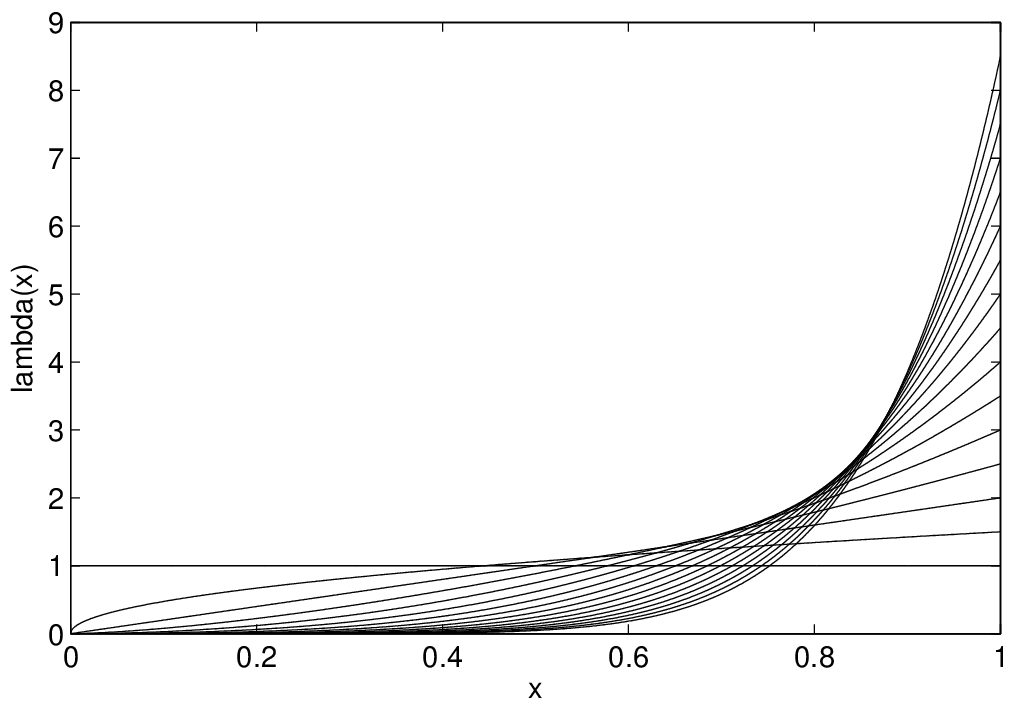}
   } \\
   {\small (a) fixed-rate} &
   {\small (b) variable-rate}
  \end{tabular}
  \caption{Optimal point densities for Example~\ref{ex:maximum} (maximum),
    $n = 1,\,2,\,\ldots,\,16$.  As $n$ increases, the sensitivities
    $\gamma_j(x)$ become more unbalanced toward large $x$;
    this is reflected in the point densities,
    more so in the variable-rate case than in the fixed-rate case.}
  \label{fig:max-lambdas}
\end{figure*}

\begin{figure}
 \centering
  {
   \psfrag{n}[][]{$n$}
   \psfrag{D}[][t]{$D \cdot 12 \cdot 2^{2\RS}$}
   \psfrag{a}[][]{\small $\Dvrhr$, max}
   \psfrag{b}[][]{\small $\Dvrhr$, med}
   \psfrag{c}[][]{\small $\Dfrhr$, max}
   \psfrag{d}[][]{\small $\Dfrhr$, med $\;\;\;$}
   \includegraphics[width=\widthA]{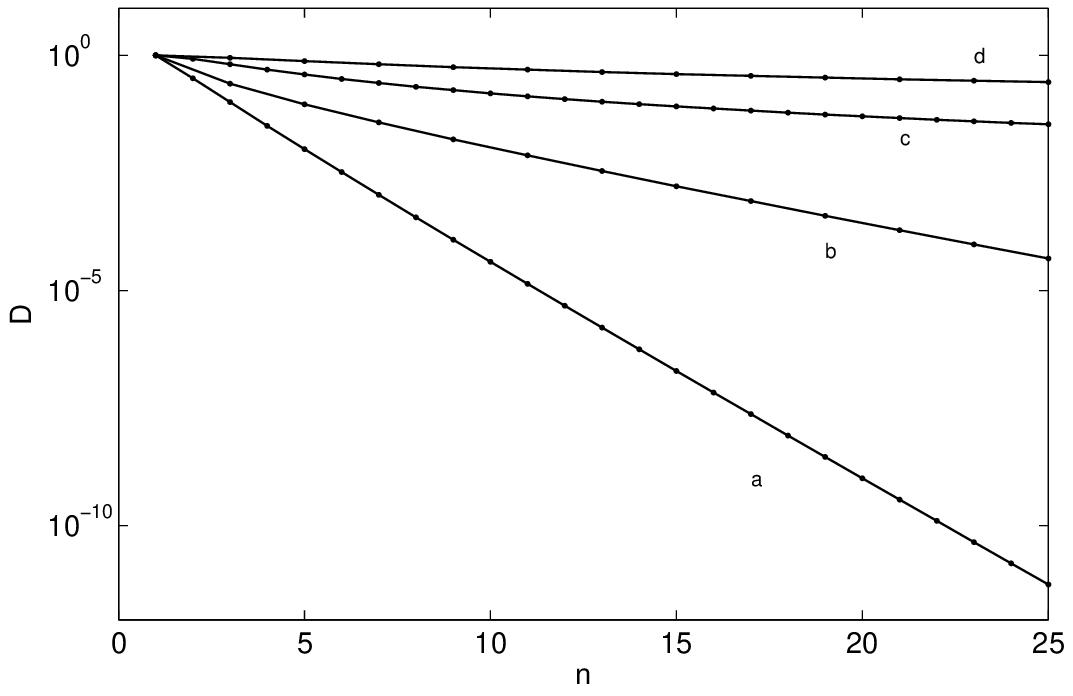}
  }
  \caption{Distortions of optimal fixed- and variable-rate functional
    quantizers for maximum and median functions from Examples~\ref{ex:maximum}
    and~\ref{ex:median}.  Shown is the dependence
    on the number of variables $n$; by plotting $D \cdot 12 \cdot 2^{2\RS}$
    we see the performance relative to ordinary quantization.}
  \label{fig:example-dists}
\end{figure}

\smallskip

\begin{example}[Median]
\label{ex:median}
Let $n = 2m+1$, $m \in \N$,
and again let the set of sources $X_1^n$ be uniformly distributed on $[0,1]^n$.
The function
$$g(x_1^n) = \median(x_1,\,x_2,\,\ldots,\,x_n)$$
provides a similar but more complicated example.  Note that, as
in Example \ref{ex:maximum}, this function is differentiable outside
the zero-Jordan-measure sets $A_{i,j}$, and it therefore satisfies
condition MF1.

The partial derivative $g_1(x_1^n)$ is 1 where the median is $x_1$
and is 0 otherwise.
Thus,
\begin{eqnarray*}
  \gamma_1^2(x) & = & \E{ |g_1(X_1^n)|^2 \mid X_1 = x } \\
     & = & \P{ \median(X_1^n) = X_1 \mid X_1 = x } \\
     & = & {2m \choose m} x^m (1-x)^m\mbox{,} 
\end{eqnarray*}
where the final step uses the binomial probability for the event of
exactly $m$ of the $2m$ variables $X_2^n$ exceeding $x$.

The optimal point density for fixed-rate quantization is found by
evaluating \eqref{eq:multi-fixed-lambda} to be
$$
  \lambda_1(x) = \frac{ x^{m/3}(1-x)^{m/3} }
                     { B(m/3+1,m/3+1) }
$$
where $B$ is the beta function.
The resulting distortion when each quantizer has rate $\RS$ is found
by evaluating \eqref{eq:multi-fixed-alloc-dist} to be
\begin{eqnarray*}
  \Dfrhr(n\RS) & = & \frac{2m+1}{12} \, \| \gamma_1^2 \|_{1/3} \, 2^{-2\RS} \\
    & = & \frac{2m+1}{12} \, {2m \choose m} \left( B\left(\frac{m}{3}+1,\frac{m}{3}+1\right) \right)^3 \, 2^{-2\RS} \mbox{.}
\end{eqnarray*}
To understand the trend for large $m$, we can substitute in the Stirling
approximations ${2m \choose m} \sim (m\pi)^{-1/2}2^{2m}$
and
$$B(m/3+1,m/3+1) \sim \sqrt{6\pi/m} \, 2^{-(2m/3+3/2)}$$
to obtain
$$
  \Dfrhr(n\RS) \sim \frac{m}{6} \, \frac{2^{2m}}{\sqrt{m\pi}} \, \left(\frac{6\pi}{m}\right)^{3/2} 2^{-(2m+9/2)} \, 2^{-2\RS}
    = \frac{\pi \sqrt{3}}{16m} \, 2^{-2\RS} \mbox{.}
$$

The optimal point density for variable-rate quantization is found by
evaluating \eqref{eq:multi-var-lambda} to be
$$
  \lambda_1(x) = \frac{ x^{m}(1-x)^{m} }
                      { B(m+1,m+1) } \mbox{.}
$$
To evaluate the resulting distortion, note that
$h(X_1) = 0$ and
$2^{2\E{\log \gamma_1(X_1)}} = {2m \choose m}e^{-2m}$.
Substituting into \eqref{eq:multi-var-alloc-dist} gives
$$
  \Dvrhr(n\RS) = 
     \frac{2m+1}{12} {2m \choose m} e^{-2m} \, 2^{-2\RS}\mbox{.}
$$
Using the approximation above for the binomial factor
we obtain
\beqan
  \Dvrhr(n\RS) & \sim & \frac{m^{1/2}}{6\pi^{1/2}} \left(\frac{e}{2}\right)^{-2m} \, 2^{-2\RS}  \mbox{.}
\eeqan

The optimal point densities computed above are shown in
Fig.~\ref{fig:median-lambdas}.
The distortions are presented along with the results of
Example~\ref{ex:maximum} in Fig.~\ref{fig:example-dists}.

\begin{figure*}
 \centering
  \begin{tabular}{cc}
   {
    \psfrag{x}[][]{$x$}
    \psfrag{lambda(x)}[][t]{$\lambda(x)$}
    \includegraphics[width=\widthA]{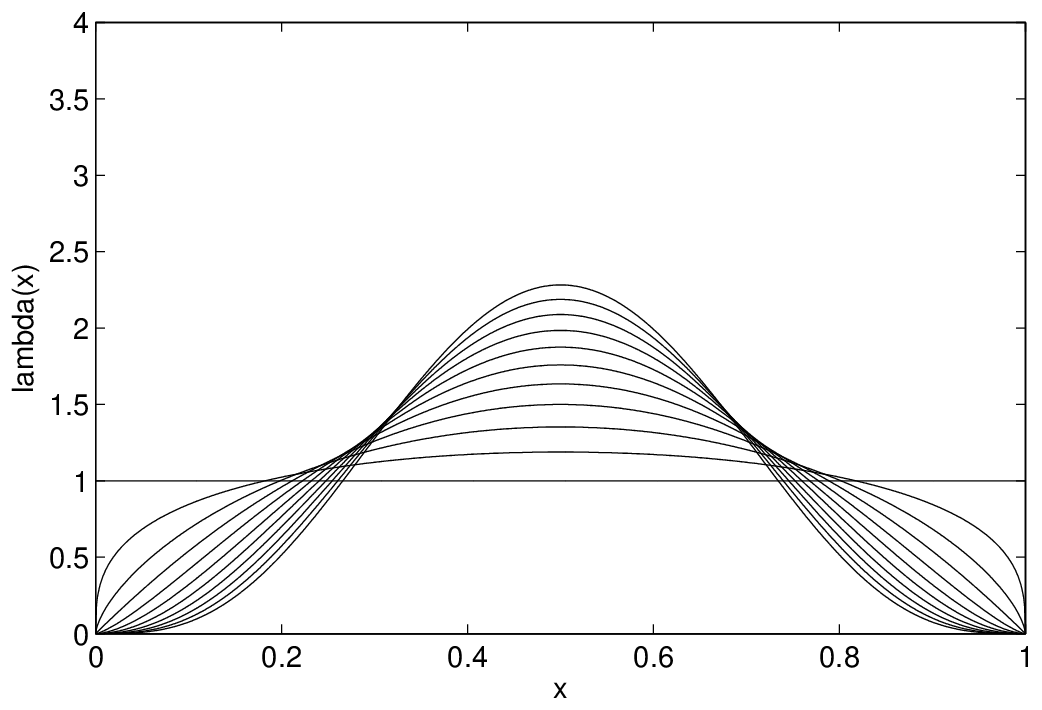}
   } &
   {
    \psfrag{x}[][]{$x$}
    \psfrag{lambda(x)}[][t]{$\lambda(x)$}
    \includegraphics[width=\widthA]{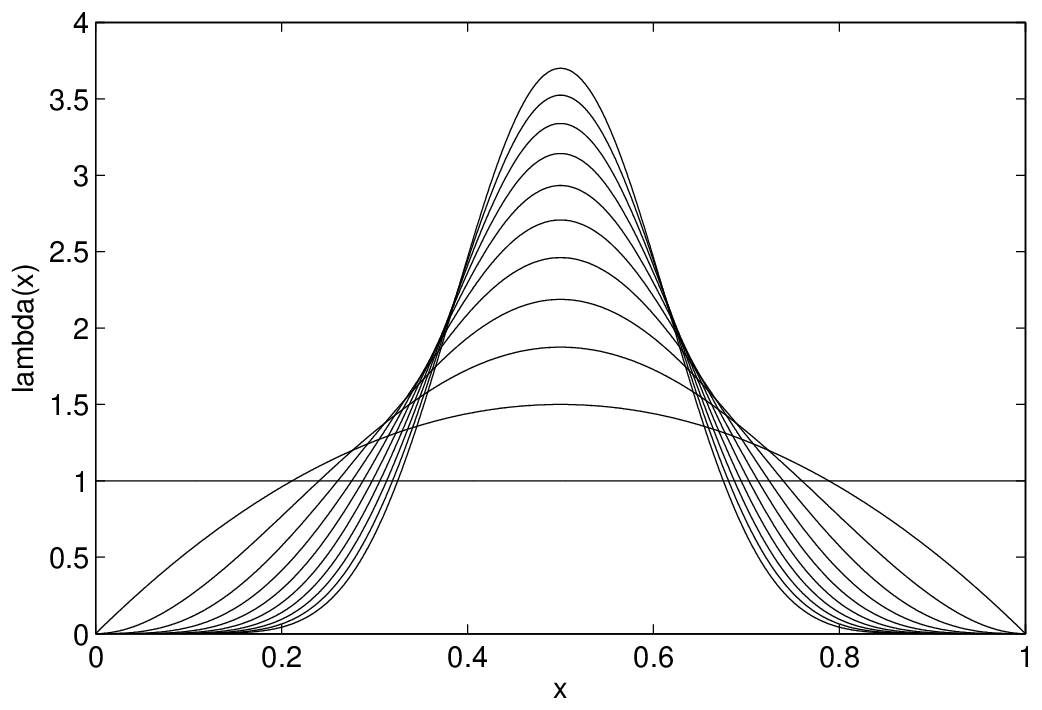}
   } \\
   {\small (a) fixed-rate} &
   {\small (b) variable-rate}
  \end{tabular}
  \caption{Optimal point densities for Example~\ref{ex:median} (median),
    $n = 1,\,3,\,\ldots,\,21$.  As $n$ increases, the sensitivities
    $\gamma_j(x)$ become more unbalanced toward $x = 1/2$;
    this is reflected in the point densities,
    more so in the variable-rate case than in the fixed-rate case.}
  \label{fig:median-lambdas}
\end{figure*}

Note the following similarities to Example~\ref{ex:maximum}:
$\Dord^{\rm HR}$ is constant with respect to $n$,
$\Dfrhr$ decays polynomially with $n$,
and $\Dvrhr$ decays exponentially with $n$.
\hfill $\Box$
\end{example}

\smallskip

The large performance improvement over ordinary quantization
in these examples illustrates the potential
benefits of functional quantization.  Additional examples and details appear in~\cite{Misra2008}.

\section{Non-Monotonic Functions and Non-Regular Quantization}
\label{sec:NonMono}

The high-resolution approach to quantizer optimization is inherently
limited to the design of regular quantizers.  In particular,
we have specified compander functions to be monotonic in Section~\ref{sec:Definitions}.
The analysis of Section~\ref{sec:Multi} therefore
gave us quantizer sequences within the class of regular quantizers. 

In this section we explore less restrictive alternatives to the 
monotonicity requirement.  Specifically, we introduce the concept
of \emph{equivalence-free} and show that if a function has this
property, then non-regular companding quantizer sequences are
asymptotically suboptimal.

Fig.~\ref{fig:NonEquivalent} illustrates the concept.
The function on the left is aligned with the axes in the sense that
$g(x_1,x_2)$ depends only on $x_1$.  Since the dependence on $x_1$ is
not monotonic, there are pairs of distinct points $(x_1^{\dagger},x_1^{\ddagger})$
where $g(x_1^{\dagger},x_2)=g(x_1^{\ddagger},x_2)$ and thus the optimal quantizer at
high enough resolution has $Q_1(x_1^{\dagger}) = Q_1(x_1^{\ddagger})$, giving a non-regular
quantizer.
When the argument vector $(x_1,x_2)$ of the function is rotated as shown on the right,
the resulting function is still non-monotonic.  However,
there is no longer a clearly optimal non-regular quantization scheme.   
Specifically, for some fixed $x_2$ there may be pairs $(x_1^{\dagger},x_1^{\ddagger})$
such that $g(x_1^{\dagger},x_2)=g(x_1^{\ddagger},x_2)$, but the equality does not
hold for all $x_2$.  As we shall see, this results in the suboptimality
of any compander that maps $x_1^{\dagger}$ in the same way as $x_1^{\ddagger}$.

\begin{figure}
 \centering
  \psfrag{x1}[b][][1][0]{\small $x_1$}
  \psfrag{x2}[b][][1][0]{\small $x_2$}
  \psfrag{g}[b][][1][0]{\small $g(x_1,x_2)$}
  \includegraphics[width=0.45\textwidth]{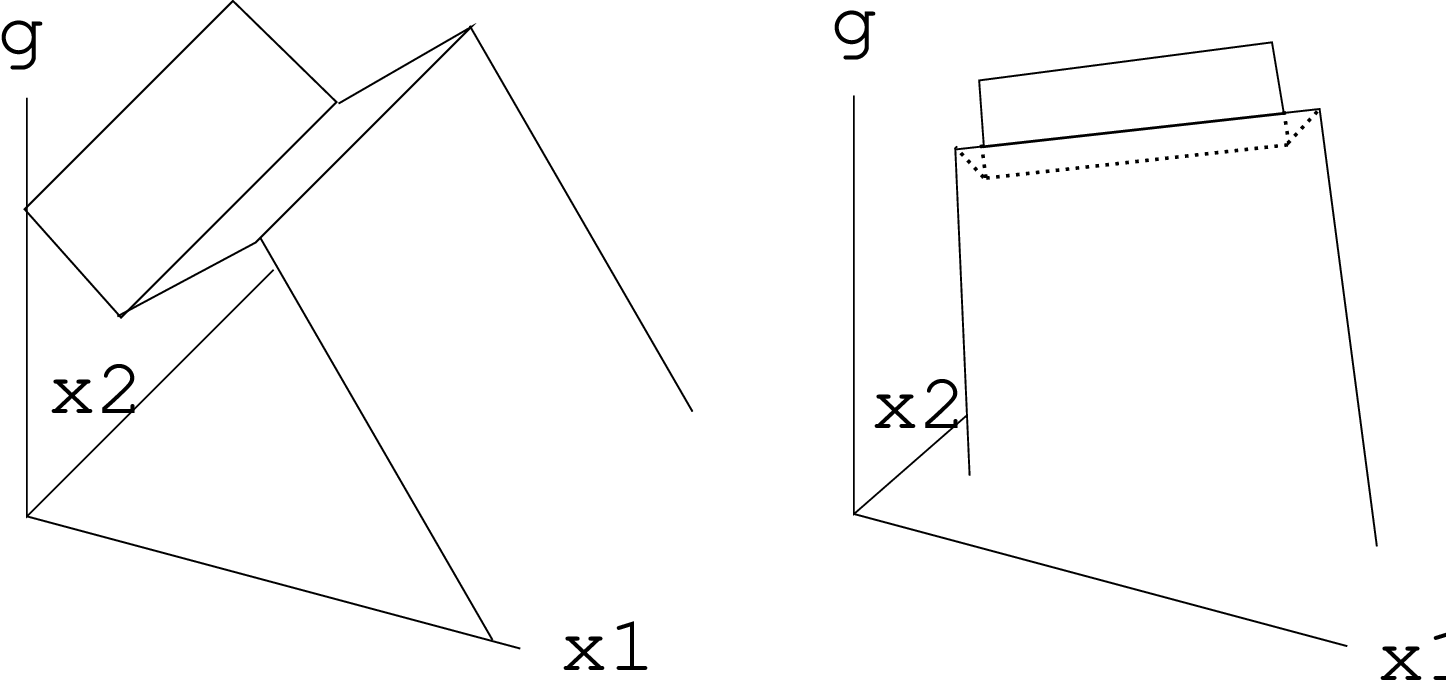}
  \caption{Two functions of two variables are shown. 
    The left function is separable and $X_1$ is best quantized by a
    non-regular quantizer;
    for the right function (a rotated version of the left),
    a regular quantizer is asymptotically optimal. 
    This is due to the right function being ``equivalence-free.''}
  \label{fig:NonEquivalent}
\end{figure}

Our approach is to first create a model for high-resolution
non-regular quantization, then to use this model to expand the class
of functions for which regular quantization is optimal, and
finally to construct asymptotically optimal non-regular quantizers
when regularity is suboptimal.

\subsection{High-Resolution Non-Regular Quantization}
\label{sec:NonRegular}
To accommodate non-regular quantization, we extend the compander-based model
of quantization. 
In Bennett's development of optimal companding, reviewed in 
Sec. \ref{sec:Background}, it is natural to require
$w$ to be both monotonic and have a bounded derivative everywhere;
the derivative $w'(x)$ is proportional to the quantizer point density
$\lambda(x)$ that has been central in our development thus far.
Whether we look at $\lambda$ or $w$, the role is to set the relative
sizes of the quantization cells.

Since optimal functional quantizers are not necessarily regular,
we adapt the conventional development to implement non-regular quantizers.
\begin{definition} A function $w: [0,1]\rightarrow[0,1]$ is a 
\emph{generalized compander} if it is
continuous, piecewise monotonic with a finite number of pieces, and has
bounded derivative over each piece.
\end{definition}

As in regular companding, $w$ and $w^{-1}$ are used along with a
uniform quantizer $Q_K^U$ as $w^{-1}(Q_K^U(w(x)))$.
The restriction to a finite number of pieces is a limitation on the types of
non-regular quantizers that can be captured with this model: those for which
every quantizer cell is a finite union of intervals. 
Barring certain pathological situations, this restriction is reasonable.

Along with setting relative sizes of cells, $w$ provides for non-regularity
by allowing intervals to be binned together.
To illustrate this, consider a simple example. 
Suppose that the pair $(X_1,X_2)$ is uniformly distributed over $[0,1]^2$,
variable rate quantization is to be performed on both variables, and the
function of interest is defined by
\[
g(x_1,x_2) = x_1(\tfrac{3}{4}-x_1)(1-x_2)\mbox{.}
\]

An optimal functional quantizer---a quantizer for $X_1$ to minimize
$\iE{(g(X_1,X_2)-g(\Xhat_1,\Xhat_2))^2}$---should bin together
$X_1$ values that always yield the same $g(X_1,X_2)$.  Furthermore,
the magnitude of the slope of this quantizer should follow \eqref{eq:multi-var-lambda}.
The choice of 
\[
w_1(x_1) = \tfrac{64}{25}x_1\left(\tfrac{3}{4}-x_1\right)+\tfrac{16}{25}
\]
can be shown to be optimal.  Both $w_1$ and the
resulting quantizer at resolution $K=5$ are illustrated
in Fig.~\ref{fig:NonMonoExample}b.
%
\begin{figure*}
 \centering
  \begin{tabular}{cc}
   {
    \psfrag{x1}[][]{{\footnotesize $x_1$}}
    \psfrag{x2}[][t]{{\footnotesize $x_2$}}
    \psfrag{g}[][t]{{\footnotesize $g(x_1,x_2)$}}
    \includegraphics[width=\widthA]{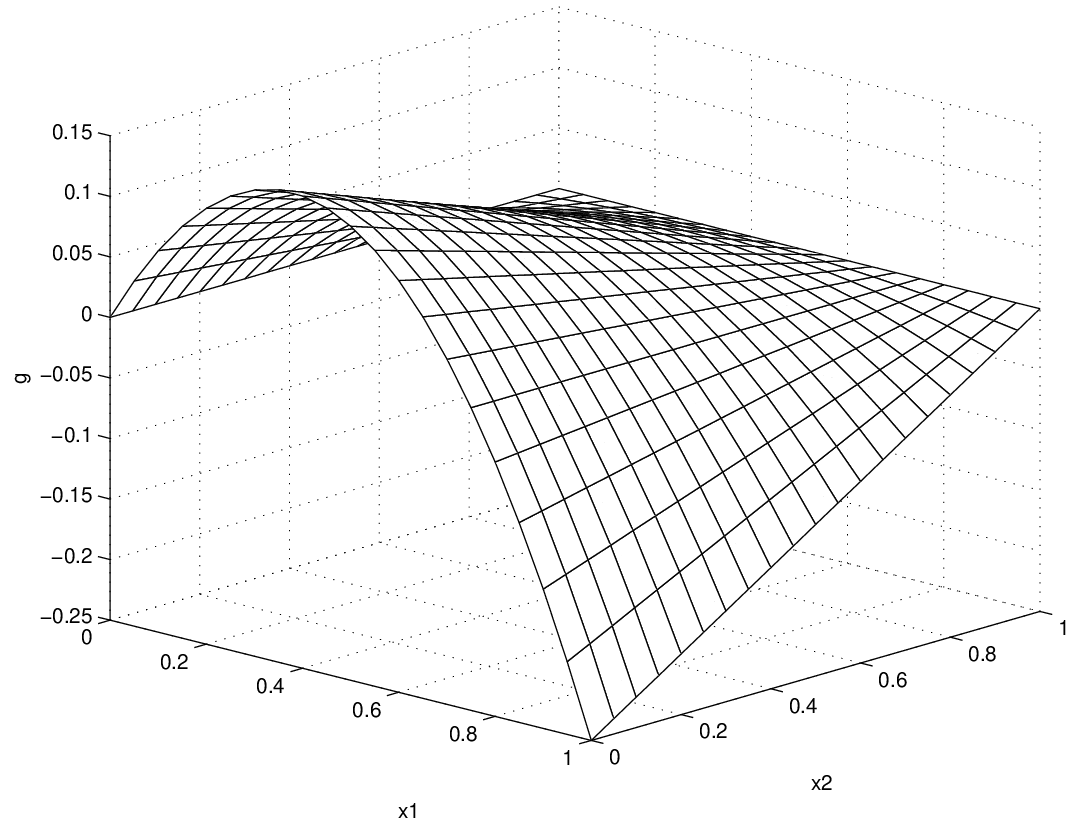}
   } &
   {
    \psfrag{0}[][]{{\footnotesize $0$}}
    \psfrag{1}[][]{{\footnotesize $1$}}
    \psfrag{w}[][]{{\footnotesize $w_1(x_1)$}}
    \psfrag{x}[][]{{\footnotesize $x_1$}}
    \includegraphics[width=\widthA]{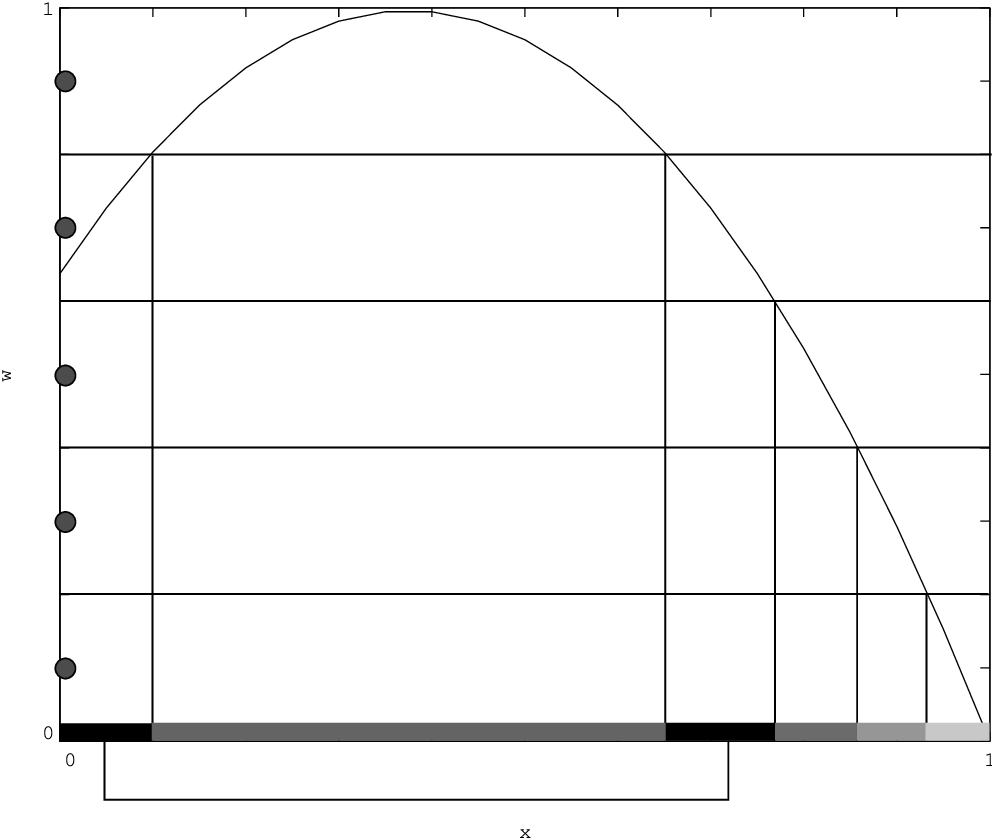}
   } \\
   {\small (a) Function of interest} &
   {\small (b) Generalized compander}
  \end{tabular}
  \caption{Example of a generalized compander $w_1(x_1)$ for a function
  $g(x_1,x_2)$ and the partition resulting from uniform quantization of $w_1(X_1)$.
    Notice that the compander dictates both the relative sizes of cells
    and the binning of intervals of $X$ values.}
  \label{fig:NonMonoExample}
\end{figure*}

\subsection{Equivalence-Free Functions}
\label{sec:equiv-free}
We now define a broad class of functions for which
regular quantization is optimal at sufficiently high resolutions. 
Consider the design of the $j$th quantizer in an $n$-dimensional
distributed functional quantization setting.

We require a set of definitions:

\begin{definition}
For any $s \neq t$ in the support of $X_j$, let
$$
v_j(s,t)
  = \E{ \var\left( g(X_1^n) \mid X_j \in \{s,t\}, \, \{X_i\}_{i \neq j} \right)
        } \mbox{.}
$$
If $v_j(s,t) = 0$ then $(s,t)$ is a
\emph{functional equivalence in the $j$th variable}.
If $g$ has no functional equivalences in any of its variables,
we say it is \emph{equivalence-free}.
\end{definition}

The theorem below demonstrates that
for DFSQ with an equivalence-free function,
quantizer regularity is necessary for asymptotic optimality.
Specifically,
strictly non-regular quantization is shown to introduce a nonzero lower bound on
the distortion, independent of rate.  This is formalized with the
aid of generalized companding.  To simplify the proof somewhat,
we assume that the marginal probability density $f_j(X_j)$ is
nonzero over $[0,1]$.  This assumption is without loss of generality,
since one may consider the subset of $[0,1]$ where $f_j(X_j)$ is
nonzero.

\begin{theorem} \label{thm:EquivalenceFree}
Let $g$ be equivalence-free with respect to the pdf of
$X_1^n$ on $[0,1]^n$.
Suppose quantization of each $X_j$ is performed as
$\Yhat_j = q(w_j(X_j))$ where $w_j$ is a generalized compander and
$q$ is a uniform quantizer.
If there is an index $j$, set $S \subset [0,1]$,
and function $t : \R \rightarrow \R$ such that
$\P{X_j \in S} > 0$, and,
for every $s \in S$,
$s \neq t(s)$ and
$w_j(s) = w_j(t(s))$,
then the distortion has a positive, resolution-independent lower bound.
\end{theorem}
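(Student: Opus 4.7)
\medskip
\noindent\textbf{Proof proposal.}
The strategy is to show that, when $g$ is equivalence-free, any generalized compander that merges a positive-measure set of values $s$ with distinct values $t(s)$ forfeits an amount of functional distortion bounded away from zero uniformly in the resolution $K$. I would isolate the $j$th coordinate, reduce to an oracle that knows every other source $X_{-j}:=\{X_i\}_{i\neq j}$ exactly, and use the between-cluster part of the conditional variance as the lower bound.

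\emph{Oracle reduction and cell geometry.} With the MMSE estimator $\ghat=\E{g(X_1^n)\mid Q(X_1^n)}$, write $D=\E{\var(g(X_1^n)\mid Q(X_1^n))}$. Since each $Q_i$ is a deterministic function of $X_i$, the $\sigma$-algebra generated by $X_{-j}$ refines that generated by $\{Q_i\}_{i\neq j}$, and the law of total variance gives
\[
D \ \geq\ \E{\var\!\bigl(g(X_1^n)\bigm|Q_j(X_j),\,X_{-j}\bigr)}.
\]
Because $w_j$ is piecewise monotonic with finitely many pieces, the preimage under $w_j$ of every uniform cell is a union of at most finitely many intervals. For each $s\in S$, the assumption $w_j(s)=w_j(t(s))$ forces $s$ and $t(s)$ onto distinct monotonic pieces of $w_j$, so for every $K$ they share a single preimage cell; for large $K$, this cell splits into disjoint intervals $A_K(s)$ around $s$ and $B_K(s)$ around $t(s)$, each of length $\Theta(K^{-1})$.

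\emph{Between-cluster bound and Riemann limit.} Fix $x_{-j}$ and a uniform cell index $q_j$ whose preimage meets $S$, with representative $s\in S$. Writing $a_K=\P{X_j\in A_K(s)}$ and $b_K=\P{X_j\in B_K(s)}$, the standard decomposition of variance over disjoint groups yields
\[
\P{Q_j(X_j)=q_j}\,\var\!\bigl(g(x_{-j},X_j)\bigm|Q_j(X_j)=q_j\bigr) \ \geq\ \frac{a_K\,b_K}{a_K+b_K}\bigl(g(x_{-j},s)-g(x_{-j},t(s))\bigr)^2 \,-\, \eta_K(s,x_{-j}),
\]
where $\eta_K\to 0$ by continuity of $g$ on the two shrinking intervals. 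Each of $a_K,b_K$ is $\Theta(K^{-1})$, so the leading coefficient is $\Theta(K^{-1})$, while the number of contributing cells is $\Theta(K)$. Summing over those cells and converting to a Riemann sum via the change of variable $u=w_j(s)$ yields, in the limit,
\[
\liminf_{K\to\infty} D \ \geq\ \int_{S}\kappa(s)\,\E{\bigl(g(X_{-j},s)-g(X_{-j},t(s))\bigr)^2}\,ds,
\]
with strictly positive weight $\kappa(s)$ assembled from $f_{X_j}$ and $|w_j'|$ at $s$ and at $t(s)$. The equivalence-free hypothesis makes $\E{(g(X_{-j},s)-g(X_{-j},t(s)))^2}>0$ for every $s\in S$ (this being equivalent to $v_j(s,t(s))>0$), and $\P{X_j\in S}>0$ makes the integral strictly positive, which is the desired resolution-independent lower bound.

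\emph{Main obstacle.} The principal difficulty is making the Riemann-sum limit and the between-cluster bound uniform in $K$: one must handle the finitely many junctions between monotonic pieces of $w_j$, boundary points of $S$, and any points where $w_j'$ vanishes or blows up. I would first replace $S$ by a compact subset $S'\subseteq S$ of positive probability on which $w_j$ and the implicit map $s\mapsto t(s)$ are $C^1$ with derivatives bounded away from $0$ and $\infty$, and on which $t(S')$ sits in the interior of a single monotonic piece of $w_j$; such an $S'$ exists because there are only finitely many problematic points. Establishing the lower bound on $S'$ alone suffices, and the smoothness hypothesis on $g$ (MF1) then allows the continuity error $\eta_K$ to be controlled uniformly in $s\in S'$.
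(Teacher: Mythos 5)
Your opening reduction, $D \ge \E{\var(g(X_1^n)\mid Q_j(X_j),\{X_i\}_{i\neq j})}$, coincides with the paper's first step, but after that you take a much heavier route: cell-by-cell analysis at finite resolution, a between-cluster variance bound inside each merged cell, and a Riemann-sum limit. The decisive simplification you miss is that $\Yhat_j=q(w_j(X_j))$ is a deterministic function of $w_j(X_j)$, so one may condition on the \emph{unquantized} value $w_j(X_j)$ together with $\{X_i\}_{i\neq j}$, obtaining the lower bound $\E{\var(g(X_1^n)\mid w_j(X_j),\{X_i\}_{i\neq j})}$, which contains no $K$ whatsoever---resolution-independence is automatic. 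Restricting to the positive-probability event $X_j\in S\cup t(S)$, the conditioning set $w_j^{-1}(w_j(X_j))$ contains the two distinct points $s$ and $t(s)$, so equivalence-freeness makes the conditional variance strictly positive, and the bound follows. That is essentially the paper's entire proof: no cell geometry, no Riemann sums, no regularity of $t$ or of $w_j'$ is needed.

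As written, your route has genuine gaps exactly where you flag the ``main obstacle.'' The hypotheses give no regularity on $t$ (it is an arbitrary function with $w_j(t(s))=w_j(s)$, not even measurable a priori), and a generalized compander may be flat on an interval or have derivative vanishing on a positive-measure subset of a monotone piece; hence the compact $S'\subseteq S$ of positive probability on which $w_j$ and $s\mapsto t(s)$ are $C^1$ with derivatives bounded away from $0$ and $\infty$ need not exist, and your claims that $s$ and $t(s)$ lie on distinct monotone pieces and that the merged preimage splits into two intervals of length $\Theta(K^{-1})$ can fail (on a flat piece the merged set has length $\Theta(1)$). Those degenerate cases are in fact the easiest ones for the theorem, but your argument as stated does not cover them, so a complete proof along your lines requires a case analysis that the conditioning-on-$w_j(X_j)$ argument renders unnecessary. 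A smaller point: your conclusion bounds $\liminf_{K\to\infty}D$, i.e.\ asymptotic positivity; to get the claimed resolution-independent lower bound you should also observe that $D>0$ at every finite resolution (by the same between-cluster reasoning), so that the infimum over all resolutions is positive.
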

\begin{IEEEproof}
See Appendix~\ref{app:equivalencefree}.
\end{IEEEproof}

The positive, rate-independent lower bound shows that the quantizer
is suboptimal if the rate is sufficiently high;
even naive uniform quantization will yield distortion with $O(2^{-2\RS})$
dependence on rate and
thus will eventually outperform the strictly non-regular quantizer.

When a function has equivalences,
the best asymptotic quantization tactic is to design companders
that bin all the equivalent values in each variable but are otherwise monotonic.
In effect, this procedure losslessly converts the function into 
one that is equivalence-free.  One might consider this a real-valued-source analogue
of the functional compression procedure suggested by Doshi \emph{et al.}~\cite{DoshiSMJ2007}.

\section{Don't-Care Intervals and Rate Amplification}
\label{sec:DontCare}
Ordinary high-resolution analysis produces point-density functions
that reflect the source pdf in the sense that
optimal quantizers never have zero point density where there is
nonzero probability density.  In fact, having zero point density
where there is nonzero probability density invalidates high-resolution analysis.
The situation is more complicated in the functional setting since the optimal
point densities depend on both the functional sensitivity profiles
and the source probability distribution.
Having zero functional sensitivity where the probability density is
nonzero changes the optimal quantizers in the variable-rate case.

The following example illustrates the potential for failure of the
analysis of Section~\ref{sec:multi-densities}.
Note that the intricacies arise even with a univariate function.

\begin{example}
\label{ex:dontcare}
Let $X$ have the uniform probability distribution over $[0,1]$, and 
suppose the function of interest is $g(X) = \min(X,1/2)$.
It is clear that the optimal quantizer (for both fixed- and variable-rate)
has uniform point density on $[0,1/2]$.
With the functional sensitivity profile given by
\[ \gamma(x) = \left\{ \begin{array}{ll}
                   1, & \mbox{if $x < 1/2$}; \\
                   0, & \mbox{otherwise}\mbox{,}
                       \end{array} \right.
\]
evaluating \eqref{eq:1dFixedLambda} and \eqref{eq:1dOptimalVariableLambda}
is consistent with the intuitive result.

The distortion for the fixed-rate case obtained from \eqref{eq:1dFixedDist}
is $(1/12)(1/2)^3 2^{-2R}$.  This is sensible since for half of the source values
($X > 1/2$) there is zero distortion by having a single codeword at $1/2$, whereas
for the other half of the source values ($X < 1/2$), $2^R - 1$ codewords
quantize a random variable uniformly distributed over $[0,1/2]$. 
However, assumption MF4 is not satisfied by this quantizer point density,
so it is unclear whether this expression is an asymptotically valid approximation for
the distortion-rate function.

The variable-rate case is also problematic.
Since $\E{\log \gamma(X)} = -\infty$, 
evaluating \eqref{eq:1dVariableDist} yields $\Dvrhr = 0$.
Both the distortion-resolution and resolution-rate
analyses fail because the
quantization is not fine over the full support of $f_X$.
However, if an alternative quantization structure is used,
the distortion-rate performance can be accurately determined.
In this alternative structure, the first representation
bit specifies the event $A = \{X < 1/2\}$ or its complement.
Since additional bits are useful only when $A$ occurs,
one can spend $2(R-1)$ bits in those cases to have an average expenditure
of $R$ bits.
The resulting distortion is
\begin{eqnarray*}
  \Dvrhr & = & \P{A} d_{g|A}^{\rm HR} + \P{A^c} d_{g|A^c}^{\rm HR} \\
    & = & \half \cdot {\ts\frac{1}{12}}(\half)^2 2^{-2(2R-2)} + \half \cdot 0
    \ = \ {\ts\frac{1}{6}} 2^{-4R}\mbox{.}
\end{eqnarray*}
Note that the exponent in the distortion--rate relationship
is larger than it was in the fixed-rate case.
\hfill $\Box$
\end{example}

In the example, there is an interval $X \in [1/2, 1]$ of source values
that need not be distinguished for function evaluation.
Let us define a term for such intervals before discussing the example
further.

\begin{definition}
An interval $Z \subset [0,1]$ is called a \emph{don't-care interval}
for the $j$th variable when the $j$th functional sensitivity $\gamma_j$
is identically zero on $Z$, but the probability $\P{X_j \in Z}$ is positive.
\end{definition}

In univariate FSQ, at sufficiently high rates,
each don't-care interval corresponding to a distinct value of
the function should be allotted one codeword.
This follows from reasoning similar to that given in
Section~\ref{sec:equiv-free} and is illustrated by Example~\ref{ex:dontcare}.
In the fixed-rate case, the don't-care intervals simply occupy a few of
the $2^R$ codewords and have a limited effect.
In the variable-rate case, however, the don't-care intervals
produce a subset of source values that can be allotted very little rate.
This gives more rate to be allotted outside the don't-care intervals
and behavior we refer to as \emph{rate amplification}.

We derive the high-resolution distortion-resolution function for this quantizer structure
in Section~\ref{sec:dontcare-fixed}, and in section~\ref{sec:dontcare-variable}
the distortion-rate function is obtained.

\subsection{The Distortion-Resolution Function}
\label{sec:dontcare-fixed}
In the following analysis we will assume that the $j$th variable
has a finite number $M_j$ of don't-care intervals
$\{Z_{j,1},\,Z_{j,2},\,\ldots,\,Z_{j,{M_j}}\}$.
We also assume
\beq
  \label{eq:dontcare-probability}
  \P{ X_j \in Z_j } < 1
\quad
  \mbox{for }j = 1,\,2\,\ldots,\,n\mbox{,}
\eeq
where $Z_j = \cup_{i=1}^{M_j} Z_{j,i}$ denotes the union of don't-care
intervals for the $j$th variable.
Without this, there is no improvement beyond $M_j$ levels in representing $X_j$,
so the high-resolution approach is wholly inappropriate. 
We will denote the event $X_j \notin Z_j$ by $A_j$.

At sufficiently high rates,
it is intuitive to allot a codeword of $Q_j$
to each don't-care interval $Z_{j,i}$.
The remaining $K_j - M_j$ codewords are assigned optimally to
$[0,1] \setminus Z_j$ according to the basic theory developed
in Section~\ref{sec:Multi}.  We refer to
this quantizer structure as a \emph{don't-care quantizer}.

\begin{theorem}
\label{thm:fixedratesame}
Suppose $n$ sources $X_1^n \in [0,1]^n$ are quantized by a sequence
of distributed don't-care quantizers $Q_K^{\wB}$.
Further suppose that the sources, quantizers, and
function $g: [0,1]^n \rightarrow \R$ satisfy assumptions
MF1--MF3, and assumption MF4 is replaced by the following:
The integrals
\[
\int_{[0,1]\setminus Z_j}f(x_j)\E{|g_j(X^n)|^2 \mid X_j = x_j}\lambda_j(x_j)^{-2} \, dx_j
\]
are finite for every $j\in\{1,\ldots,n\}$.
Finally, assume each source $X_j$ has $M_j$ don't-care intervals
satisfying \eqref{eq:dontcare-probability}.
Then the high-resolution distortion-resolution function is asymptotically
accurate to the true distortion-resolution function:
\beqa
\dhr(K;\lB) & = & \frac{n}{12}  \left( \prod_{j=1}^n \frac{\P{A_j}}{(K_j-M_j)^2}
          \E{\left(\frac{\gamma_j(X_j)}{\lambda_j(X_j)}\right)^2
             \mid A_j}\right)^{1/n}
          \mbox{,}
  \label{eq:dontCareDist} \\
 & \sim & d(K;\lB) \mbox{.} \nonumber
\eeqa
\end{theorem}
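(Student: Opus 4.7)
The plan is to adapt the proof of Theorem~\ref{thm:multi-distortion} to accommodate the don't-care structure, exploiting the fact that on each $Z_{j,i}$ the function $g$ does not genuinely depend on $X_j$. The key observation is that $\gamma_j(x) = 0$ for $x \in Z_{j,i}$ forces $g_j(X_1^n) = 0$ almost surely when $X_j \in Z_{j,i}$ by the definition of $\gamma_j$; since each $Z_{j,i}$ is a connected interval, integration then gives that $g$ is, almost surely in the remaining variables, constant in $x_j$ on $Z_{j,i}$. Consequently, collapsing each don't-care interval to a single codeword contributes zero functional distortion, and the entire functional penalty from the $j$th coordinate is generated on $A_j$.

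First I would stratify the per-variable contribution to the distortion from the proof of Theorem~\ref{thm:multi-distortion} according to the events $\{X_j \in Z_j\}$ and $\{X_j \in A_j\}$. The Taylor-expansion argument used there produces a sum $\sum_{j=1}^n D_j$, and on the event $\{X_j \in Z_j\}$ the integrand involves $\gamma_j^2(X_j) = 0$, so this portion vanishes exactly. Only the $\{X_j \in A_j\}$ portion of $D_j$ survives, and on $A_j$ the quantizer uses $\lfloor K^{\alpha_j}\rfloor - M_j$ codewords distributed with point density $\lambda_j$ (normalized so that $\int_{A_j}\lambda_j(x)\,dx = 1$).

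Next I would apply the cell-length approximation (HR2) on $A_j$ with the reduced codeword count: for $x$ in a cell $S_{j,i} \subset A_j$, $\length(S_{j,i}) \sim ((\lfloor K^{\alpha_j}\rfloor - M_j)\lambda_j(x))^{-1}$. Invoking the same local-linearization and smoothness machinery used in Theorem~\ref{thm:multi-distortion}, but restricted to $A_j$ via the strengthened integrability hypothesis replacing MF4, the surviving $j$th term becomes
\[
\frac{1}{12(\lfloor K^{\alpha_j}\rfloor-M_j)^2}\int_{A_j}\frac{\gamma_j^2(x)}{\lambda_j^2(x)}f_{X_j}(x)\,dx,
\]
which is exactly $\P{A_j}(\lfloor K^{\alpha_j}\rfloor - M_j)^{-2}\E{(\gamma_j(X_j)/\lambda_j(X_j))^2 \mid A_j}/12$. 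Summing over $j$ yields \eqref{eq:dontCareDist}.

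The hard part will be a clean extension of the technical lemmas used in the appendix proof of Theorem~\ref{thm:multi-distortion} to the restricted domain $A_j$: one must justify the dominated-convergence-style bounds on cross-terms and Taylor remainders after stratification, verify that the boundary effects at the endpoints of $A_j$ are negligible at order $K^{-2\alpha_j}$, and confirm that condition~\eqref{eq:dontcare-probability} (together with $M_j$ finite) guarantees $(\lfloor K^{\alpha_j}\rfloor - M_j)^{-2}/K^{-2\alpha_j} \to 1$ so the replacement of $K^{\alpha_j}$ by the reduced count is asymptotically inert. The rest is a straightforward transcription of the Theorem~\ref{thm:multi-distortion} argument with the measure $f_{X_j}(x)\mathbf{1}_{A_j}\,dx$ in place of $f_{X_j}(x)\,dx$.
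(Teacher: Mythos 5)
Your proposal is correct and follows essentially the same route as the paper, whose proof is simply to apply Theorem~\ref{thm:multi-distortion} restricted to the region $\left([0,1]\setminus Z_1\right)\times\cdots\times\left([0,1]\setminus Z_n\right)$, exactly as you do: the don't-care cells contribute nothing because $\gamma_j\equiv 0$ there, and on the complement the Theorem~\ref{thm:multi-distortion} machinery applies verbatim with $\lfloor K^{\alpha_j}\rfloor-M_j$ codewords per variable and the strengthened integrability hypothesis in place of MF4. Your additional technical remarks (stratification, boundary effects, and the asymptotic equivalence of the reduced codeword count) only make explicit what the paper's one-line proof leaves implicit.
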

\begin{IEEEproof}
Follows from applying Theorem \ref{thm:multi-distortion} to the
region $\left([0,1]\setminus Z_1\right) \times \cdots \times \left([0,1]\setminus Z_n\right)$.
\end{IEEEproof}

\subsection{The Distortion-Rate Functions}
\label{sec:dontcare-variable}
In the fixed-rate case, the high-resolution resolution-rate function
is unchanged: $\Kfrhr(R;\lB) = 2^R$.  Asymptotic validity is
easily observed: $\lim_{R\rightarrow \infty}\Kfr(R;\lB)/\Kfrhr(R;\lB) = 1$.
Applying this to the distortion-resolution expression \eqref{eq:dontCareDist},
we obtain the unoptimized high-resolution fixed-rate distortion-rate function:
\beqan
\Dfr(R; \lB) \sim \Dfrhr(R;\lB) & = & \frac{n}{12} \left(  \prod_{j=1}^n\frac{\P{A_j}}{(K_j-M_j)^2}
          \E{\left(\frac{\gamma_j(X_j)}{\lambda_j(X_j)}\right)^2
             \mid A_j}\right)^{1/n} \\
& \sim &  \frac{n}{12K^{2/n}} \left( \prod_{j=1}^n \P{A_j}
          \E{\left(\frac{\gamma_j(X_j)}{\lambda_j(X_j)}\right)^2
             \mid A_j}\right)^{1/n}
             \mbox{.}
\eeqan   
The optimal point densities for fixed-rate quantization
are given by 
\eqref{eq:multi-fixed-lambda} outside of the don't-care intervals.
These point densities yield an optimized high-resolution fixed-rate
distortion-rate function
\beq
\Dfrhr(R) = \frac{n}{12} \left( \prod_{j=1}^n\|\gamma_j^2 f_{X_j}\|_{1/3}\right)^{1/n} 2^{-2R/n} \mbox{.}
\eeq                              

The variable-rate case is a bit more involved.
To formalize the analysis,
we define discrete random variables to represent the events of
source variables lying in don't-care intervals.

\begin{definition}
The random variable
\[ I_j = \left\{ \begin{array}{ll}
       i, & \mbox{if $X_j \in Z_{j,i}$ for $i \in \{1,\,2,\,\ldots,\,M_j\}$}; \\
       0, & \mbox{otherwise}
                 \end{array} \right.
\]
is called the $j$th \emph{don't-care variable}.
The previously-defined event $A_j$ can be expressed as $\{I_j = 0\}$.
\end{definition}

At sufficiently high rate, the $j$th encoder communicates $I_j$ and
in addition, \emph{only when $I_j = 0$}, a fine quantization of $X_j$.
The resulting performance is summarized by the following theorem.

\begin{theorem} \label{thm:variableratedifferent}
Under the conditions of Theorem~\ref{thm:fixedratesame},
the optimal point densities for variable-rate quantization follow
\eqref{eq:multi-var-lambda} and yield
\beqa
\Dvrhr(R) & = & \frac{n}{12}  \left( \prod_{j=1}^n\rho_j^{-1} \right. \label{eq:dontCareDistVar}   \\
 &          & \left. \qquad \times  2^{-2(\rho_j (R- H(I_j))
                +2h(X_j \mid A_j)
                +2\E{\log(\gamma_j(X_j)) \mid A_j}} \right)^{1/n} \mbox{.}\nonumber
\eeqa
where $\rho_j = 1/\P{A_j}$ is the \emph{amplification}
of $R_j$.  


\end{theorem}
\begin{IEEEproof}
See Appendix~\ref{app:dontCareVar}.
\end{IEEEproof}

Some remarks:
\begin{enumerate}
\item
The quantity $H(I_j)$
may be identified as the cost of communicating
the indicator information to the decoder.
The remaining rate, $R_j-H(I_j)$, is amplified by factor $\rho_j$
because additional description of $X_j$ is useful only when $X_j \notin Z_j$.
The amplification shows that the standard $-6$ dB/bit-per-source distortion decay
may be exceeded in the presence of don't-care regions.

\item
At moderate rates, it may not be optimal to communicate $I_j$ losslessly,
and it may be beneficial to include $X_j$ values with small but positive
$\gamma_j$ in don't-care intervals.
Study of this topic is left for specific applications.

\item
The rate amplification we have seen in the variable-rate case and the
relative lack of importance of don't-care intervals in the fixed-rate case
have a close analogy in ordinary lossy source coding.
Suppose a source $X$ is a mixed random variable with
an $M$-valued discrete component and a continuous component.
High-resolution quantization of $X$ will allocate one level to each
discrete value and the remaining levels to the continuous component.
The discrete component changes the constant factor in $\Theta(2^{-2R})$
fixed-rate operational distortion--rate performance while it changes
the decay rate in the variable-rate case.
See~\cite{GyorgiLZ1999} for related Shannon-theoretic
(rather than high-resolution quantization) results.
\end{enumerate}

\section{Chatting Encoders}
\label{sec:Chatting}
Our final variation on the basic theory of distributed functional
scalar quantization is to allow limited communication between the encoders.
How much can the distortion be reduced via this communication?
Echoing the results of the previous section, we will find
dramatically different answers in the fixed- and variable-rate cases.

For notational convenience, we will fix the communication to be from
encoder 2 to encoder 1 though the number of source variables $n$ remains
general.
In accordance with the block diagram of Fig.~\ref{fig:chatting},
the information $Y = Y_{2\rightarrow1}$ must be conditionally
independent of $X_1$ given $X_2$. 
We consider only the case where $Y$ is a single bit;
this suffices to illustrate the key ideas.

\begin{figure}
 \centering
  \psfrag{x1}[r][][1][0]{$X_1$}
  \psfrag{x2}[r][][1][0]{$X_2$}
  \psfrag{x1h}[r][][1][0]{$\widehat{X}_1$}
  \psfrag{x2h}[r][][1][0]{$\widehat{X}_2$}
  \psfrag{y}[l][l][1][0]{$Y = Y_{2\rightarrow1}$}
  \psfrag{Q1}[b][][1][0]{$Q_1$}
  \psfrag{Q2}[b][][1][0]{$Q_2$}
  \includegraphics[width=0.3\textwidth]{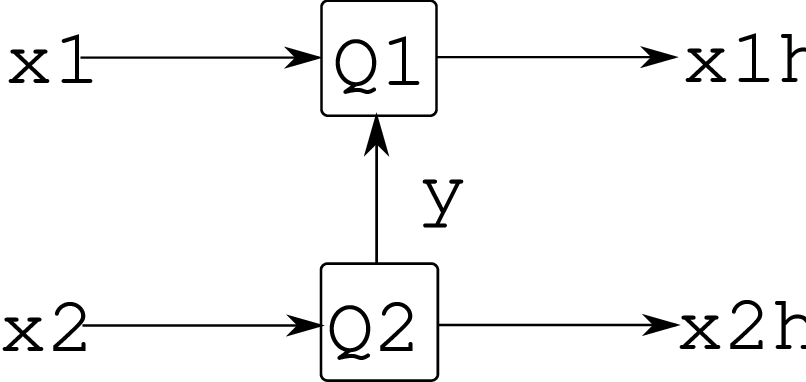}
  \caption{Suppose the encoder for $X_2$ could send a bit to the encoder
     for $X_1$.  Is there any benefit?  How does it compare to sending an
     additional bit to the decoder?}
  \label{fig:chatting}
\end{figure}

In this section, we express the high-resolution distortion as 
\[ 
D^{\rm HR} = \frac{n}{12} 2^{-2R/n} \left( \prod_{j=1}^n D_j\right)^{1/n} \mbox{,}
\]
where various expressions for $D_j$ have been found for different
scenarios, including for fixed-rate \eqref{eq:multi-fixed-alloc-dist} and variable-rate \eqref{eq:multi-var-alloc-dist}
quantization.
At issue is how $D_1$ is affected by $Y$;
the other $D_j$s are not affected.

\subsection{Fixed-Rate Quantization}
In general, the availability of a single bit $Y$ causes one to
choose between two potentially-different quantizers $Q_{1\mid Y=0}$
and
$Q_{1\mid Y=1}$ in the quantization of $X_1$.
We express the optimal quantizers and the resulting distortion contribution
$D_1$ by way of the following concept.

\begin{definition}
The $j$th \emph{conditional functional sensitivity profile} of $g$
given $Y = y$ is defined as
\[ 
  \gamma_{j|Y}(x \mid y)
    = \left(\E{\left|{g_j(X_1^n)}\right|^2 \mid X_j = x, Y=y}\right)^{1/2} \mbox{.}
\]
\end{definition}

Now several results follow by analogy with Theorem~\ref{thm:multi-summary}.
For the case of $Y = y$, the optimal point density is given by
$$
  \lambda_{1|Y}(x \mid y)
       = \frac{\left(\gamma_{1|Y}^2(x \mid y) f_{X_1|Y}(x \mid y)\right)^{1/3}}
              {\int_0^1 \left(\gamma_{1|Y}^2(t \mid y) f_{X_1|Y}(t \mid y)\right)^{1/3} \, dt}
$$
resulting in conditional distortion contribution
$$
  \frac{1}{12K_1^2}\left\| \gamma_{1 | Y=y}^2 f_{X_1 | Y=y} \right\|_{1/3}\mbox{.}
$$
Combining the two possibilities for $Y$ via total expectation gives
\beq
  D_1  =  \sum_{y=0}^1 \P{Y=y} \left\| \gamma_{1 | Y=y}^2 f_{X_1 | Y=y} \right\|_{1/3}.
             \label{eq:chat-fixed}
\eeq
From this expression we reach an important conclusion on the
effect of the chatting bit $Y$.

\begin{theorem} 
\label{thm:FixedRateChatting}
For fixed-rate quantization, communication of one bit of information from 
decoder 2 to decoder 1 will asymptotically reduce $D_1$ by at most a factor of 4\@.
\end{theorem}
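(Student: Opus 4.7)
The plan is to show $D_1^{\rm no\ chat} \le 4\,D_1^{\rm chat}$, where $D_1^{\rm no\ chat} = \|\gamma_1^2 f_{X_1}\|_{1/3}$ can be read off from \eqref{eq:multi-fixed-dist} and $D_1^{\rm chat}$ is given by \eqref{eq:chat-fixed}. Since the other $D_j$ are unaffected by $Y$ and all rates are held fixed, this inequality on a single summand translates directly into the factor-of-$4$ bound on the asymptotic distortion reduction.

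The first step is a pointwise decomposition of the sensitivity-weighted density:
\[
\gamma_1^2(x)\,f_{X_1}(x) \;=\; \sum_{y=0}^{1} \P{Y=y}\,\gamma_{1|Y}^2(x\mid y)\,f_{X_1|Y}(x\mid y).
\]
I would verify this by writing $\gamma_1^2(x)\,f_{X_1}(x)$ as the marginal integral $\int |g_1(x,x_2^n)|^2\,f_{X_1^n}(x_1^n)\,dx_2^n$, then recognizing that conditioning on $Y = y$ and summing against $\P{Y=y}$ recovers $f_{X_1^n}$ (this uses that $Y$ is generated at encoder $2$, so the conditional sensitivity is still a conditional expectation of $|g_1|^2$). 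Setting $a(x) = \P{Y=0}\,\gamma_{1|Y}^2(x\mid 0)\,f_{X_1|Y}(x\mid 0)$ and $b(x) = \P{Y=1}\,\gamma_{1|Y}^2(x\mid 1)\,f_{X_1|Y}(x\mid 1)$, the decomposition becomes $a+b = \gamma_1^2 f_{X_1}$, and positive homogeneity of $\|\cdot\|_{1/3}$ converts \eqref{eq:chat-fixed} into $D_1^{\rm chat} = \|a\|_{1/3} + \|b\|_{1/3}$, while $D_1^{\rm no\ chat} = \|a+b\|_{1/3}$.

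The remainder is a short functional inequality. Since $a,b \ge 0$ and $t \mapsto t^{1/3}$ is subadditive on $[0,\infty)$, the pointwise bound $(a+b)^{1/3} \le a^{1/3} + b^{1/3}$ integrates, using $\|f\|_{1/3} = (\int f^{1/3})^{3}$, to
\[
\|a+b\|_{1/3}^{1/3} \;\le\; \|a\|_{1/3}^{1/3} + \|b\|_{1/3}^{1/3}.
\]
Writing $A = \|a\|_{1/3}^{1/3}$ and $B = \|b\|_{1/3}^{1/3}$, it suffices to verify the scalar inequality $(A+B)^{3} \le 4(A^{3}+B^{3})$, which rearranges to $(A+B)(A-B)^{2} \ge 0$ and is immediate. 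Chaining the two inequalities yields $D_1^{\rm no\ chat} \le 4\,D_1^{\rm chat}$, proving the theorem.

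I do not foresee a substantive obstacle; the decomposition is the only step that requires care (one must properly invoke the conditional independence of $Y$ and $X_1$ so that $\gamma_{1|Y}$ really arises from conditioning the same integrand that defines $\gamma_1$). Everything else is concavity of the cube root together with a single cubic inequality, and equality in the latter when $A = B$ indicates that the factor of $4$ is sharp --- in other words, a chatting bit can do no better than an extra directly-transmitted bit, which is the intuitive content of the result.
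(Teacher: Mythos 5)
Your proof is correct and follows essentially the same route as the paper: the paper proves the bound by applying the same decomposition $\gamma_1^2 f_{X_1} = \sum_{y}\P{Y=y}\,\gamma_{1|Y}^2(\cdot\mid y)\,f_{X_1|Y}(\cdot\mid y)$ (via Bayes's rule and total expectation) together with the $\mathcal{L}^{1/3}$ quasi-triangle inequality with constant 4, whose proof in Appendix~\ref{app:triangle} consists of exactly the two steps you spell out (pointwise subadditivity of the cube root and the cubic inequality $(A+B)^3 \leq 4(A^3+B^3)$). The only cosmetic difference is that you unpack that lemma inline rather than invoking it.
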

\begin{IEEEproof}
From Theorem~\ref{thm:multi-summary}, the distortion contribution
analogous to \eqref{eq:chat-fixed} without the chatting bit $Y$ is
$\left\| \gamma_{1}^2 f_{X_1} \right\|_{1/3}$.
Thus the fact we wish to prove is a statement about $\Lonethird$ quasinorms
of weighted densities and their conditional forms.

We proceed as follows:
\beqan
D_1 & = & \sum_{y=0}^1 \left\| \P{Y=y} \gamma^2_{1|Y}(x \mid y)
                                           f_{X_1|Y}(x \mid y) \right\|_{1/3} \\
  & \geqlabel{a} & \frac{1}{4}\left\|
                     \sum_{y=0}^1 \P{Y=y} \gamma^2_{1|Y}(x \mid y)
                                           f_{X_1|Y}(x \mid y) \right\|_{1/3} \\
  & = & \frac{1}{4}\left\| f_{X_1}(x)
                     \sum_{y=0}^1 \frac{\P{Y=y}
                                           f_{X_1|Y}(x \mid y)}
                                       {f_{X_1}(x)}
                                          \gamma^2_{1|Y}(x \mid y)
                                                               \right\|_{1/3} \\
  & \eqlabel{b} & \frac{1}{4}\left\| f_{X_1}(x)
                     \sum_{y=0}^1 \P{Y=y \mid X_1 = x}
                                          \gamma^2_{1|Y}(x \mid y)
                                                               \right\|_{1/3} \\
  & \eqlabel{c} & \frac{1}{4}\left\| f_{X_1}(x)
                                          \gamma^2_{1}(x)
                                                               \right\|_{1/3} \mbox{,}
\eeqan
where (a) uses a quasi-triangle inequality that may be established via well-known
inequalities (see Appendix \ref{app:triangle} for a statement and proof);
(b) is an application of Bayes's Rule; and
(c) is based on an evaluation of the (unconditional) functional sensitivity
via the total expectation theorem with conditioning on $Y$.
This proves the theorem.
\end{IEEEproof}

Note that
while the bit $Y$ leads a reduction of $D_1$ by at most a factor of 4 and therefore
a reduction of $D^{\rm HR}$ by at most a factor of $4^{1/n}$,
an identical reduction in distortion is achieved simply by increasing the rate $R$ to the centralized
decoder by one bit\@.
Generalizing to any number of chatting bits, we obtain
the following corollary.
\begin{corollary} 
\label{cor:FixedChattingBad}
For fixed-rate functional quantization, communication of some number of bits
from encoder $j$ to encoder $k$ performs \emph{at best} as well as increasing
the communication to the centralized decoder by the same
number of bits.
\end{corollary}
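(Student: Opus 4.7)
The plan is to prove the corollary by induction on the total number of chatting bits $m$, with Theorem~\ref{thm:FixedRateChatting} serving as the base case $m = 1$. Write the $m$-bit chatting variable as $Y = (Y_1, Y_{2:m})$, where $Y_1 \in \{0,1\}$ is the first bit and $Y_{2:m}$ is the remaining $(m-1)$-bit tuple. I would handle the two components sequentially: $Y_1$ via Theorem~\ref{thm:FixedRateChatting}, and $Y_{2:m}$ via the inductive hypothesis applied to the conditional problem.

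Concretely, conditional on $Y_1 = y_1$ the DFSQ setup persists with sources drawn from $f_{X_1^n \mid Y_1 = y_1}$ and with an auxiliary $(m-1)$-bit chatting link $Y_{2:m}$ from encoder $j$ to encoder $k$; by the inductive hypothesis, the asymptotic distortion contribution $D_k$ in this subproblem is at least $4^{-(m-1)}$ times the corresponding no-further-chat contribution. Averaging this bound over $Y_1$ yields the chatting-aided distortion under $Y_1$ alone, to which Theorem~\ref{thm:FixedRateChatting} then applies, contributing at most another factor of $4$. Combining:
\[
D_k(\text{with $m$ chatting bits}) \;\geq\; 4^{-(m-1)} \cdot 4^{-1} \, D_k^{\text{no-chat}} \;=\; 4^{-m}\, D_k^{\text{no-chat}}.
\]
Since every additional bit of direct link capacity from encoder $k$ to the decoder already multiplies $D_k^{\rm HR}\,2^{-2 R_k}$ by $\tfrac{1}{4}$, an $m$-bit direct increase also reduces that term by exactly $4^m$, matching the asymptotic limit that any $m$-bit chatting scheme can attain.

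The main obstacle is a modest bookkeeping matter: verifying that the conditional problem to which the inductive hypothesis is applied still satisfies the hypotheses of Theorem~\ref{thm:multi-distortion}, and in particular that conditioning on a positive-probability event preserves the regularity assumptions MF1--MF4 for almost every $y_1$. One also needs the remaining chatting variable $Y_{2:m}$ to retain the required conditional independence $Y_{2:m} \perp X_k \mid X_j$ after conditioning on $Y_1$, and to confirm that a general $m$-bit chatting protocol---even one that is adaptive across bits---can be re-expressed without loss as a single $2^m$-valued side variable to which the recursion applies. No high-resolution machinery beyond Theorem~\ref{thm:FixedRateChatting} is required, so once these measurability details are dispatched the induction closes immediately.
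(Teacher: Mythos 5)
Your induction is correct and is essentially the argument the paper leaves implicit: the corollary is stated there as a direct generalization of Theorem~\ref{thm:FixedRateChatting}, obtained by applying the one-bit factor-of-$4$ bound bit by bit (equivalently, via an $N=2^m$-term version of the quasi-triangle inequality, since $\bigl(\sum_i a_i\bigr)^3 \leq N^2 \sum_i a_i^3$), exactly as you do, and the comparison with raising $R_k$ by $m$ bits (a factor $2^{-2m}=4^{-m}$ in the fixed-rate term) matches the paper's reasoning. The bookkeeping items you flag---that conditional independence of the remaining bits given $X_j$ survives conditioning on $Y_1$, and that an adaptive $m$-bit protocol can be treated as a single $2^m$-valued side variable---are indeed routine and do not affect the conclusion.
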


In general, the idea that bits from encoder 2 to encoder 1 are as good as
bits from encoder 1 to the decoder is optimistic.
In particular, if $\E{\gamma_1^2(X_1)} > 0$,
then $D_1$ is bounded away from zero for any amount of communication
from encoder 2 to encoder 1\@.

\subsection{Variable-Rate Quantization}
In a variable-rate scenario,
the rate could be made to depend on the chatting bit $Y$,
introducing a bit allocation problem between the cases of
$Y=0$ and $Y=1$.
Even without such dependence, we can demonstrate that the bit $Y$
can reduce the first variable's contribution to the functional
distortion by an arbitrary factor.

Analogous to \eqref{eq:chat-fixed},
\beq
  D_1  =  \sum_{y=0}^1 \P{Y=y} 
                     2^{2h(X_1 \mid Y=y) + 2\E{\log \gamma_{1 | Y=y}(X_1)}}
             \label{eq:chat-var}
\eeq
by comparison with \eqref{eq:multi-var-alloc-dist}.
In contrast to the $\Lonethird$ quasinorms in \eqref{eq:chat-fixed},
this linear combination can be arbitrarily smaller than
$$
                      2^{2h(X_1) + 2\E{\log \gamma_{1}(X_1)}}\mbox{.}
$$
We demonstrate this through a simple example.

\begin{figure}
 \centering
  \psfrag{1}[r][][1][0]{\large $\frac{1}{2}$}
  \psfrag{2}[r][][1][0]{\large $1$}
  \psfrag{a}[r][][1][0]{\large $1$}
  \psfrag{b}[r][][1][0]{\large $L$}
  \psfrag{x1}[r][][1][0]{\small $x_1$}
  \psfrag{x2}[r][][1][0]{\small $x_2$}
  \includegraphics[width=0.2\textwidth]{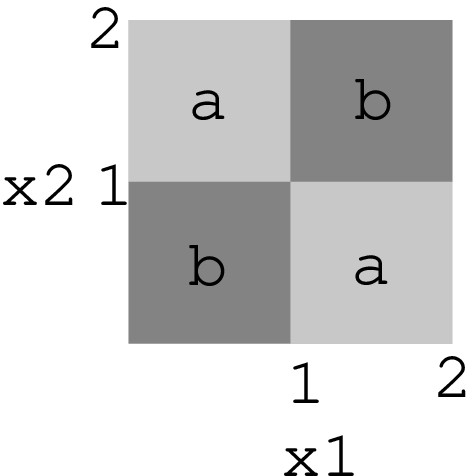}
  \caption{Illustration for Example~\ref{ex:chat-var}. 
       Shown is the unit square $[0,1]^2$ with quadrants marked with the
       value of $g_1(x_1,x_2)$, the derivative of $g$ with respect to $x_1$.}
  \label{fig:quadrant}
\end{figure}

\begin{example}
\label{ex:chat-var}
Let sources $X_1$ and $X_2$ be uniformly distributed on $[0,1]^2$.
We specify the function of interest $g$ through its partial derivatives.
Let $g_2(x_1,x_2) = 1$ for all $(x_1,x_2)$ and let
$g_1(x_1,x_2)$ be piecewise constant as shown in Fig.~\ref{fig:quadrant},
where $L$ is a positive constant.  

While $g(x_1,x_2)$ is not continuous everywhere and condition MF1 is
therefore not strictly satisfied, the points of discontinuity
fall along the line $x_2 = 1/2$.  As observed following the proof of
Corollary \ref{cor:single-discontinuous}, this variety of discontinuity
can be easily and intuitively merged with high-resolution analysis:
one simply places an extra quantizer cell boundary at $x_2 = 1/2$
for every quantizer in the sequence being considered.  This increases
the resolution $K$ by 1, but has negligible impact on the rate in the limit
$K\rightarrow \infty$.

We can easily derive the first functional sensitivity profile of $g$ to be
\[ 
\gamma_1(x) = \sqrt{\half(L^2+1)} \mbox{.}
\]
This also allows us to find the distortion contribution factor $D_1$
without chatting to be
\[ 
D_1 = {\ts\frac{1}{4}}(L^2+1)^2\mbox{.}
\]

In this example, one bit about $X_2$ is enough to allow the encoder for $X_1$
to perfectly tailor its point density to match the sensitivity of
$g$ at $(X_1,X_2)$.  Of course, the chatting bit should simply be
\[
Y = \left\{ \begin{array}{ll} 0, & \mbox{if $X_2 > 1/2$}; \\
                              1, & \mbox{otherwise}\mbox{.}
            \end{array}
    \right.
\]
The first conditional functional sensitivity profiles for $g$ are then
\[
\gamma_{1|Y}(x \mid y)
   = \left\{ \begin{array}{ll}
         1, & \mbox{for $Y=0$ and $X_1 \leq 1/2$} \\
            & \mbox{or $Y=1$ and $X_1 > 1/2$}; \\
         L, & \mbox{otherwise.}
             \end{array}
     \right.
\]
Now for either value of $y$, we have
$\int_0^1 \gamma_{1|Y}(x \mid y) \, dx = \half(L+1)$
and
$\E{\log \gamma_{1|Y=y}(X_1)} = \half \log L$.
Thus, evaluating \eqref{eq:chat-var} gives
$$
  D_1  = {\ts\frac{1}{4}}(L+1)^2L\mbox{.}
$$
This is smaller than the $D_1$ with no chatting by about a factor of $L$.
The performance gap can be made arbitrarily large by increasing $L$---all
from just one bit of information communicated between encoders per sample.
\hfill $\Box$
\end{example}

\subsection{Comparison with Ordinary Source Coding}
The results of this section are strikingly different from those of ordinary
source coding.   Consider first the discrete scenario in which we
with to recreate $X_1^n$ perfectly at the decoder. 
Can communication between encoders enable a reduction in the rate of
communication to the decoder?
According to Slepian and Wolf, the answer is a resounding ``no.''
Even in the case of unlimited collaboration via fused encoders,
the minimum sum rate to the decoder remains unchanged.

How about in lossy source coding?  If quantization is variable-rate
and Slepian--Wolf coding is employed on the quantization indices,
no gains are possible from encoder interactions. 
This is a consequence of the work of Rebollo-Monedero \emph{et al.}~\cite{Rebollo-MonederoRAG2006}
on high-resolution Wyner--Ziv coding, where it is shown that
there is no gain from supplying the source encoder with
the decoder side information.

\section{Summary}
\label{sec:Conclusion}
We have developed asymptotically-optimal companding designs of functional quantizers
using high-resolution quantization theory.
This has shown that accounting for a function while quantizing a source can
lead to arbitrarily large improvements in distortion. 
In certain scenarios (Section~\ref{sec:Scaling}),
this improvement can grow exponentially with the number of sources.
In others (Section~\ref{sec:DontCare}),
it can grow exponentially with rate. 

Additionally, our study of functional quantization has highlighted some
striking distinctions between fixed- and variable-rate cases:
\begin{enumerate}
\item For certain simple functions of order statistics,
  distortion relative to ordinary quantization
  falls polynomially with the number of sources in the fixed-rate
  case, whereas in the variable-rate case it falls exponentially.
\item The distortion associated with fixed-rate quantizers will always
  exhibit $-6$ dB/bit rate dependence at high rates, whereas
  the decay of distortion can be faster in some variable-rate cases.
\item Information sent from encoder-to-encoder can lead to arbitrarily-large
  improvements in distortion for variable-rate, whereas for fixed-rate this
  information can be no more useful than an equal amount of information
  sent to the decoder.
\end{enumerate}

The second and third of these have extensions or analogues beyond functional
quantization.  Rate amplification is a feature of quantizing sources with
mixed probability distributions, and the results on chatting encoders continue to hold
when the function $g$ is the identity operation. 

\appendices
\section{Proof of Theorem~\ref{thm:multi-distortion}}
\label{app:distortion}
%
%

\begin{lemma}
\label{lem:1Dlemma}
Let $X$ be a real-valued random variable distributed over a bounded interval $S$,
and let $g(x)$ be an absolutely continuous, real-valued function on $S$ with
bounded derivative $g'(x)$.
If $g'(x)$ is defined for \emph{almost every} $x \in S$, then
$\var \left(g(X) \right) \leq \var \left(bX \right)$, where
$b = \sup_{x\in S} |g'(x)|$.
If $g'(x)$ is furthermore defined for \emph{every} $x \in S$, then
$\var \left(g(X) \right) \geq \var \left(aX \right)$,
where $a = \inf_{x\in S} |g'(x)|$.
\end{lemma}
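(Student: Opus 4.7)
My plan is to prove both bounds via the ``independent copy'' identity $\var(Y) = \half \E{(Y-Y')^2}$, where $Y'$ is an independent copy of $Y$. Applied to both $g(X)$ and the linear function $bX$ (respectively $aX$), this identity reduces each inequality to an almost-sure pointwise comparison of $|g(X)-g(X')|$ with $b|X-X'|$ or $a|X-X'|$, which is a purely deterministic statement about $g$ on $S$.

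For the upper bound, the key pointwise claim is that $|g(x) - g(y)| \leq b|x-y|$ for all $x, y \in S$. Given that $g$ is continuous on $S$ with $|g'(t)| \leq b$ wherever $g'$ is defined (and defined almost everywhere), the fundamental theorem of calculus yields $g(y) - g(x) = \int_x^y g'(t)\,dt$, from which the Lipschitz bound follows by the triangle inequality for integrals. Taking expectations then gives
\[
\var(g(X)) = \half \E{(g(X)-g(X'))^2} \leq \half b^2 \E{(X-X')^2} = b^2 \var(X) = \var(bX).
\]

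For the lower bound, the strengthened hypothesis that $g'(x)$ is defined for \emph{every} $x \in S$ is essential. Since derivatives enjoy the intermediate value property (Darboux's theorem), the condition $|g'(x)| \geq a > 0$ everywhere forces $g'$ to have constant sign, so $g$ is strictly monotonic on $S$. The classical mean value theorem then gives $|g(x) - g(y)| = |g'(\xi)| \cdot |x-y| \geq a|x-y|$ for some $\xi$ strictly between $x$ and $y$. The independent-copy identity applied again yields $\var(g(X)) \geq a^2 \var(X) = \var(aX)$. The main subtlety lies in the upper bound: invoking the fundamental theorem of calculus under merely almost-everywhere differentiability implicitly requires absolute continuity to rule out pathological examples such as the Cantor function. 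I would justify this by appealing to the paper's standing smoothness assumptions (e.g.\ UF2 and MF1, which restrict non-differentiability to a set of zero Jordan measure), so that the integral representation holds piecewise on each subinterval where $g'$ is defined, and the Lipschitz bound then extends to all of $S$ by continuity of $g$.
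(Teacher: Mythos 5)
Your proof is essentially the paper's own argument: symmetrize with an independent copy (the paper phrases this as $\var(g(X_1)-g(X_2)) = 2\var(g(X))$ for i.i.d.\ $X_1,X_2$), reduce each inequality to the pointwise bounds $|g(x)-g(y)| \leq b\,|x-y|$ via the integral representation $g(y)-g(x)=\int_x^y g'(t)\,dt$ and $|g(x)-g(y)| \geq a\,|x-y|$ via the mean value theorem, then take expectations. The absolute-continuity caveat you flag is exactly the step the paper's proof also takes for granted (so your treatment is, if anything, more careful), and the Darboux/monotonicity detour in your lower bound is unnecessary since the mean value theorem alone gives $|g(x)-g(y)| = |g'(\xi)|\,|x-y| \geq a\,|x-y|$.
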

\begin{IEEEproof}
Draw $X_1$ and $X_2$ i.i.d.\ according to the distribution of $X$,
and define the following functions:
\beqan
D_a(X_1,X_2) & = & a(X_1 - X_2), \\
D_g(X_1,X_2) & = & g(X_1)- g(X_2), \\
D_b(X_1,X_2) & = & b(X_1 - X_2).
\eeqan

To prove the first part of the lemma, assume the derivative of $g$ is defined for \emph{almost every} $x \in S$.
By the absolute continuity of $g$, 
$$
D_g(X_1,X_2)  =  g(X_2) - g(X_1) 
 =  \int_{X_1}^{X_2} g'(x) \, dx \mbox{.}
$$
Therefore, the magnitude $|D_g(X_1,X_2)|$ can be bounded above as follows:
\[
|D_g(X_1,X_2)| \leq \int_{X_1}^{X_2} |g'(x)| \, dx \leq b|X_1 - X_2| = |D_b(X_1,X_2)| \mbox{.}
\]
This then implies that $\iE{D_g^2} \leq \iE{D_b^2}$ and, since $\iE{D_g} = \iE{D_b} = 0$,
that 
\beq
\var(D_g) \leq \var(D_b) \mbox{.} \label{eq:varInequality1}
\eeq
Since each $D_b$ and $D_g$ is a sum of i.i.d.\ variables,
$\var(D_b) = 2 \var(bX)$ and $\var(D_g) = 2\var(g(X))$.
Inserting these into \eqref{eq:varInequality1} and dividing by 2 proves the first part of the lemma.

Now to prove the second part of the lemma, assume further that $g'(x)$ is defined for \emph{every} $x \in S$.
By the mean value theorem, there exists $X_0$ between $X_1$ and $X_2$
such that
\[
| D_g(X_1,X_2)| = \left|\int_{X_1}^{X_2} g'(x) \, dx \right| = |g'(X_0)(X_1 - X_2)| \geq a|X_1 - X_2| = |D_a(X_1,X_2)| \mbox{.}
\]
As before, this implies that 
\beq
\var(D_g) \geq \var(D_a) \mbox{.} \label{eq:varInequality2}
\eeq
Since 
$\var(D_a) = 2 \var(aX)$ and $\var(D_g) = 2\var(g(X))$,
substituting into \eqref{eq:varInequality2} and dividing by 2 proves the second part of the lemma.

\end{IEEEproof}

We now define a function $\widetilde{g}_{\widetilde{x}_1^{j-1}}(x)$
that will appear in the proof of the theorem after we
establish properties of the function in a lemma.

\begin{definition}
Suppose $\widetilde{X}_1^n$ is uniformly distributed over a rectangular
region $S$.  The $j$th \emph{reduced-dimension function}
(with parameter vector $\widetilde{x}_{1}^{j-1}$) is defined as
\[
\widetilde{g}_{\widetilde{x}_1^{j-1}}(x) = \E{g(\widetilde{X}_1^n) \mid \widetilde{X}_1^{j-1} = \widetilde{x}_1^{j-1},\, \widetilde{X}_j = x} \mbox{.}
\]
\end{definition}

\begin{lemma}
\label{lem:gtwiddle}
Let $\widetilde{X}_1^n$ be uniformly distributed over a rectangular region $S = S_1 \times S_2 \times \cdots \times S_n$,
and let $g$ be Lipschitz continuous over $S$.  If the first and second
derivatives of $g$ are defined and bounded \emph{almost everywhere} in $S$, then:
\begin{enumerate}
\item $\widetilde{g}_{\widetilde{x}_1^{j-1}}(x)$ is Lipschitz continuous
in $x$.
\item Where defined, $|\widetilde{g}_{\widetilde{x}_1^{j-1}}'(x)| \leq b_j = \sup_{x_1^n \in S} |g_j(x_1^n)|$.
\end{enumerate}
If the first and second derivatives of $g$ are furthermore defined and bounded \emph{everywhere} in $S$, then:
\begin{enumerate}
\item[3)] For all $\widetilde{x}_1^{j-1} \in S_1 \times S_2 \times \cdots \times S_{j-1}$, the derivative $\widetilde{g}_{\widetilde{x}_1^{j-1}}'(x)$ is defined
for all $x \in S_j$.
\item[4)] The magnitude of derivative may be lower bounded:
$|\widetilde{g}_{\widetilde{x}_1^{j-1}}(x)| \geq a_j = \inf_{x_1^n \in S} |g_j(x_1^n)|$.
\end{enumerate}
\end{lemma}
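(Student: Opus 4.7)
The plan rests on the fact that a uniform distribution on the product rectangle $S = S_1 \times \cdots \times S_n$ factors into independent uniform marginals on the $S_k$; conditioning on $(\widetilde X_1^{j-1}, \widetilde X_j) = (\widetilde x_1^{j-1}, x)$ therefore leaves $\widetilde X_{j+1}^n$ uniform on $R = S_{j+1}\times\cdots\times S_n$, so I can write
\[
  \widetilde g_{\widetilde x_1^{j-1}}(x) \;=\; \frac{1}{|R|}\int_R g(\widetilde x_1^{j-1},x,x_{j+1}^n)\,dx_{j+1}^n.
\]
All five claims will be established from this representation. Claim 1 (continuity in $x$) follows at once from the dominated convergence theorem, since $g$ is continuous in its $j$th argument and is bounded on the compact set $S$.

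For claims 2--4 I plan to differentiate under the integral sign. Because $|g_j|\le b_j$ wherever it is defined and $g$ is continuous, each slice $x\mapsto g(\widetilde x_1^{j-1},x,x_{j+1}^n)$ is Lipschitz in $x$ with constant $b_j$, hence absolutely continuous, so
\[
  g(\widetilde x_1^{j-1},x+h,x_{j+1}^n)-g(\widetilde x_1^{j-1},x,x_{j+1}^n)
  \;=\;\int_x^{x+h} g_j(\widetilde x_1^{j-1},t,x_{j+1}^n)\,dt.
\]
Dividing by $h$, interchanging the integrals via Fubini, and letting $h\to 0$ with uniform majorant $b_j$ yields
\[
  \widetilde g_{\widetilde x_1^{j-1}}'(x)\;=\;\frac{1}{|R|}\int_R g_j(\widetilde x_1^{j-1},x,x_{j+1}^n)\,dx_{j+1}^n,
\]
and the triangle inequality then gives claim 3. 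To obtain the ``almost every'' qualifier in claim 2, I will apply Fubini to the null set $N\subset S$ where $g_j$ fails to exist: for almost every $\widetilde x_1^{j-1}$ the cross-section of $N$ has Lebesgue measure zero in $S_j\times R$, and a second Fubini shows that for almost every $x\in S_j$ the remaining cross-section has measure zero in $R$, so the candidate formula is well-defined. Claim 4 (the ``everywhere'' case) is cleaner because $g_j$ is continuous on the compact set $S$ and therefore uniformly continuous, so the Leibniz step goes through pointwise.

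For claim 5 I will argue by sign. If $a_j = 0$ the bound is trivial, so assume $a_j > 0$. The ``everywhere'' hypothesis makes $g_j$ continuous on the connected set $S$ and, since $|g_j|\ge a_j > 0$, $g_j$ never vanishes; by the intermediate value theorem on the connected domain, $g_j$ must take a single sign throughout $S$. Hence
\[
  \bigl|\widetilde g_{\widetilde x_1^{j-1}}'(x)\bigr|
  \;=\;\frac{1}{|R|}\Bigl|\int_R g_j\,dx_{j+1}^n\Bigr|
  \;=\;\frac{1}{|R|}\int_R |g_j|\,dx_{j+1}^n \;\ge\; a_j.
\]

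The hard part, in my view, is the measure-theoretic bookkeeping behind claim 2: I must confirm that the exceptional sets contributed by each application of Fubini are genuinely null in the appropriate product space, so that the a.e.\ quantifier lands on the right variables. Once this null-set analysis is handled cleanly, the remaining steps reduce to standard dominated convergence and triangle-inequality arguments, and the connectedness/IVT observation in claim 5 is essentially one line.
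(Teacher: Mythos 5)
Your proposal is correct and follows essentially the same route as the paper: write $\widetilde{g}_{\widetilde{x}_1^{j-1}}$ as an average of $g$ over the remaining coordinates $x_{j+1}^n$, interchange differentiation with that average (you justify the interchange via the fundamental theorem of calculus on slices, Fubini, and dominated convergence, where the paper simply invokes linearity of differentiation together with the a.e.\ existence of $g_j$), and bound the averaged derivative by $b_j$ via the triangle inequality. The only substantive divergence is claim 5, where you argue that $g_j$ is continuous and nonvanishing on the connected rectangle, hence of constant sign, so the absolute value passes inside the integral, whereas the paper applies the mean value theorem for integrals to produce a point $e_{\widetilde{x}_1^j}$ at which $g_j$ equals the average; both one-line arguments rest on the continuity of $g_j$ furnished by the everywhere-defined, bounded second derivatives and are equally valid.
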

\begin{IEEEproof}
First, assume that the first and second derivatives of $g$ are 
defined and bounded \emph{almost everywhere} in $S$.
\begin{enumerate}
\item Since $\widetilde{g}_{\widetilde{x}_1^{j-1}}(x)$ is an average of
functions with a common Lipschitz constant, it too is Lipschitz with this constant.   
\item Where $\widetilde{g}_{\widetilde{x}_1^{j-1}}'(x)$ is defined, we have
\beqan
 \left| \widetilde{g}_{\widetilde{x}_1^{j-1}}'(x_j) \right| & = & \left| \E{g_j(X_1^n) \mid \widetilde{X}_1^j = \widetilde{x}_1^j} \right| \\
 & \leq & \E{\left| g_j(X_1^n) \right| \mid \widetilde{X}_1^j = \widetilde{x}_1^j} \\
 & \leq & b_j \mbox{.}
\eeqan
\end{enumerate}
Now assume that furthermore the first and second derivatives of $g$ are defined
and bounded \emph{everywhere} in $S$.
\begin{enumerate}
\item[3)] Since $g_j(\widetilde{x}_1^n)$ is defined for all
$\widetilde{x}_1^n \in S$, the average derivative $\widetilde{g}_{\widetilde{x}_1^{j-1}}'(\widetilde{x}_j)$
is defined for all $\widetilde{x}_j \in S_j$.
\item[4)] We now obtain a lower bound on the derivative.  As before, we note that
$\widetilde{g}_{\widetilde{x}_1^{j-1}}'(x_j) =\iE{g_j(X_1^n) \mid \widetilde{X}_1^j = \widetilde{x}_1^j}$.
Because the derivatives of $g_j$ are defined everywhere in $S$, and because the expectation
under a uniform distribution is just an average, the mean value theorem guarantees the existence of an
$e_{\widetilde{x}_1^j} \in S$ such that $g_j(e_{\widetilde{x}_1^j}) = \iE{g_j(X_1^n) \mid \widetilde{X}_1^j = \widetilde{x}_1^j}$.  Finally, since $e_{\widetilde{x}_1^j} \in S$, we have $|g_j(e_{\widetilde{x}_1^j})| \geq a_j$.
To summarize:
\beqan
\left| \widetilde{g}_{\widetilde{x}_1^{j-1}}'(x_j) \right| & = & \left| \E{g_j(X_1^n) \mid \widetilde{X}_1^j = \widetilde{x}_1^j} \right| \\
& = & \left| g_j\left(e_{\widetilde{x}_1^j}\right) \right| \\
& \geq & a_j \mbox{.}
\eeqan
\end{enumerate}

\end{IEEEproof}

We now bound the variance of the function within a rectangular
cell, assuming the source is uniformly distributed.  This
will later be adapted to the case where the source is nonuniformly
distributed.

\begin{lemma}
\label{lem:uniformDistLemma}
Let $g(x_1^n)$ be a Lipschitz continuous function defined over a rectangular cell
$S = S_1 \times \cdots \times S_n$ with edge lengths $\Delta_1, \ldots, \Delta_n$, let 
$a_j$ and $b_j$ be lower and upper bounds to $|g_j(x_1^n)|$, when it exists,
and let $\overline{g}$ denote the average value of $g$ within $S$:
\[ \overline{g} = \frac{1}{\prod_{j=1}^n\Delta_j} \int_S g(x_1^n) \, dx_1^n \mbox{.} \]
If the first and second derivatives of $g$ are defined \emph{almost everywhere}
in $S$, then 
 \[ 
 \left( {\ts \prod_{j=1}^n \frac{1}{\Delta_j}} \right) \int_S |g(x_1^n) - \overline{g}|^2 \, dx_1^n \leq
 \sum_{j=1}^n \frac{b_j^2 \Delta_j^2}{12} \mbox{.}
 \]
If the first and second derivatives of $g$ are furthermore defined \emph{everywhere}
in $S$, then
 \[
 \sum_{j=1}^n \frac{a_j^2 \Delta_j^2}{12} \leq 
 \left( {\ts \prod_{j=1}^n \frac{1}{\Delta_j}} \right) \int_S |g(x_1^n) - \overline{g}|^2 \, dx_1^n \mbox{.}
 \]
\end{lemma}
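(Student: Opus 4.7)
My plan is to recognize the left-hand side as the variance of $g$ applied to a uniform random vector on $S$, decompose that variance via the law of total variance, and then bound each piece by a one-dimensional application of Lemma~\ref{lem:1Dlemma}. Let $\widetilde{X}_1^n$ be uniformly distributed on $S$, so that the normalized integral equals $\var(g(\widetilde{X}_1^n))$. Define $h_j(\widetilde{x}_1^j) = \E{g(\widetilde{X}_1^n) \mid \widetilde{X}_1^j = \widetilde{x}_1^j}$ for $j = 0, 1, \ldots, n$, giving $h_0 = \overline{g}$, $h_n = g$, and $h_j(\widetilde{x}_1^{j-1}, x) = \widetilde{g}_{\widetilde{x}_1^{j-1}}(x)$ in the notation of Lemma~\ref{lem:gtwiddle}. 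Iterating the law of total variance $n$ times (noting that $\var(h_0) = 0$) yields the telescoping identity
$$\var(g(\widetilde{X}_1^n)) = \sum_{j=1}^n \E{\var\!\left( h_j(\widetilde{X}_1^j) \mid \widetilde{X}_1^{j-1}\right)}.$$

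Next, for each $j$ and each fixed $\widetilde{x}_1^{j-1}$, the inner conditional variance is simply the variance of the one-variable function $\widetilde{g}_{\widetilde{x}_1^{j-1}}(\cdot)$ evaluated at $\widetilde{X}_j$, which given the conditioning is uniform on $S_j$ with variance $\Delta_j^2/12$. For the upper bound, parts 1--3 of Lemma~\ref{lem:gtwiddle} guarantee that $\widetilde{g}_{\widetilde{x}_1^{j-1}}$ is continuous with derivative bounded in magnitude by $b_j$ almost everywhere, for almost every $\widetilde{x}_1^{j-1}$; the first half of Lemma~\ref{lem:1Dlemma} then gives
$$\var\!\left(\widetilde{g}_{\widetilde{x}_1^{j-1}}(\widetilde{X}_j)\right) \leq b_j^2 \cdot \frac{\Delta_j^2}{12}.$$
Taking expectations over $\widetilde{X}_1^{j-1}$ and summing over $j$ produces the upper bound. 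For the lower bound, parts 4--5 of Lemma~\ref{lem:gtwiddle} upgrade the derivative conclusions to ``everywhere,'' and the second half of Lemma~\ref{lem:1Dlemma} analogously yields
$$\var\!\left(\widetilde{g}_{\widetilde{x}_1^{j-1}}(\widetilde{X}_j)\right) \geq a_j^2 \cdot \frac{\Delta_j^2}{12},$$
and summation over $j$ completes the proof.

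The main subtlety I expect is in reconciling the measure-theoretic hypotheses across the chain of conditionings. The upper-bound half of Lemma~\ref{lem:1Dlemma} tolerates derivatives defined only almost everywhere, which dovetails with the weaker hypothesis on $g$ and the almost-everywhere regularity of $\widetilde{g}_{\widetilde{x}_1^{j-1}}'$ provided by Lemma~\ref{lem:gtwiddle}; a Fubini-style argument absorbs the ``almost every $\widetilde{x}_1^{j-1}$'' exceptional set into the outer expectation. The lower-bound half of Lemma~\ref{lem:1Dlemma}, by contrast, invokes the mean value theorem and so demands that $\widetilde{g}_{\widetilde{x}_1^{j-1}}'$ exist everywhere on $S_j$ for every $\widetilde{x}_1^{j-1}$---precisely the hypothesis that the stronger case of the lemma imposes on $g$. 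Once these measurability details are tracked carefully, the remainder of the argument reduces to bookkeeping through the telescoping decomposition.
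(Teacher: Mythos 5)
Your proposal is correct and follows essentially the same route as the paper's proof: the same telescoping law-of-total-variance decomposition into terms $\E{\var(\widetilde{g}_{\widetilde{X}_1^{j-1}}(\widetilde{X}_j)\mid\widetilde{X}_1^{j-1})}$, the same appeal to Lemma~\ref{lem:gtwiddle} for the regularity of the reduced-dimension functions, and the same one-dimensional comparison via Lemma~\ref{lem:1Dlemma}, with the almost-everywhere versus everywhere derivative hypotheses handled just as the paper does.
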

\begin{IEEEproof}
Since $\widetilde{X}_1^n$ is uniformly distributed over $S$,
\[  \frac{1}{\prod_{j=1}^n\Delta_j} \int_S |g(x_1^n) - \overline{g}|^2 \, dx_1^n 
=  \var(g(\widetilde{X}_1^n)) \mbox{.} \]
This may be expanded by repeated application of the law of total variance:
\beqa
\var(g(\widetilde{X}_1^n)) & \eqlabel{a}  &
\E{ \var \left( g(\widetilde{X}_1^n) \mid \widetilde{X}_1 \right) } 
+ \var \left( \E{g(\widetilde{X}_1^n) \mid \widetilde{X}_1} \right) \nonumber \\
& \eqlabel{b} & 
\E{ \E{ \var \left( g(\widetilde{X}_1^n) \mid \widetilde{X}_1, \widetilde{X}_2 \right) \mid \widetilde{X}_1} 
+ \var \left( \E{g(\widetilde{X}_1^n) \mid \widetilde{X}_1, \widetilde{X}_2} \mid \widetilde{X}_1 \right)}
+ \var \left( \E{g(\widetilde{X}_1^n) \mid \widetilde{X}_1} \right) \nonumber\\
& \eqlabel{c} &
\E{\var \left( g(\widetilde{X}_1^n) \mid \widetilde{X}_1, \widetilde{X}_2 \right)} 
+ \E{\var \left( \E{g(\widetilde{X}_1^n) \mid \widetilde{X}_1, \widetilde{X}_2} \mid \widetilde{X}_1 \right)}
+ \var \left( \E{g(\widetilde{X}_1^n) \mid \widetilde{X}_1} \right) \nonumber\\
& \eqlabel{d} &
\E{\var \left( g(\widetilde{X}_1^n) \mid \widetilde{X}_1^{n-1} \right)}
+ \sum_{j=1}^n \E{\var \left( \E{g(\widetilde{X}_1^n) \mid \widetilde{X}_1^{j}} \mid \widetilde{X}_1^{j-1} \right) } \nonumber\\
& = &
\sum_{j=1}^n \E{ 
                           \var \left( 
                                          \E{ g(\widetilde{X}_1^n) \mid \widetilde{X}_1^j } 
                                          \mid \widetilde{X}_1^{j-1} 
                                \right) 
                         } \nonumber \\
 & = & \sum_{j=1}^n \E{ \var \left(
                                          \widetilde{g}_{\widetilde{X}_1^{j-1}}
                                          (\widetilde{X}_j) 
                                   \mid \widetilde{X}_1^{j-1} 
                                \right) 
                         }
                         \label{eq:totalVarianceAppDistortion} 
\mbox{,}
\eeqa
where (a) follows from the law of total variance with conditioning on $\widetilde{X}_1$;
(b) uses the law of total variance applied to the variance within
the expectation in the first term, with conditioning performed on $\widetilde{X}_2$;  
(c) simplies the first term using iterated expectation; and
(d) applies the law of total variance repeatedly
to the variance within the first expectation, as in step (b), with conditioning
on $\widetilde{X}_j$ during the $j$th iteration.

\emph{Upper bound.} Let $A(\widetilde{x}_1^{j-1}) \in S_j$
be the set of points $x_j \in S_j$ where the derivative 
$\widetilde{g}_{\widetilde{x}_1^{j-1}}'(x_j)$ is undefined.
By Lemma \ref{lem:gtwiddle}, $\widetilde{g}_{\widetilde{x}_1^{j-1}}(x_j)$ is 
Lipschitz continuous and therefore $A(\widetilde{x}_1^{j-1})$
is of measure zero.
As such, for every value of $\widetilde{X}_1^{j-1}$ considered within the expectation,
$\var \left( 
                                          \widetilde{g}_{\widetilde{X}_1^{j-1}}
                                          (\widetilde{X}_j) 
                                   \mid \widetilde{X}_1^{j-1} 
                                \right) $
is now the variance of a function satisfying the upper bound conditions 
for Lemma \ref{lem:1Dlemma}.  Applying this upper bound within
the expectation, we have
\beqan
\sum_{j=1}^n \E{ \var \left( 
                                          \widetilde{g}_{\widetilde{X}_1^{j-1}}
                                          (\widetilde{X}_j) 
                                   \mid \widetilde{X}_1^{j-1} 
                                \right) 
                         }
& \leq &
\sum_{j=1}^n \E{ \var \left( b_j \widetilde{X}_j
                                   \mid \widetilde{X}_1^{j-1} 
                                \right) 
                 \mid \mu\left(A(\widetilde{X}_1^{j-1}) \right)=0} \\
& = & 
\sum_{j=1}^n \frac{b_j^2 \Delta_j^2}{12} \mbox{,}
\eeqan                 
which proves the first half of the lemma.

\emph{Lower bound.} We return to \eqref{eq:totalVarianceAppDistortion},
now assuming that the first and second derivatives of $g$ are defined
everywhere in the cell $S$.  By Lemma~\ref{lem:gtwiddle}, for any choice
of $\widetilde{x}_1^j \in S_1 \times \cdots \times S_j$
the function $\widetilde{g}_{\widetilde{x}_1^{j-1}}(x_j)$ satisfies the conditions for
the lower bound in Lemma~\ref{lem:1Dlemma}.  Inserting this lower bound
into the expectation, we obtain
\beqan
\sum_{j=1}^n \E{ \var \left( 
                                          \widetilde{g}_{\widetilde{X}_1^{j-1}}
                                          (\widetilde{X}_j) 
                                   \mid \widetilde{X}_1^{j-1} 
                                \right) 
                         }
& \geq &      \sum_{j=1}^n \E{ \var \left( a_j(\widetilde{X}_j) 
                                   \mid \widetilde{X}_1^{j-1} 
                                \right)  } \\
& = & \sum_{j=1}^n \frac{a_j^2 \Delta_j^2}{12} \mbox{,}
\eeqan                                
which proves the second half of the lemma.

\end{IEEEproof}

Armed with this lemma, we may now determine upper and lower bounds
to the distortion of $g(x_1^n)$ within a single quantizer cell.

\begin{lemma}
\label{lem:boundDistOneCell}
Suppose that over a rectangular cell $S \subset [0,1]^n$ the function $g(x_1^n)$ 
is Lipschitz continuous and the probability density $f(x_1^n)$ is continuous, and suppose $g(x_1^n)$
has bounded first and second derivatives almost-everywhere in $S$.  Let $A_S$ denote the subset of $S$ where the first and
second derivatives of $g(x_1^n)$ are defined.
Then, defining $a_j = \inf_{x_1^n \in S} \mathbf{1}_{A_S}(x_1^n) |g_j(x_1^n)|$
and $b_j = \sup_{x_1^n \in S} |g_j(x_1^n)|$,
\[
f(\chi \mid X_1^n \in S) \sum_{j=1}^n \frac{a_j^2 \Delta_j^2}{12} \prod_{i = 1}^n \Delta_i
\leq 
\var \left( \left. g(X_1^n) \right| X_1^n \in S \right)
\leq
f(\xi \mid X_1^n \in S) \sum_{j=1}^n \frac{b_j^2 \Delta_j^2}{12} \prod_{i=1}^n \Delta_i
\]
for some $\chi,\,\xi \in S$.
\end{lemma}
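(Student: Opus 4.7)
The plan is to reduce the non-uniform case to the uniform-distribution case already handled by Lemma~\ref{lem:uniformDistLemma}, using the mean value theorem for integrals to peel off the weight $f$ and using $L^2$-minimization properties of conditional means to swap between $\overline{g}_f := \E{g(X_1^n)\mid X_1^n\in S}$ and the unweighted average $\overline{g}$ of $g$ over $S$. First I would write
\[
\var(g(X_1^n)\mid X_1^n\in S)\cdot\P{X_1^n\in S} = \int_S (g(x_1^n)-\overline{g}_f)^2 f(x_1^n)\,dx_1^n,
\]
and record the key fact, applied twice below: since $f$ and $(g-c)^2$ are continuous on the compact cell $S$, the weighted mean value theorem for integrals gives $\int_S (g-c)^2 f\,dx_1^n = f(\eta_c)\int_S (g-c)^2\,dx_1^n$ for some $\eta_c\in S$.

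For the upper bound, I would use the fact that $\overline{g}_f$ minimizes $c\mapsto\int_S (g-c)^2 f\,dx_1^n$ to replace it by $\overline{g}$,
\[
\int_S (g-\overline{g}_f)^2 f\,dx_1^n \leq \int_S (g-\overline{g})^2 f\,dx_1^n = f(\xi)\int_S (g-\overline{g})^2\,dx_1^n,
\]
then apply the upper half of Lemma~\ref{lem:uniformDistLemma} to the unweighted integral to get $f(\xi)\cdot\prod_i\Delta_i\cdot\sum_j b_j^2\Delta_j^2/12$; dividing by $\P{X_1^n\in S}$ turns $f(\xi)/\P{X_1^n\in S}$ into the conditional density $f(\xi\mid X_1^n\in S)$. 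For the lower bound, I would instead apply the mean value theorem directly to $\overline{g}_f$: $\int_S(g-\overline{g}_f)^2 f\,dx_1^n = f(\chi)\int_S(g-\overline{g}_f)^2\,dx_1^n$ for some $\chi\in S$. Now the unweighted average $\overline{g}$ minimizes $c\mapsto\int_S(g-c)^2\,dx_1^n$, so $\int_S(g-\overline{g}_f)^2\,dx_1^n\geq\int_S(g-\overline{g})^2\,dx_1^n$, which by the lower half of Lemma~\ref{lem:uniformDistLemma} is at least $\prod_i\Delta_i\cdot\sum_j a_j^2\Delta_j^2/12$.

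The main technical subtlety, and essentially the only one, is the hypothesis mismatch for the lower half of Lemma~\ref{lem:uniformDistLemma}, which nominally required derivatives of $g$ defined \emph{everywhere} on $S$, whereas here we allow failure on the measure-zero set $S\setminus A_S$. This is resolved by the indicator $\mathbf{1}_{A_S}$ built into the definition of $a_j$: on cells where $A_S\neq S$ we simply have $a_j=0$, so the lower bound is trivial; on cells where $A_S=S$ the earlier lemma applies verbatim. Continuity of $g$ on all of $S$ ensures that $(g-c)^2$ remains continuous on $S$ regardless, so the mean value theorem for integrals applies with no modification.
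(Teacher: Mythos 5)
Your proposal is correct and follows essentially the same route as the paper: both bounds are reduced to the uniform-case Lemma~\ref{lem:uniformDistLemma} by using the first mean value theorem for integrals to extract the density at a point of $S$, and by exploiting the $L^2$-optimality of the (conditional or unweighted) mean to swap between $\E{g(X_1^n)\mid X_1^n\in S}$ and the unweighted cell average, with the measure-zero set $S\setminus A_S$ handled exactly as in the paper via the indicator in the definition of $a_j$ (trivial bound when $A_S\neq S$, verbatim application otherwise). The only cosmetic difference is that you carry the unconditional density $f$ and divide by $\P{X_1^n\in S}$ at the end, whereas the paper works with the conditional density throughout; these are equivalent.
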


\begin{IEEEproof}
We first prove the lower bound.  If $A_S$ is nonempty, 
$a_j = 0$ for every $j$ so the lower bound is trivially true.  Now
suppose $A_S$ is empty and therefore that the first
and second derivatives of $g(x_1^n)$ are defined everywhere
in $S$. In this case,
\beqan
\var \left( g(X_1^n) \mid X_1^n \in S \right) & = & 
\int_S f(x_1^n \mid X_1^n \in S)  \left( g(X_1^n) - \E{g(X_1^n) \mid X_1^n \in S}  \right)^2 \, dx_1^n \\
& \eqlabel{a} & f(\chi \mid X_1^n \in S) \int_S  \left( g(X_1^n) - \E{g(X_1^n) \mid X_1^n \in S}  \right)^2 \, dx_1^n \\
& = & f(\chi \mid X_1^n \in S) \left({\ts \prod_{i=1}^n \Delta_i} \right) \int_S  \frac{1}{\left(\prod_{i=1}^n \Delta_i \right) } \left( g(X_1^n) - \E{g(X_1^n) \mid X_1^n \in S}  \right)^2 \, dx_1^n \\
& \eqlabel{b} & 
f(\chi \mid X_1^n \in S) \left({\ts \prod_{i=1}^n \Delta_i} \right)
\E{ \left( g(\widetilde{X}_1^n) - \E{g(X_1^n) \mid X_1^n \in S} \right)^2} \\
& \geqlabel{c} & 
f(\chi \mid X_1^n \in S) \left({\ts \prod_{i=1}^n \Delta_i} \right) \var \left( g(\widetilde{X}_1^n) \right) \\
& \geqlabel{d} &
f(\chi \mid X_1^n \in S) \left({\ts \prod_{i=1}^n \Delta_i} \right) \sum_{j=1}^n \frac{a_j^2 \Delta_j^2}{12} \mbox{,}
\eeqan
where (a) follows from the first mean value theorem for integration;  
(b) introduces the random vector $\widetilde{X}_1^n$ that is uniform over $S$;
(c) is true because the variance is the smallest possible mean squared error from
a constant estimate; and (d) follows from the lower bound in
Lemma~\ref{lem:uniformDistLemma}.

For the upper bound, we proceed in a similar manner:
\beqan
\var \left( g(X_1^n) \mid X_1^n \in S \right) & = & 
\int_S f(x_1^n \mid X_1^n \in S)  \left( g(X_1^n) - \E{g(X_1^n) \mid X_1^n \in S}  \right)^2 \, dx_1^n \\
& \leqlabel{a} & \int_S f(x_1^n \mid X_1^n \in S)  \left( g(X_1^n) - \overline{g}_S  \right)^2 \, dx_1^n \\
& \eqlabel{b} & f(\xi \mid X_1^n \in S) \int_S \left( g(X_1^n) - \overline{g}_S \right)^2 \, dx_1^n \\
& = & f(\xi \mid X_1^n \in S)  \left({\ts \prod_{i=1}^n \Delta_i} \right) \int_S \frac{1}{\prod_{i=1}^n \Delta_i}
\left( g(X_1^n) - \overline{g}_S \right)^2 \, dx_1^n \\
& \leqlabel{c} & f(\xi \mid X_1^n \in S)  \left({\ts \prod_{i=1}^n \Delta_i} \right) \sum_{j=1}^n 
\frac{b_j^2 \Delta_j^2}{12} \mbox{,}
\eeqan
where in (a) we reintroduce the notation $\overline{g} = \int_S g(X_1^n) \frac{1}{ \prod_{i=1}^n \Delta_i}$
for the average value of $g$ with respect to a uniform distribution,
and the inequality
is valid because the expected value of a random variable minimizes the mean-squared error
of the estimate;
(b) is due to the first mean value theorem for integration; and (c) follows from the upper bound in
Lemma~\ref{lem:uniformDistLemma}.

\end{IEEEproof}



At this point, we provide a proof of the theorem.
\begin{IEEEproof}
The distortion $d(\KB;\lB)$ is given by
\[
d_g = \sum_{i_1^n} \P{X_1^n \in S_{i_1^n}} \var \left( g(X_1^n) \mid X_1^n \in S_{i_1^n} \right)\mbox{.}
\]
Let $A \in [0,1]^n$ denote the set of points $x_1^n$ where both 
the first and second derivatives of $g(x_1^n)$ are defined.  By 
assumption MF1, $[0,1]^n \setminus A$
has both Jordan and Lebesgue measure zero.
Defining $a_{i_1^n,j} = \inf_{x_1^n \in S_{i_1^n}} \mathbf{1}_A(x_1^n)|g_j(x_1^n)|$ and 
$b_{i_1^n,j} = \sup_{x_1^n \in S_{i_1^n}} |g_j(x_1^n)|$,
we may obtain lower and upper bounds to $d_g$ by 
applying Lemma~\ref{lem:boundDistOneCell}
to each term within the summation:
\beqan
\sum_{i_1^n} \left( \prod_{j=1}^n \Delta_{i_1^n,j} \right) 
\P{X_1^n \in S_{i_1^n}} f(\chi_{i_1^n} \mid X_1^n \in S_{i_1^n}) \sum_{j = 1}^n 
\frac{a_{i_1^n,j}^2 \Delta_{i_1^n,j}^2}{12}
\leq d_g  \\
\leq
\sum_{i_1^n} \left( \prod_{j=1}^n \Delta_{i_1^n,j} \right) 
\P{X_1^n \in S_{i_1^n}} f(\xi_{i_1^n} \mid X_1^n \in S_{i_1^n}) \sum_{j = 1}^n 
\frac{b_{i_1^n,j}^2 \Delta_{i_1^n,j}^2}{12}  \mbox{.}
\eeqan
Let $K_j$ be the number of cells in the quantizer for $X_j$. 
For any cell $S$ of this quantizer, $\int_S \lambda_j(x_j) dx_j = 1/K_j$.
By continuity of $\lambda_j$ and the first mean value theorem, this implies that
the length of interval $S$ is given by 
$(K_j \lambda_j(\eta))^{-1}$ for some $\eta \in S$.  Therefore, 
$\Delta_{i_1^n,j}$ in the above expression may be replaced by $(K_j \lambda_j(\eta_{i_1^n}))$
for some $\eta_{i_1^n} \in S_{i_1^n}$:
\beqan
\sum_{i_1^n} \left( \prod_{j=1}^n \Delta_{i_1^n,j} \right) 
\P{X_1^n \in S_{i_1^n}} f(\chi_{i_1^n} \mid X_1^n \in S_{i_1^n}) \sum_{j = 1}^n 
\frac{a_{i_1^n,j}^2}{12K_j^2 \lambda_j(\eta_{i_1^n})^2}
\leq d_g  \\
\leq
\sum_{i_1^n} \left( \prod_{j=1}^n \Delta_{i_1^n,j} \right) 
\P{X_1^n \in S_{i_1^n}} f(\xi_{i_1^n} \mid X_1^n \in S_{i_1^n}) \sum_{j = 1}^n 
\frac{b_{i_1^n,j}^2 }{12K_j^2 \lambda_j(\eta_{i_1^n})^2}  \mbox{.}
\eeqan
Furthermore, we may recognize that $\P{X_1^n \in S_{i_1^n}} f(\xi_{i_1^n} \mid X_1^n \in S_{i_1^n}) = f(\xi_{i_1^n})$, simplifying the bounds further:
\[
\sum_{i_1^n} 
f(\chi_{i_1^n}) \sum_{j = 1}^n 
\frac{a_{i_1^n,j}^2}{12K_j^2 \lambda_j(\eta_{i_1^n})^2}\prod_{j=1}^n \Delta_{i_1^n,j}
\leq d_g  \leq
\sum_{i_1^n} 
f(\xi_{i_1^n}) \sum_{j = 1}^n 
\frac{b_{i_1^n,j}^2 }{12K_j^2 \lambda_j(\eta_{i_1^n})^2}\prod_{j=1}^n \Delta_{i_1^n,j}  \mbox{.}
\]

Consider the $j$th term in the lower-bound summation,
\[ \sum_{i_1^n} f(\chi_{i_1^n}) \frac{a_{i_1^n,j}^2}{12\lambda_j(\eta_{i_1^n})} \prod_{i=1}^n \Delta_{i_1^n,j}\mbox{.} \]
One may observe that this expression approaches a Riemann integral:
\begin{enumerate}
\item By Lemma \ref{lem:boundDistOneCell}, $\chi_{i_1^n} \in S_{i_1^n}$.
\item By definition, $a_{i_1^n,j}$ is the minimal value of $\mathbf{1}_A(x_1^n)|g_j(x_1^n)|$
within the cell $S_{i_1^n}$.
\item By its definition, $\eta_{i_1^n}$ is also an element in $S_{i_1^n}$.  
\item The product $\prod_{j=1}^n \Delta_{i_1^n,j}$ is the size of the cell, and
because the largest quantizer cell size goes to zero as every element of the vector $\KB$ grows, 
the mesh of this summation also goes to zero.
\end{enumerate}
Since $A$ is Jordan-measureable, $\mathbf{1}_A(x_1^n)$ is Riemann integrable.
By assumption MF2, the expression $f(x_1^n)  g_j(x_1^n)^2/\left(12\lambda_j(x_j)^2\right)$
is Riemann integrable.  Since the product of two integrable functions
is integrable, the Riemann integral of $\mathbf{1}_A(x_1^n) f(x_1^n) g_j(x_1^n)^2/\left(12\lambda_j(x_j)^2\right)$
is defined and
\beqan
\sum_{i_1^n} f(\chi_{i_1^n}) \frac{a_{i_1^n,j}^2}{12\lambda_j(\eta_{i_1^n})} \prod_{i=1}^n \Delta_{i_1^n,j} & \sim &
\int_{[0,1]^n} f(x_1^n) \mathbf{1}_A(x_1^n) \frac{g_j(x_1^n)^2}{12\lambda_j(x_j)^2} \, dx_1^n \\
& = & \int_{A} f(x_1^n) \frac{g_j(x_1^n)^2}{12\lambda_j(x_j)^2} \, dx_1^n \\
& \eqlabel{a} & \int_{[0,1]^n} f(x_1^n) \frac{g_j(x_1^n)^2}{12\lambda_j(x_j)^2} \, dx_1^n \\
& \eqlabel{b} & \int_{[0,1]} f(x_j) \frac{\gamma_j(x_j)^2}{12\lambda_j(x_j)^2} \, dx_j
\mbox{,}
\eeqan
where (a) follows from the Jordan measure of $[0,1]^n \setminus A$ being zero;
and (b) is the result of integrating over $x_1^{j-1}$ and $x_{j+1}^n$.
This relation then yields
\[
\sum_{i_1^n} f(\chi_{i_1^s}) \frac{a_{i_1^n,j}^2}{12K_j^2 \lambda_j(\eta_{i_1^n})} \prod_{i=1}^n \Delta_{i_1^n,j} \sim 
\int_{[0,1]} f(x_j) \frac{\gamma_j(x_j)^2}{12K_j^2 \lambda_j(x_j)^2} \, dx_j 
= \frac{1}{12 K_j^2}\E{\left( \frac{\gamma_j(X_j)}{\lambda_j(X_j)} \right)^2}\mbox{.}
\]
Since this holds for any $j \in \{1,\ldots,n\}$, it holds for the sum over $j$ as well:
\[
\sum_{i_1^n} f(\chi_{i_1^s}) \sum_{j=1}^n \frac{a_{i_1^n,j}^2}{12K_j^2 \lambda_j(\eta_{i_1^n})} \prod_{i=1}^n \Delta_{i_1^n,j} \sim
\sum_{j=1}^n \frac{1}{12 K_j^2}\E{\left( \frac{\gamma_j(X_j)}{\lambda_j(X_j)} \right)^2}\mbox{.}
\]

Similarly,
\[
\sum_{i_1^n} f(\xi_{i_1^s}) \sum_{j=1}^n \frac{b_{i_1^n,j}^2}{12K_j^2 \lambda_j(\eta_{i_1^n})} \prod_{i=1}^n \Delta_{i_1^n,j} \sim
\sum_{j=1}^n \frac{1}{12 K_j^2}\E{\left( \frac{\gamma_j(X_j)}{\lambda_j(X_j)} \right)^2}\mbox{.}
\]
Since $d_g$ is bounded between these two quantities, this proves the theorem.
\end{IEEEproof}

\section{Proof of Lemma~\ref{lem:DistortionRateUnoptimized}}
\label{app:DistortionRateUnoptimized}
We start by defining the distortion-resolution optimization function 
$\KB_{\rm fr}(R;\lB)$ as the resolution vector that minimizes 
$d_{\rm fr}(\KB_{\rm fr}(R;\lB);\lB)$, the distortion subject to a fixed rate constraint. 
 We define $\KB_{\rm vr}(R;\lB)$ and $\KB_{\rm sw}(R;\lB)$ analogously, and we write
 $\KB_{\rm fr,vr,sw}$ when we can combine all three cases to be handled identically.
We similarly define the high-resolution distortion-resolution optimizing function
$\KB_{\rm fr,vr,sw}^{\rm HR}(R;\lB)$ as the resolution vector that minimizes
$d_{\rm fr,vr,sw}^{\rm HR}(\KB_{\rm fr,vr,sw}^{\rm HR}(R;\lB); \lB)$ under a rate constraint.  
Note that by definition $d_{\rm fr,vr,sw}(\KB_{\rm fr,vr,sw}(R;\lB); \lB) = D_{\rm fr,vr,sw}(R;\lB)$.

\begin{lemma}
\label{lem:KDiverges}
Under assumptions MF1--4, every component of the vectors $\KB_{\rm fr,vr,sw}(R;\lB)$
and $\KB_{\rm fr,vr,sw}^{\rm HR}(R;\lB)$ diverges with increasing $R$.
\end{lemma}
\begin{IEEEproof}
For a function, source, and quantizer point density that together satisfy conditions MF1--4,
we demonstrate that every component of both $\KB_{\rm fr,vr,sw}(R;\lB)$ and
$\KB_{\rm fr,vr,sw}^{\rm HR}(R;\lB)$ diverges.  Suppose first that the $j$th element
of $\KB_{\rm fr,vr,sw}(R;\lB)$ is bounded by a finite value $K$ for any $R$.  
Then the quantizer $Q_K^{\lambda_j}$ is a sufficient description of $X_j$ for
achieving arbitrarily small distortion for the function $g(X_1^n)$.
More precisely, there exists
a reconstruction function $\widehat{g}$ such that 
$g(X^n) = \widehat{g}(X_1^{j-1},Q_K^{\lambda_j}(X_j),X_{j+1}^n)$ with probability one.
This then implies that $\gamma_j(X_j)$ is zero with probability one, but this violates
condition MF4 and thus every component of $\KB_{\rm fr,vr,sw}(R;\lB)$
diverges with $R$.

If the $j$th component of $\KB_{\rm fr,vr,sw}^{\rm HR}$ has a finite upper bound $K$,
then the high-resolution distortion is lower bounded by 
\[
d_{\rm fr,vr,sw}^{\rm HR}(\KB;\lB) \geq \frac{1}{12K^2} \E{\left( \frac{\gamma_j(X_j)}{\lambda_j(X_j)} \right)^2}
\mbox{.}
\]
By condition MF4, this lower bound is strictly positive, and therefore this choice of
$\KB_{\rm fr,vr,sw}^{\rm HR}$ is suboptimal.
\end{IEEEproof}

Using this lemma, we are able to connect the distortion-rate function to the
high-resolution distortion-resolution function.

\begin{lemma}
\label{lem:DistortionRateEqualsOptimizedDistortionResolution}
The distortion-rate function is asymptotically equal to the 
optimized high-resolution distortion-resolution function:
$D_{\rm fr,vr,sw}(R;\lB) \sim d_{\rm fr,vr,sw}^{\rm HR}(\KB^{\rm HR}_{\rm fr,vr,sw}(R;\lB);\lB)$.
\end{lemma}
\begin{IEEEproof}
Since by Lemma \ref{lem:KDiverges}
both $\KB_{\rm fr,vr,sw}$ and $\KB_{\rm fr,vr,sw}^{\rm HR}$ diverge in every component,
Theorem \ref{thm:multi-distortion} tells us that
\[
d_{\rm fr,vr,sw}(\KB_{\rm fr,vr,sw}(R;\lB);\lB) \sim d_{\rm fr,vr,sw}^{\rm HR}(\KB_{\rm fr,vr,sw}(R;\lB);\lB)
\]
and
\[
d_{\rm fr,vr,sw}(\KB^{\rm HR}_{\rm fr,vr,sw}(R;\lB);\lB) \sim d_{\rm fr,vr,sw}^{\rm HR}(\KB^{\rm HR}_{\rm fr,vr,sw}(R;\lB);\lB)
\mbox{.}
\]
Furthermore, by definition we have that
\[
d_{\rm fr,vr,sw}(\KB_{\rm fr,vr,sw}(R;\lB);\lB) 
\leq 
d_{\rm fr,vr,sw}(\KB^{\rm HR}_{\rm fr,vr,sw}(R;\lB);\lB)
\]
and
\[
d_{\rm fr,vr,sw}^{\rm HR}(\KB^{\rm HR}_{\rm fr,vr,sw}(R;\lB);\lB) 
\leq
d_{\rm fr,vr,sw}^{\rm HR}(\KB_{\rm fr,vr,sw}(R;\lB);\lB)
\mbox{.}
\]
Therefore, 
\[
D_{\rm fr,vr,sw}(R;\lB) = d_{\rm fr,vr,sw}(\KB_{\rm fr,vr,sw}(R;\lB);\lB) \sim d_{\rm fr,vr,sw}^{\rm HR}(\KB_{\rm fr,vr,sw}^{\rm HR}(R;\lB);\lB) \mbox{.}
\]
\end{IEEEproof}

Before proceeding with the proof, we define three countably infinite subsets of $\R^n$
that describe the rate vectors achievable by a certain choice of point densities $\lB$:
\beqa
\mathcal{R}_{\rm fr} & = & \{ (\log K_1, \log K_2, \ldots, \log K_n): \KB \in \mathbb{N}^{n} \}  \mbox{,} \label{eq:FixedRatePoints}\\
\mathcal{R}_{\rm vr}& = & \{ (H(Q_{K_1}^{\lambda_1}(X_1)), H(Q_{K_2}^{\lambda_2}(X_2)), \ldots, H(Q_{K_n}^{\lambda_n}(X_n))): \KB \in \mathbb{N}^n\}
\mbox{,} \label{eq:VariableRatePoints} \\
\mathcal{R}_{\rm sw}& = & \{ (H(Q_{K_1}^{\lambda_1}(X_1)), H(Q_{K_2}^{\lambda_2}(X_2) \mid Q_{K_1}^{\lambda_1}(X_1)), \ldots, H(Q_{K_n}^{\lambda_n}(X_n) \mid 
Q_{K^{n-1}}^{\lambda^{n-1}}(X^{n-1}))): \KB \in \mathbb{N}^n \} \mbox{.} \label{eq:SlepianWolfRatePoints}
\eeqa

Using these definitions and Lemma \ref{lem:DistortionRateEqualsOptimizedDistortionResolution},
we may rephrase the distortion-rate functions somewhat:
\beqa
D_{\rm fr}(R;\lB) \sim d_{\rm fr}^{\rm HR}(\KB^{\rm HR}_{\rm fr}(R;\lB); \lB)
& =&  \min_{\RB \in \mathcal{R}_{\rm fr}: \sum R_j \leq R} 
\sum_{j=1}^n \frac{1}{12} 2^{-2R_j} \E{\left(\frac{\gamma_j(X_j)}{\lambda_j(X_j)}\right)^2} 
\label{eq:FRDistortionRateGranular}
\\
D_{\rm vr}(R;\lB) \sim d_{\rm vr}^{\rm HR}(\KB^{\rm HR}_{\rm vr}(R;\lB); \lB)
&=& \min_{\RB \in \mathcal{R}_{\rm vr}: \sum R_j \leq R} 
\sum_{j=1}^n \frac{1}{12} 2^{-2R_j + 2h(X_j) +2\E{\log \lambda_j(X_j)}} 
\E{\left(\frac{\gamma_j(X_j)}{\lambda_j(X_j)}\right)^2} 
\label{eq:VRDistortionRateGranular}
\\
D_{\rm sw}(R;\lB) \sim d_{\rm sw}^{\rm HR}(\KB^{\rm HR}_{\rm sw}(R;\lB); \lB)
&=& \min_{\RB \in \mathcal{R}_{\rm sw}: \sum R_j \leq R}
\sum_{j=1}^n \frac{1}{12} 2^{-2R_j + 2h(X_j \mid X^{j-1}) +2\E{\log \lambda_j(X_j)}} \E{\left(\frac{\gamma_j(X_j)}{\lambda_j(X_j)}\right)^2} 
\label{eq:SWDistortionRateGranular}
\eeqa

Additionally, we introduce the concept of \emph{increasing granularity}:
\begin{definition}
A countably infinite set $\mathcal{R} \subset \R^n$ is said to be \emph{increasingly granular} if
for any $r \geq 0$ there exists a vanishing nonnegative function $\delta(r): [0,\infty) \rightarrow [0, \infty)$ such that for any $\RB \in \R^n$ whose components are
each greater than $r$, there exists a point $\overline{\RB} \in \mathcal{R}$ within
$\delta(r)$ of each component of $\RB$: $\max_j |R_j - \overline{R}_j| \leq \delta(r)$.
The function $\delta(r)$ is called the \emph{granularity function} of the set $\mathcal{R}$.
\end{definition}

\begin{lemma}
\label{lem:RegionsAreGranular}
The sets $\mathcal{R}_{\rm fr}$, $\mathcal{R}_{\rm vr}$, and $\mathcal{R}_{\rm sw}$
are increasingly granular.
\end{lemma}
\begin{IEEEproof}
Let $r > 0$, and let every component of $\RB \in \R^n$ be greater than $r$.
We prove the granularity of each of the three sets in turn.

$\mathcal{R}_{\rm fr}$: Define the point $\overline{\RB} \in \R^n$ so that
$\overline{R}_j = \log \lfloor 2^{R_j} \rfloor$.  This point is clearly a member of
$\mathcal{R}_{\rm fr}$.  Furthermore, we can easily bound the distance between 
$R_j$ and $\overline{R}_j$:
\[ |R_j - \overline{R}_j| \leq \log \frac{2^{R_j}}{2^{R_j}-1} \leq \log \frac{2^r}{2^r-1}\mbox{.} \]
Defining $\delta(r) = \log \left( \frac{2^r}{2^r-1} \right) \rightarrow 0$, we have shown that $\mathcal{R}_{\rm fr}$ is
increasingly granular.

$\mathcal{R}_{\rm vr}$: Define $\overline{\RB}$ so that
$\overline{R}_j = H(Q_{K_j}^{\lambda_j}(X_j))$ where 
$K_j$ is chosen according to 
\[
K_j = \argmin_{K} \left| R_j - h(X_j) - \E{\log \lambda_j} - \log K \right| \mbox{.}
\]
We may then bound the distance between $R_j$ and $\overline{R_j}$:
\beqan
|R_j - \overline{R}_j |& \leq & \left| \overline{R}_j - h(X_j) - \E{\log \lambda_j} - \log K_j \right|
+ \left| h(X_j) + \E{\log \lambda_j} + \log K_j  - R_j\right| \\
& \leq & \left| \overline{R}_j - h(X_j) - \E{\log \lambda_j} - \log K_j \right| 
+ \log \frac{K_j}{K_j-1} \\
& \rightarrow & 0 \mbox{,}
\eeqan
where the first term goes to zero by Lemma \ref{lem:ResRateErrorSW} and the second by Lemma \ref{lem:KDiverges}.

$\mathcal{R}_{\rm sw}$: Define $\overline{\RB}$ so that
$\overline{R}_j = H(Q_{K_j}^{\lambda_j}(X_j) | Q_{K^{j-1}}^{\lambda^{j-1}}(X^{j-1}))$ where 
$K_j$ is chosen according to 
\[
K_j = \argmin_{K} \left| R_j - h(X_j \mid X^{j-1}) - \E{\log \lambda_j} - \log K \right| \mbox{.}
\]

\noindent The distance between $R_j$ and $\overline{R}_j$ may then be bounded in the 
following manner:
\beqan
|R_j - \overline{R}_j |& \leq & \left| \overline{R}_j - h(X_j | X^{j-1}) - \E{\log \lambda_j} - \log K_j \right|
+ \left| h(X_j | X^{j-1}) + \E{\log \lambda_j} + \log K_j  - R_j\right| \\
& \leq & \left| \overline{R}_j - h(X_j | X^{j-1}) - \E{\log \lambda_j} - \log K_j \right| 
+ \log \frac{K_j}{K_j-1} \\
& \rightarrow & 0 \mbox{.}
\eeqan
To show that the first term goes to zero, we invoke Lemma \ref{lem:ResRateErrorSW}
to state that 
\[
H(Q^{\lambda_j}_{K_j}(X_j), Q^{\lambda^{j-1}}_{K^{j-1}}(X^{j-1})) - \sum_{i=1}^j \log K_i \rightarrow 
h(X^j) + \E{\sum_{i=1}^j \log \lambda_i(X_i)} \mbox{.}
\]
Subtracting from this the similar expression (also obtained from Lemma \ref{lem:ResRateErrorSW})
\[
H(Q^{\lambda^{j-1}}_{K^{j-1}}(X^{j-1})) - \sum_{i=1}^{j-1} \log K_i \rightarrow
h(X^{j-1}) + \E{\sum_{i=1}^{j-1} \log \lambda_i(X_i)} \mbox{,}
\]
yields that
\[
\overline{R}_j - h(X_j \mid X^{j-1}) - \E{\log \lambda_j} - \log K_j = 
H(Q_{K_j}^{\lambda_j}(X_j) | Q_{K^{j-1}}^{\lambda^{j-1}}(X^{j-1})) - h(X_j \mid X^{j-1}) - \E{\log \lambda_j} - \log K_j
\rightarrow 0 \mbox{.}
\]

\end{IEEEproof}

We now establish an important property of increasingly granular sets.
\begin{lemma}
\label{lem:Granularity}
Suppose 
\[ f(R) = \min_{\RB \in \mathcal{R}: \sum_{j=1}^n R_j \leq R} \sum_{j=1}^n \alpha_j 2^{-R_j} \mbox{,}
\]
and
\[
\widetilde{f}(R)  = \min_{\RB \in \R^n: \sum_{j=1}^n R_j \leq R} \sum_{j=1}^n \alpha_j2^{-R_j} \mbox{,}
\]
where $\alpha_j > 0$ for all $j$ and $\mathcal{R}$ is an increasingly granular subset of
$\R^n$.  Then $f(R) \sim \widetilde{f}(R)$.
\end{lemma}
\begin{IEEEproof}
Since $\mathcal{R} \subset \R^n$, we have that 
\[ 
\widetilde{f}(R) \leq f(R) \mbox{.}
\]

Let $\RB^* = \argmin_{\RB \in \R^n: \sum_{j=1}^n R_j \leq R} \sum_{j=1}^n \alpha_j2^{-R_j}$,
let $R^*_{\inf}$ indicate the smallest element of $\RB^*$, and let $\delta(r)$ be the granularity
function of $\mathcal{R}$.  Since there must be an element of $\mathcal{R}$ within distance
$\delta_{R^*_{\inf}}$ of each of the coordinates of $\RB^*$, we may create a bound in the opposite direction:
\[
f(R) \leq \sum_{j=1}^n \alpha_j 2^{-(R^*_j - \delta_{R^*_{\inf}})} = 2^{\delta_{R^*_{\inf}}} \widetilde{f}(R) \mbox{.}
\]
Because $\alpha_j > 0$ for $j \in \{1,\ldots, n\}$, $R^*_{\inf}$ diverges with $R$ and $\delta_{R^*_{\inf}}$ vanishes.
Combining the two bounds, we have that $\widetilde{f}(R) \sim f(R)$, which proves the lemma.
\end{IEEEproof}

Applying Lemmas \ref{lem:RegionsAreGranular} and \ref{lem:Granularity} to 
\eqref{eq:FRDistortionRateGranular}, \eqref{eq:VRDistortionRateGranular}, and
\eqref{eq:SWDistortionRateGranular}, we may widen the optimization to occur
over any positive real-valued rate vector $\RB$:
\beqa
D_{\rm fr}(R;\lB) \sim d_{\rm fr}^{\rm HR}(\KB^{\rm HR}_{\rm fr}(R;\lB); \lB)
& \sim&  \min_{\RB \in \R^n: \sum R_j \leq R} 
\sum_{j=1}^n \frac{1}{12} 2^{-2R_j} \E{\left(\frac{\gamma_j(X_j)}{\lambda_j(X_j)}\right)^2} 
\label{eq:FRDistortionRateContinuous}
\\
D_{\rm vr}(R;\lB) \sim d_{\rm vr}^{\rm HR}(\KB^{\rm HR}_{\rm vr}(R;\lB); \lB)
&\sim& \min_{\RB \in \R^n: \sum R_j \leq R} 
\sum_{j=1}^n \frac{1}{12} 2^{-2R_j + 2h(X_j) +2\E{\log \lambda_j(X_j)}} 
\E{\left(\frac{\gamma_j(X_j)}{\lambda_j(X_j)}\right)^2} 
\label{eq:VRDistortionRateContinuous}
\\
D_{\rm sw}(R;\lB) \sim d_{\rm sw}^{\rm HR}(\KB^{\rm HR}_{\rm sw}(R;\lB); \lB)
&\sim& \min_{\RB \in \R^n: \sum R_j \leq R}
\sum_{j=1}^n \frac{1}{12} 2^{-2R_j + 2h(X_j \mid X^{j-1}) +2\E{\log \lambda_j(X_j)}} \E{\left(\frac{\gamma_j(X_j)}{\lambda_j(X_j)}\right)^2} 
\label{eq:SWDistortionRateContinuous}
\eeqa

\noindent The proof is completed by a straightforward application of Lemma \ref{lem:bit-alloc} to 
optimize the rate allocation in each of these expressions.

\section{Proof of Theorem~\ref{thm:EquivalenceFree}}
\label{app:equivalencefree}
The theorem asserts that when the function is equivalence-free,
$w_j$ failing to be one-to-one on the support of $X_j$ creates
a component of the distortion that cannot be eliminated by
quantizing more finely.
The proof here lower-bounds the distortion
by focusing on the contribution from just the $j$th variable.
The bound is especially crude because it is based on observing
$\{X_i\}_{i \neq j}$ and $w_j(X_j)$ without quantization and
it uses only the contribution from $X_j \in S \cup t(S)$.

We wish to first bound the functional distortion in terms of a contribution
from the $j$th variable:
\beqa
  d_g(K;\wB) & \geqlabel{a} & \E{ \var(g(X_1^n) \mid \Yhat_1^n ) }
    \nonumber \\
    & \geqlabel{b} & \E{ \var(g(X_1^n) \mid \Yhat_j,\, \{X_i\}_{i\neq j} ) }
    \nonumber \\
    & \geqlabel{c} & \E{ \var(g(X_1^n) \mid w_j(X_j),\, \{X_i\}_{i\neq j} ) }
    \nonumber \\
    & \eqlabel{d}  & \E{ \var(g(X_1^n) \mid w_j(X_j),\, \{X_i\}_{i\neq j} )
                         \mid A } \P{A} \nonumber \\
    &       & {}   + \E{ \var(g(X_1^n) \mid w_j(X_j),\, \{X_i\}_{i\neq j} )
                         \mid A^c } \P{A^c} \nonumber \\
    & \geqlabel{e}  & \E{ \var(g(X_1^n) \mid w_j(X_j),\, \{X_i\}_{i\neq j} )
                         \mid A } \P{A} \nonumber \\
    & =             & \E{ \var(g(X_1^n) \mid X_j \in w_j^{-1}(X_j),\, \{X_i\}_{i\neq j} )
                         \mid A } \P{A} \nonumber \\
    & \eqlabel{f} 	 &  \int_{x\in S \cup t(S)} \E{\var(g(X_1^n) \mid X_j \in w_j^{-1}(x),\, \{X_i\}_{i\neq j} )} dx
                      \mbox{,}
  \label{eq:equiv-free-proof-1}
\eeqa
where $A$ is the event $X_j \in S \cup t(S)$.
Step (a) will hold with equality when the optimal estimate
(the conditional expectation of $g(X_1^n)$ given the quantized values)
is used;
(b) holds because, for each $i \neq j$, $\Yhat_i$ is a function of $X_i$;
(c) holds because $\Yhat_j$ is a function of $w_j(X_j)$;
(d) is an application of the law of total expectation;
(e) holds because the discarded term is nonnegative; and
(f) converts the expectation over $A$ into integral form.
It remains to use the hypotheses of the theorem to bound
the conditional variance in the final expression.

Since the function is equivalence free, for every set $B \subset [0,1]$ of cardinality
greater than one,
$$
  \E{ \var\left( g(X_1^n) \mid X_j \in B ,\, \{X_i\}_{i \neq j} \right)}
     > 0 \mbox{.}
$$
Since $w_j(s) = w_j(t(s))$ for any $s\in S$, the set $w_j^{-1}(x_j)$ is of
cardinality greater than one for any $x_j$ in $S\cup t(S)$.  Therefore for
any $x\in S\cup t(S)$,
$$
	\E{ \var\left( g(X_1^n) \mid X_j \in w_j^{-1}(x),\, \{X_i\}_{i \neq j} \right)}
     > 0 \mbox{,}
$$
and \eqref{eq:equiv-free-proof-1} is therefore greater than zero and
independent of rate.

\section{Proof of Theorem~\ref{thm:variableratedifferent}}
\label{app:dontCareVar}
It is already shown in Theorem~\ref{thm:fixedratesame} that 
the distortion-resolution expression \eqref{eq:dontCareDist} holds
when a codeword is allocated to each of the don't-care intervals.
After an appropriate rate analysis, we will optimize the point densities
outside of the don't-care intervals.

The key technical problem is that the rate analysis
\eqref{eq:1drate} does not hold when there are intervals where
$f_X$ is positive but $\lambda$ is not.
This is easily remedied by only applying \eqref{eq:1drate} conditioned
on $A_j$:
\beq
  \lim_{K_j \rightarrow \infty} \left[ H(\Xhat_j \mid A_j) -  \log(K_j-M_j) \right]
  = h(X_j \mid A_j)
            + \E{\log\lambda_j(X_j) \mid A_j} \mbox{.}
 \label{eq:dontcareRate}
\eeq
Note that this approximation can be shown to be asymptotically valid
in the same manner as in Lemmas \ref{lem:ResolutionRate} and \ref{lem:ResRateError}.
Now conditioned on $A_j$,
the dependence of distortion and rate on $\lambda_j$ is precisely in
the standard form of Section~\ref{sec:Multi}.
Thus, following Theorem~\ref{thm:multi-summary},
the optimal point density outside of $Z_j$ is given by
\eqref{eq:multi-var-lambda}.

Since the previous results now give the distortion in terms of the
conditional entropies $H(\Xhat_j \mid A_j)$, what remains is to
relate these to the rates:
\beqan
R_j & = & H(\Xhat_j) \\
    & \eqlabel{a} & H(\Xhat_j, I_j) \\
    & = & H(I_j) + H(\Xhat_j \mid I_j) \\
    & \eqlabel{b} & H(I_j) + \P{A_j} H(\Xhat_j \mid A_j)\mbox{,}
\eeqan
where (a) uses that $I_j$ is a deterministic function of $\Xhat_j$;
and (b) uses that specifying any $I_j \neq 0$ determines $\Xhat_j$ uniquely.
In anticipation of evaluating \eqref{eq:dontCareDist}, we define
the high-rate resolution-rate function as before:
\beqan
  \log(\Khr_j(R_j;\lambda_j)-M_j)
    & \sim & (\P{A_j})^{-1}\left(R_j - H(I_j)\right) \\
    & & {} - h(X_j \mid A_j) - \E{\log(\lambda_j(X_j) \mid A_j}\mbox{.}
\eeqan
Asymptotic accuracy of this approximation follows from \eqref{eq:dontcareRate}.
As before, one may insert this into the high-resolution distortion-resolution
expression \eqref{eq:dontCareDist} and bound the effect of the approximation as a multiplying factor
that goes to one.
Now evaluating \eqref{eq:dontCareDist} with optimal point densities
\eqref{eq:multi-var-lambda} gives \eqref{eq:dontCareDistVar}.

\section{A Quasi-Triangle Inequality}
\label{app:triangle}
\begin{lemma}
The $\Lonethird$ ``norm'' is a quasinorm with constant 4.
Equivalently, letting $x$ and $y$ be functions $\R \rightarrow \R^+$
with finite $\Lonethird$ quasinorms,	
\[ 
\|x + y\|_{1/3} \leq 4\left( \|x\|_{1/3} + \|y\|_{1/3} \right) \mbox{.}
\]
\end{lemma}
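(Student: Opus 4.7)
The plan is to separate the claim into two elementary pieces: a subadditivity statement at the level of $(1/3)$-th powers, and a cubic inequality for nonnegative reals.

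First, I would exploit the fact that the map $t \mapsto t^{1/3}$ is concave and subadditive on $[0,\infty)$, so that pointwise $(x(t)+y(t))^{1/3} \leq x(t)^{1/3} + y(t)^{1/3}$. Integrating against $dt$ yields
\[
\int (x+y)^{1/3}\,dt \;\leq\; \int x^{1/3}\,dt + \int y^{1/3}\,dt,
\]
which, in the notation $\|f\|_{1/3} = \bigl(\int f^{1/3}\bigr)^3$, reads
\[
\|x+y\|_{1/3}^{\,1/3} \;\leq\; \|x\|_{1/3}^{\,1/3} + \|y\|_{1/3}^{\,1/3}.
\]
Cubing both sides (which is monotone on the nonnegative axis) gives $\|x+y\|_{1/3} \leq \bigl(\|x\|_{1/3}^{1/3} + \|y\|_{1/3}^{1/3}\bigr)^3$.

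Second, I would reduce the remaining task to proving the scalar inequality $(a+b)^3 \leq 4(a^3+b^3)$ for $a,b\geq 0$, with $a := \|x\|_{1/3}^{1/3}$ and $b := \|y\|_{1/3}^{1/3}$. This is completely elementary: expanding,
\[
4(a^3+b^3) - (a+b)^3 \;=\; 3\bigl(a^3 - a^2 b - a b^2 + b^3\bigr) \;=\; 3(a+b)(a-b)^2 \;\geq\; 0.
\]
Applying this with the above values of $a$ and $b$ then yields the desired $\|x+y\|_{1/3} \leq 4(\|x\|_{1/3} + \|y\|_{1/3})$.

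There is no real obstacle here; the only thing worth double-checking is that both steps remain valid when one of the quasinorms is infinite (in which case the inequality is vacuous) and that the use of the pointwise bound $(u+v)^{1/3} \leq u^{1/3}+v^{1/3}$ does not require any integrability beyond what is already assumed, so integrating the inequality term-by-term is justified for nonnegative measurable $x,y$. The constant $4$ is not claimed to be sharp; sharpness would require the equality case $(a-b)^2=0$ together with saturation of the subadditivity of $t^{1/3}$, which would constrain $x$ and $y$ to be proportional with disjoint support, incompatible conditions in general.
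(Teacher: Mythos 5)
Your proof is correct and uses exactly the same two ingredients as the paper's: the pointwise subadditivity $(u+v)^{1/3}\leq u^{1/3}+v^{1/3}$ coming from concavity of the cube root, and the scalar inequality $4(a^3+b^3)\geq(a+b)^3$ applied with $a=\int x^{1/3}$ and $b=\int y^{1/3}$, verified by the same factorization $3(a+b)(a-b)^2\geq 0$. The only difference is the order in which the two steps are chained, which is immaterial.
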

\begin{IEEEproof}
First, we prove the relation $4(a^3+b^3) \geq (a+b)^3$
for positive real numbers $a$ and $b$:
\beqan
\lefteqn{4(a^3 + b^3) - (a+b)^3} \\
 & = & 4a^3 + 4b^3 - a^3 - b^3 - 3a^2b - 3ab^2 \\
 & = & 3(a+b)(a-b)^2
 \ \geq \ 0 \mbox{.}
\eeqan
Now by this relation,
with $a = \int x(t)^{1/3} \, dt$
and $b = \int y(t)^{1/3} \, dt$:
\begin{eqnarray*}
\| x \|_{1/3} + \| y \|_{1/3} & = & 
\left(\int x(t)^{1/3} \, dt\right)^3 + \left(\int y(t)^{1/3} \, dt\right)^3 \\
  & \geq & \frac{1}{4}\left( \int \left(x(t)^{1/3} + y(t)^{1/3}\right) \, dt \right)^3 \\
  & \geq & \frac{1}{4}\left( \int \left((x(t)+y(t))^{1/3} \right) \, dt \right)^3 \\
  & = & \frac{1}{4}\|x+y\|_{1/3},
\end{eqnarray*}
where the second inequality uses, pointwise over $t$,
the concavity of the cube-root function on $[0,\infty)$.
\end{IEEEproof}

\section*{Acknowledgments}
The authors would like to thank John Sun for his insights and suggestions about
both the problem and the manuscript.  They would also like to thank both the reviewers
and the Associate Editor, Erik Ordentlich,
for helping to significantly improve the clarity and rigor of the results presented here.
\bibliographystyle{IEEEtran} 

\begin{IEEEbiography}{Vinith Misra}
received the S.B. and M.Eng. degrees in electrical engineering from the Massachusetts Institute of Technology
in 2008, where his thesis was awarded the David Adler Memorial M.Eng. thesis prize.  He is 
currently pursuing a Ph.D. in Stanford University's department of electrical engineering.

He is a Stanford Graduate Fellow and a recipient of the National Defense Science and Engineering
Graduate Fellowship.  His research interests include information theory, signal processing, and
mixed-signal circuit design, with applications to both communications and medical devices.
\end{IEEEbiography}

\begin{biography}{Vivek K Goyal} (S'92-M'98-SM'03)
received the B.S. degree in mathematics and the B.S.E. degree in
electrical engineering from the University of Iowa, Iowa City,
where he received the John Briggs Memorial Award for the top
undergraduate across all colleges.  He received the M.S. and
Ph.D. degrees in electrical engineering from the University of
California, Berkeley, where he received the Eliahu Jury Award for
outstanding achievement in systems, communications, control, or
signal processing.

He was a Member of Technical Staff in the Mathematics of Communications
Research Department of Bell Laboratories, Lucent Technologies,
1998--2001; and a Senior Research Engineer for Digital Fountain, Inc.,
2001--2003\@.  He is currently Esther and Harold E. Edgerton Associate
Professor of electrical engineering at the Massachusetts Institute of
Technology.  His research interests include source coding theory,
sampling, quantization, and computational imaging.

Dr.\ Goyal is a member of Phi Beta Kappa, Tau Beta Pi, Sigma Xi,
Eta Kappa Nu and SIAM\@.  He was awarded the 2002 IEEE Signal Processing
Society Magazine Award and an NSF CAREER Award.
He served on the IEEE Signal Processing SocietyÕs Image and Multiple
Dimensional Signal Processing Technical Committee.
He is a Technical Program Committee Co-chair of IEEE ICIP 2016 and
a permanent Conference Co-chair of the SPIE Wavelets conference series.
\end{biography}

\begin{IEEEbiography}{Lav R. Varshney}
received the B. S. degree with honors in electrical and computer
engineering (magna cum laude) from Cornell University, Ithaca, New York in 2004.
He received the S. M., E. E., and Ph. D. degrees in electrical engineering and
computer science from the Massachusetts Institute of Technology (MIT),
Cambridge in 2006, 2008, and 2010, respectively.

He is a research staff member at the IBM Thomas J. Watson Research Center,
Hawthorne, NY. He was a National Science Foundation graduate research fellow
and held various research and teaching positions at MIT.  His research
interests include information theory, coding, and neuroscience.

Dr. Varshney is a member of Tau Beta Pi, Eta Kappa Nu, and Sigma Xi. He received
the E. A. Guillemin Thesis Award for Outstanding Electrical Engineering S.M.
Thesis, the Capocelli Prize at the 2006 Data Compression Conference, the Best
Student Paper Award at the 2003 IEEE Radar Conference, and was a winner of the
IEEE 2004 Student History Paper Contest.
\end{IEEEbiography}
\end{document}